\newtheorem{theorem}{Theorem}[section]
\newtheorem{lemma}[theorem]{Lemma}
\newtheorem{corollary}[theorem]{Corollary}
\theoremstyle{definition}
\newtheorem{example}[theorem]{Example}
\theoremstyle{remark}
\newtheorem{remark}[theorem]{Remark}
\numberwithin{equation}{section}
\def\vec#1{\mathchoice{\mbox{\boldmath$\displaystyle#1$}}
{\mbox{\boldmath$\textstyle#1$}}
{\mbox{\boldmath$\scriptstyle#1$}}
{\mbox{\boldmath$\scriptscriptstyle#1$}}}
\newcommand{\norm}[1]{\left\|#1\right\|}
\newcommand{\Z}{\mathbb{Z}}
\newcommand{\R}{\mathbb{R}}
\newcommand{\id}{\mathrm{Id}}
\DeclareMathOperator{\res}{Res}
\DeclareMathOperator{\sres}{Sres}
\DeclareMathOperator{\syl}{Syl}
\DeclareMathOperator{\bez}{Bez}
\DeclareMathOperator{\vol}{vol}
\DeclareMathOperator{\adj}{adj} 
\DeclareMathOperator{\diag}{diag}
\DeclareMathOperator{\lc}{lc} 
\def\revdots{\mathinner{\mkern1mu\raise1pt\vbox{\kern7pt\hbox{.}}\mkern2mu\raise4pt\hbox{.}\mkern2mu \raise7pt\hbox{.}\mkern1mu}}
\begin{document}
%
%\title[short text for running head]{full title}
\title[Montgomery's method of polynomial selection for NFS]{Montgomery's method of polynomial selection for the number field sieve}
%
%    author one information
% \author[short version for running head]{name for top of paper}
\author{Nicholas Coxon}
\address{INRIA / CNRS / Université de Lorrain\'{e}, Campus Scientifique, BP 239, 54506 Vand\oe uvre-l\`{e}s-Nancy Cedex, France}
\email{nicholas.coxon@inria.fr}
%\thanks{}
%
%  \subjclass is required.
\subjclass[2010]{Primary 11Y05, 11Y16}
\date{\today}
\keywords{Integer factorisation, number field sieve, polynomial selection, Montgomery's method}
%
% Abstract
\begin{abstract} The number field sieve is the most efficient known algorithm for factoring large integers that are free of small prime factors. For the polynomial selection stage of the algorithm, Montgomery proposed a method of generating polynomials which relies on the construction of small modular geometric progressions. Montgomery's method is analysed in this paper and the existence of suitable geometric progressions is considered. %This paper analyses Montgomery's method of generating polynomials for the polynomial selection stage of the algorithm.
\end{abstract}
\maketitle
%
% Body
%---------------------------------------------------------------------------
	 		 
\section{Introduction}

In this paper, $N$ denotes a positive integer that is destined to be factored. When $N$ is large and free of small factors, the most efficient publicly known algorithm for determining its factors is the number field sieve~\cite{lenstra1993}. Such $N$ include RSA~\cite{rivest1978} moduli, for which numerous record factorisations have been achieved with the number field sieve, including the current 768-bit record~\cite{kleinjung2010}.

The number field sieve is comprised of several stages, commonly referred to as polynomial selection, sieving, filtering, linear algebra and square root computation. The polynomial selection stage requires the selection of coprime irreducible polynomials $f_{1},f_{2}\in\Z[x]$ that have a common root modulo $N$. After polynomial selection, sieving is used to identify coprime integer pairs $(a,b)$ such that the prime factors of $f_{i}(a/b)b^{\deg f_{i}}$ are below some bound $y_{i}$ for $i=1,2$. Obtaining sufficiently many pairs with this property, called \emph{relations}, is the most time consuming stage of the number field sieve, with the time taken greatly influenced by the choice of polynomials~\cite{murphy1998,murphy1999}.

Let $\Psi(x,y)$ denote the number of positive integers less than $x$ that are free of prime factors greater than $y$. Canfield, Erd\H{o}s and Pomerance~\cite{canfield1983} showed that for any $\varepsilon>0$, $\Psi(x,x^{1/u})=xu^{-u(1+o(1))}$ for $u\rightarrow\infty$, uniformly in the region $x\geq u^{u(1+\varepsilon)}$. It follows, heuristically, that in the polynomial selection stage of the number field sieve, the polynomials $f_{1}$ and $f_{2}$ should be chosen to minimise the size of the values $f_{1}(a/b)b^{\deg f_{1}}$ and $f_{2}(a/b)b^{\deg f_{2}}$ over the pairs $(a,b)$ considered in the sieve stage. Thus, it is necessary for the polynomials to have small coefficients. As a result, the degrees of $f_{1}$ and $f_{2}$ should not be too small. However, the degrees should not be too large either, since $f_{i}(a/b)b^{\deg f_{i}}$ is a homogeneous polynomial of degree $\deg f_{i}$ in $a$ and $b$. In practice, low-degree polynomials are used. For example, the two largest factorisations of RSA moduli~\cite{kleinjung2010,bai2012a} both used a sextic polynomial together with a linear polynomial. To quantify the coefficient size of a polynomial, the skewed $2$-norm $\norm{.}_{2,s}$ is used. The norm is defined as follows: if $f=\sum^{d}_{i=0}a_{i}x^{i}$ is a degree~$d$ polynomial with real coefficients, then 
\begin{equation*}
	\norm{f}_{2,s}%
	=\sqrt{\sum^{d}_{i=0}\left(a_{i}s^{i-\frac{d}{2}}\right)^{2}}%
	\quad\text{for all $s>0$}.
\end{equation*}
The parameter $s$ captures the shape of the sieve region, which is modelled by a rectangular region $[-A,A]\times(0,B]$ or an elliptic region
\begin{equation*}
	\left\{(x,y)\in\R^{2}\mid 0<y\leq B\sqrt{1-(x/A)^{2}}\right\}
\end{equation*}
such that $A/B=s$. In practice, the polynomial selection stage proceeds by first generating many ``raw'' polynomial  pairs with small coefficients. Then various methods of optimisation~\cite{murphy1999,bai2012,bai2014} are used to improve the quality of the raw pairs by taking into account additional factors that influence a pair's yield of relations, such as the presence of real roots and roots modulo small primes~\cite{murphy1998,murphy1999}.

The methods of polynomial selection used in all recent record factorisations~\cite{murphy1998,murphy1999,kleinjung2006,kleinjung2008} produce polynomials $f_{1}$ and $f_{2}$ such that one polynomial is linear. However, it is expected that a significant advantage is gained by using two nonlinear polynomials~\cite[Section~6.2.7]{crandall2005} (see also \cite[Section~4]{prest2012} for practical considerations relating to sieving). Montgomery~\cite{montgomery1993,montgomery2006} provided a method for generating two nonlinear polynomials with small coefficients. This paper extends and sharpens Montgomery's original analysis of the method.

\section{Montgomery's method}\label{sec:montgomerys-method}

A \emph{geometric progression} of \emph{length} $\ell$ and \emph{ratio} $r$ modulo $N$ is an integer vector $[c_{\ell-1},\ldots,c_{0}]$ such that $c_{i}\equiv c_{0}r^{i}\pmod{N}$ for $i=0,\ldots,\ell-1$. Square brackets are used to distinguish geometric progressions from regular vectors, which are denoted with round brackets. Montgomery~\cite{montgomery1993,montgomery2006} showed that a length $2d-1$ geometric progression $\vec{c}=[c_{2d-2},\ldots,c_{0}]$ modulo $N$ such that $\gcd(c_{0},c_{1},\ldots,c_{d-2},N)=1$ and
\begin{equation*}
	C=C(c_{2d-2},\ldots,c_{0})=\begin{pmatrix}
		c_{2d-2} & c_{2d-3} & \ldots & c_{d-1}\\
		c_{2d-3} & c_{2d-4} & \ldots & c_{d-2}\\
		\vdots & \vdots & & \vdots\\
		c_{d-1} & c_{d-2} & \ldots & c_{0}\\
	\end{pmatrix}
\end{equation*}
has full rank can be used to construct polynomials $f_{1},f_{2}\in\Z[x]$ of maximum degree $d$ that have a common root modulo $N$. Once a suitable geometric progression $\vec{c}$ has been found, Montgomery's method proceeds by computing a basis $\{(a_{1,d},\ldots,a_{1,0})^{T},(a_{2,d},\ldots,a_{2,0})^{T}\}$, where $A^{T}$ denotes the transpose of a matrix $A$, for the free $\Z$-module that is the set of integer vectors in the kernel of the matrix
\begin{equation}\label{eqn:partialC}
	\partial C=\begin{pmatrix}
		c_{2d-2} & c_{2d-3} & \ldots & c_{d-2}\\
		c_{2d-3} & c_{2d-4} & \ldots & c_{d-3}\\
		\vdots & \vdots & & \vdots\\
		c_{d} & c_{d-1} & \ldots & c_{0}\\
	\end{pmatrix}.
\end{equation}
The basis yields polynomials $f_{1}=\sum^{d}_{i=0}a_{1,i}x^{i}$ and $f_{2}=\sum^{d}_{i=0}a_{2,i}x^{i}$. If $r$ is the ratio of the geometric progression modulo $N$, then $(r^{d},r^{d-1},\ldots,1)$ is a linear combination of the row vectors of $\partial C$ modulo $N$ since $\gcd(c_{0},c_{1},\ldots,c_{d-2},N)=1$. Thus, $r$ is a root of $f_{1}$ and $f_{2}$ modulo $N$. Denote by $\widehat{\partial C}$ the submatrix of $\partial C$ obtained by deleting its first column. Then $\widehat{\partial C}$ has full rank since it is equal to the submatrix of $C$ obtained by deleting its first row. Consequently, the kernel of $\partial C$ is $2$-dimensional. Moreover, and at least one of $f_{1}$ and $f_{2}$ has degree equal to $d$, otherwise $(a_{1,d-1},\ldots,a_{1,0})^{T}$ and $(a_{2,d-1},\ldots,a_{2,0})^{T}$ are linearly independent and in the $1$-dimensional kernel of $\widehat{\partial C}$, which is absurd. Finally, as was observed by Montgomery~\cite{montgomery1993} and which is shown to hold in this paper, the polynomials $f_{1}$ and $f_{2}$ are coprime since $C$ is nonsingular.

To ensure that the norms $\norm{f_{1}}_{2,s}$ and $\norm{f_{2}}_{2,s}$ are small for some $s>0$, the basis is chosen such that
\begin{equation}\label{eqn:skewed-kernel-basis}
	\left\{\left(a_{1,d}s^{d},a_{1,d-1}s^{d-1},\ldots,a_{1,0}\right)^{T},\left(a_{2,d}s^{d},a_{2,d-1}s^{d-1},\ldots,a_{2,0}\right)^{T}\right\}
\end{equation}
is Lagrange-reduced (see \cite[p.~41]{nguyen2010}). As a result,
\begin{equation}\label{eqn:poly-gp-bnd}
	s^{\frac{\deg f_{1}+\deg f_{2}}{2}-d}\norm{f_{1}}_{2,s}\norm{f_{2}}_{2,s}\leq\frac{\gamma_{2}}{N^{d-2}}\norm{\vec{c}}^{d-1}_{2,s^{-1}}
\end{equation}
(see \cite[Section~3.2]{coxon2011}), where $\gamma_{2}=2/\sqrt{3}$ is Hermite's constant for dimension two, and the vector norm $\norm{.}_{2,s}$ is defined as follows: if $\vec{v}=(v_{n},v_{n-1},\ldots,v_{0})$ is a real $(n+1)$-dimensional vector, then
\begin{equation*}
	\norm{\vec{v}}_{2,s}%
	=\sqrt{\sum^{n}_{i=0}\left(v_{i}s^{i-\frac{n}{2}}\right)^{2}}
	\quad\text{for all $s>0$}.
\end{equation*}
Consequently, it is a requirement of Montgomery's method that $\norm{\vec{c}}_{2,s^{-1}}$ is small. Therein lies the difficulty of the method, as the basis \eqref{eqn:skewed-kernel-basis} is readily computed in polynomial time (see \cite[Section~3.1.2]{coxon2011}). The problem of constructing small geometric progressions has been addressed by several authors~\cite{montgomery1993,williams2010,prest2012,koo2011,coxon2011}.

It is natural to consider the existence of small geometric progressions. Montgomery~\cite{montgomery1993} showed that if there exist two degree $d$ polynomials that have small coefficients and a common root $r$ modulo $N$, then there exists a small length $2d-1$ geometric progression with ratio $r$ modulo $N$. Montgomery's proof is constructive and is generalised to two polynomials $f_{1}$ and $f_{2}$ of maximum degree $d$ in this paper. Furthermore, it is shown that if the polynomials are coprime, then the geometric progression given by the construction is the unique vector $(c_{2d-1},\ldots,c_{0})$, up to scalar multiple, such that the coefficient vectors of $f_{1}$ and $f_{2}$ are in the kernel of the matrix $\partial C$ defined in \eqref{eqn:partialC}. As a result, the analysis of the construction contributes to the analysis of Montgomery's method.

This paper is organised as follows: the definitions and some properties of the Sylvester matrix, the Bezout matrix and the resultant are reviewed in the next section; the generalisation of Montgomery's geometric progression construction is presented and analysed in Section~\ref{sec:existence}; and a full analysis of Montgomery's method is provided in Section~\ref{sec:analysis}.

\section{The Sylvester matrix, the Bezout matrix and the resultant}

Matrices with the property that each of their rows contain the coefficients of some polynomial are frequently encountered in this paper. The Sylvester and Bezout matrices are constructed in this manner. Consequently, compact notation for such matrices is defined before introducing the protagonists of this section.

For $m\geq 1$ polynomials $f_{1},\ldots,f_{m}$ and any integer $n\geq\max_{1\leq i\leq m}\deg f_{i}$, denote by $(f_{1},\ldots,f_{m})_{n}$ the $m\times(n+1)$ matrix $(a_{i,j})_{i=1,\ldots,m; j=1,\ldots,n+1}$ where $a_{i,j}$ is the coefficient of $x^{n+1-j}$ in $f_{i}$. When $n=\max_{1\leq i\leq m}\deg f_{i}$, the subscript $n$ is drop, giving the notation $(f_{1},\ldots,f_{m})$. The parameter $n$ is viewed as the formal degree of the polynomials $f_{1},\ldots,f_{m}$. For example, $(f)$ is the vector of coefficients of $f$, while $(f)_{n}$ for some $n\geq\deg f$ is the vector of coefficients of $f$ when view as a polynomial of formal degree $n$. Define $\left(\ \right)_{n}$ to be the $0\times n$ empty matrix.

\subsection{The Sylvester matrix and the resultant}

Let $\mathbb{A}$ be an integral domain. Then the \emph{Sylvester matrix} of non-constant polynomials $f_{1},f_{2}\in\mathbb{A}[x]$ is the matrix
\begin{equation*}
  \syl(f_{1},f_{2})=\left(x^{\deg f_{2}-1}f_{1},\ldots,f_{1},x^{\deg f_{1}-1}f_{2},\ldots,f_{2}\right).
\end{equation*}
The determinant of $\syl(f_{1},f_{2})$ is called the \emph{resultant} of $f_{1}$ and $f_{2}$, and is denoted $\res(f_{1},f_{2})$. The resultant of $f_{1}$ and $f_{2}$ is zero if and only if the polynomials have a nontrivial gcd over the field of fractions of $\mathbb{A}$.

For non-constant polynomials $f_{1},f_{2}\in\Z[x]$ and all real numbers $s>0$, define $\theta_{s}(f_{1},f_{2})$ to be the angle (in $[0,\pi]$) between the row vectors of $(f_{1}(sx),f_{2}(sx))$. The following lemma provides upper and lower bounds on the resultant of a pair of number field sieve polynomials (see \cite[Section~2.1.2]{coxon2011} for a proof):

\begin{lemma}\label{lem:resultantBound} Suppose that $f_{1},f_{2}\in\Z[x]$ are non-constant, coprime and have a common root modulo $N$. Then
\begin{equation*}
  N\leq\left|\res(f_{1},f_{2})\right|\leq\left|\sin\theta_{s}(f_{1},f_{2})\right|^{\min\{\deg f_{1},\deg f_{2}\}}\norm{f_{1}}^{\deg f_{2}}_{2,s}\norm{f_{2}}^{\deg f_{1}}_{2,s}
\end{equation*}
for all $s>0$.
\end{lemma}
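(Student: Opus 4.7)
The plan is to establish the two inequalities separately. For the lower bound $N\leq|\res(f_1,f_2)|$, I would invoke the classical identity obtained from the adjugate of $\syl(f_1,f_2)$: there exist $U,V\in\Z[x]$ with $\deg U<\deg f_2$ and $\deg V<\deg f_1$ satisfying $Uf_1+Vf_2=\res(f_1,f_2)$. Evaluating this identity at a common root of $f_1,f_2$ modulo $N$ shows that $N$ divides $\res(f_1,f_2)$, and coprimality guarantees the resultant is nonzero, so $|\res(f_1,f_2)|\geq N$.

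For the upper bound I would first absorb the parameter $s$. Setting $g_i(x)=f_i(sx)$ and $d_i=\deg f_i$, direct substitutions give $\res(g_1,g_2)=s^{d_1d_2}\res(f_1,f_2)$, $\norm{g_i}_{2}=s^{d_i/2}\norm{f_i}_{2,s}$, and $\theta_s(f_1,f_2)=\theta_1(g_1,g_2)$, so it suffices to prove the $s=1$ case. Since both sides are symmetric in the pair $(f_1,f_2)$, I may also assume $d_1\leq d_2$, so that $\min\{d_1,d_2\}=d_1$. Let $A$ be the top $d_2\times(d_1+d_2)$ block of $\syl(g_1,g_2)$, consisting of the shifts $x^{d_2-1}g_1,\ldots,g_1$, and let $B$ be the bottom $d_1\times(d_1+d_2)$ block $x^{d_1-1}g_2,\ldots,g_2$. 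A Schur complement computation applied to $\syl\cdot\syl^T$ yields
\begin{equation*}
\res(g_1,g_2)^2=\det(AA^T)\det(BPB^T),
\end{equation*}
where $P=I-A^T(AA^T)^{-1}A$ is orthogonal projection onto the complement of the row space $V$ of $A$. Hadamard's inequality on the rows of $A$ gives $\det(AA^T)\leq\norm{g_1}_{2}^{2d_2}$.

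The main point is the projection estimate for each row of $B$. For $0\leq k\leq d_1-1$ one has $k<d_2$, so $cx^kg_1\in V$ for every $c\in\R$; consequently
\begin{equation*}
\norm{P(x^kg_2)}_{2}\leq\norm{x^kg_2-cx^kg_1}_{2}=\norm{g_2-cg_1}_{2},
\end{equation*}
and minimising over $c$ produces the one-dimensional optimum $\norm{g_2}_{2}\lvert\sin\theta_1(g_1,g_2)\rvert$. A second application of Hadamard, this time to the Gram matrix $BPB^T$, then gives $\det(BPB^T)\leq\lvert\sin\theta_1(g_1,g_2)\rvert^{2d_1}\norm{g_2}_{2}^{2d_1}$; combining the two determinant bounds, taking square roots, and undoing the substitution produces the claimed inequality. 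The main obstacle is spotting the correct test function $u=cx^k$ in the least-squares problem that defines the projection, so that the shift cancels and one recovers the factor $\lvert\sin\theta\rvert$ per row of $B$; everything else is book-keeping with Hadamard and the Schur complement.
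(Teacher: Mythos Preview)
Your argument is correct. The lower bound is the standard one, and for the upper bound your reduction to $s=1$, the Schur complement identity $\res^2=\det(AA^T)\det(BPB^T)$, and the projection estimate $\norm{P(x^kg_2)}\le\min_c\norm{g_2-cg_1}=\norm{g_2}\,|\sin\theta_1|$ (valid because $k\le d_1-1\le d_2-1$, so $x^kg_1$ lies in the row space of $A$) all go through as stated.

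The paper does not prove this lemma in the text; it cites \cite[Section~2.1.2]{coxon2011}. Judging from the proof of Lemma~\ref{lem:ct-upper-bnd}, which is explicitly modelled on that reference, the intended argument is organised differently: after the diagonal rescaling, one \emph{pairs} the rows $x^kf_1$ and $x^kf_2$ with the same shift $k$ into $2\times(d_1+d_2)$ blocks $A_k$, leaves the remaining $|d_1-d_2|$ shifts of the higher-degree polynomial as singletons, and applies Fischer's inequality iteratively to obtain $\vol^2(\syl)\le\prod_k\vol^2(A_k)$. Each pair contributes the exact parallelogram area $\norm{f_1}\norm{f_2}\,|\sin\theta_s|$ and each singleton contributes $\norm{f_i}$, which assembles into the stated bound. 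Your decomposition is by \emph{polynomial} rather than by shift: one Schur complement instead of an iterated Fischer bound, with the $|\sin\theta|$ factors arising uniformly from a single least-squares observation rather than from $\min\{d_1,d_2\}$ separate $2\times 2$ Gram determinants. The two routes are close relatives (Fischer is itself a consequence of Schur-complement positivity), but the blocking is genuinely different, and your version isolates the geometric content a bit more cleanly.
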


As a consequence of Lemma~\ref{lem:resultantBound}, a pair of number field sieve polynomials in considered to have optimal resultant if it is equal to $N$, and optimal coefficient size if $\norm{f_{1}}^{\deg f_{2}}_{2,s}\norm{f_{2}}^{\deg f_{1}}_{2,s}$ is $O(N)$. Thus, inequality \eqref{eqn:poly-gp-bnd} implies that a pair of degree $d$ polynomials generated by Montgomery's method has optimal coefficient size if $\norm{\vec{c}}_{2,s^{-1}}=O(N^{1-1/d})$.

\subsection{The Bezout matrix}

Let $\mathbb{A}$ be an integral domain and $f_{1},f_{2}\in\mathbb{A}[x]$ be non-constant. Write $f_{i}=\sum^{d}_{j=0}a_{i,j}x^{j}$ for $i=1,2$ such that $d=\max\{\deg f_{1},\deg f_{2}\}$ and the coefficients $a_{i,j}$ are elements of $\mathbb{A}$. Define polynomials
\begin{equation}\label{eqn:bezout-pols}
	p_{i+1}=\left(\sum^{i}_{j=0}a_{2,d-i+j}x^{j}\right)f_{1}-\left(\sum^{i}_{j=0}a_{1,d-i+j}x^{j}\right)f_{2}\quad\text{for $i=0,\ldots,d-1$}.
\end{equation}
Then the \emph{Bezout matrix}, or \emph{Bezoutian}, of $f_{1}$ and $f_{2}$ is the matrix $\mathrm{Bez}(f_{1},f_{2})=(p_{1},\ldots,p_{d})_{d-1}$. Denote by $\lc(f)$ the leading coefficient of a polynomial $f$. Similar to the Sylvester matrix, the determinant of the Bezout matrix of $f_{1}$ and $f_{2}$ is related to their resultant (see \cite[Section~2]{sederberg97}):
\begin{equation}\label{eqn:det-bezout}
  \det\mathrm{Bez}(f_{1},f_{2})=(-1)^{\frac{d(d+1)}{2}}\lc(f_{1})^{d-\deg f_{2}}\left((-1)^{d}\lc(f_{2})\right)^{d-\deg f_{1}}\res(f_{1},f_{2}).
\end{equation}

A \emph{Hankel matrix} over $\mathbb{A}$ is a square matrix $H=(h_{i,j})_{i=1,\ldots,n;j=1,\ldots,n}$ such that $h_{i,j}=h_{i+j-1}$ for some $h_{1},\ldots,h_{2n-1}\in\mathbb{A}$. Lander~\cite{lander1974} showed that the inverse of a Bezout matrix is a Hankel matrix and, conversely, that the inverse of a Hankel matrix is the Bezout matrix of two polynomials. For an $m\times n$ matrix $H=(h_{i,j})_{i=1,\ldots,m;j=1,\ldots,n}$ such that $h_{i,j}=h_{i+j-1}$ for some $h_{1},\ldots,h_{m+n-1}\in\mathbb{A}$, define $\partial^{k}H=(h_{i+j-1})_{i=1,\ldots,m+k;j=1,\ldots,n-k}$ for $k=0,\ldots,m-1$. Let $\partial H$ denote the matrix $\partial^{1}H$. Define $\widehat{\partial H}=(h_{i+j})_{i=1,\ldots,m+1;j=1,\ldots,n-2}$. The following result of Heinig and Rost~\cite[Theorem~4.2]{heinig2010} expresses the inverse of a real Hankel matrix $H$ as the Bezout matrix of two polynomials obtained from the kernel of $\partial H$:
\begin{lemma}\label{lem:HankelInverse} Let $H=(h_{i+j-1})_{i=1,\ldots,d;j=1,\ldots,d}$ be a real nonsingular Hankel matrix. If $f_{1},f_{2}\in\R[x]$ such that $\left\{(f_{1})^{T}_{d},(f_{2})^{T}_{d}\right\}$ is a basis of the kernel of $\partial H$, then
\begin{equation*}
	H^{-1} = -\frac{1}{\det\psi}\bez(f_{1},f_{2})
	\quad\text{where}\quad
	\psi=\begin{pmatrix}
  		h_{d} & \ldots & h_{2d-1} & 0\\
  		0 & \ldots & 0 & 1
	\end{pmatrix}%
	\left(f_{1},f_{2}\right)^{T}_{d}.
\end{equation*}
\end{lemma}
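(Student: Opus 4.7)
The plan is to establish the matrix identity $H \cdot \bez(f_{1}, f_{2}) = -(\det \psi)\,I_{d}$, where $I_{d}$ denotes the $d \times d$ identity; combined with the assumed nonsingularity of $H$, this immediately gives the claimed expression for $H^{-1}$.

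First I would verify some structural facts. The matrix $\partial H$ is $(d-1) \times (d+1)$, and its first $d$ columns coincide with the top $d-1$ rows of $H$ after extension. Since $H$ is nonsingular these rows are linearly independent in $\R^{d}$, so $\partial H$ has rank exactly $d-1$ and its right kernel is two-dimensional, making the basis $\{(f_{1})_{d}^{T},(f_{2})_{d}^{T}\}$ well-posed. Moreover, $f_{1}$ and $f_{2}$ are coprime: a nontrivial common factor $g$ would allow $f_{1}/g$ and $f_{2}/g$, multiplied by successive monomials, to produce more than two linearly independent vectors in the kernel of $\partial H$, a contradiction; and at least one of $f_{1}, f_{2}$ has degree exactly $d$, since otherwise both coefficient vectors lie in the kernel of the nonsingular submatrix of $H$ obtained by dropping the first row. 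Coprimality together with identity \eqref{eqn:det-bezout} then confirms that $\bez(f_{1}, f_{2})$ is nonsingular and $\det \psi \neq 0$.

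The core of the proof is a direct computation of $(H\,\bez(f_{1}, f_{2}))_{k,\ell}$. Expanding $\bez(f_{1}, f_{2})_{m,\ell}$ via \eqref{eqn:bezout-pols} and interchanging summation orders leads one to evaluate, for various $r$, the truncated moment
$$S_{i}(r) = \sum_{\beta=0}^{d} h_{r-\beta}\, a_{i,\beta},$$
with the convention that $h_{s} = 0$ for $s \notin [1, 2d-1]$. The kernel hypothesis $\partial H\,(f_{i})_{d}^{T} = 0$ is precisely the assertion that $S_{i}(r) = 0$ for every $r \in [d+1, 2d-1]$. Collecting terms in the expansion of $(H\,\bez(f_{1},f_{2}))_{k,\ell}$, the bulk of the contributions cancel by these interior relations, and only the ``boundary'' moments — those corresponding to $r$ at the extremes of the available index range — survive. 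A telescoping argument matches the surviving sum to $-\det \psi$ on the diagonal and to $0$ off the diagonal, using that the two nontrivial rows of the $2 \times (d+1)$ matrix defining $\psi$ encode exactly the two missing kernel-type relations that $\partial H$ does not impose.

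The main obstacle is the bookkeeping in this core computation: the Bezoutian definition produces antisymmetric bilinear combinations in the coefficients of $f_{1}$ and $f_{2}$, and aligning these with the Hankel action of $H$ so that the interior kernel relations fire cleanly requires delicate manipulation of double sums together with a careful change of indexing between $\bez(f_{1},f_{2})$ (which is naturally indexed by powers $0, \ldots, d-1$) and $H$ (indexed by the Hankel generators $h_{1}, \ldots, h_{2d-1}$). Once this is pushed through, identifying the residual boundary expression with $-\det \psi$ is a short calculation from the definition of $\psi$ in the statement.
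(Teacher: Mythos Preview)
Your approach is sound but takes a genuinely different route from the paper. The paper does not compute $H\cdot\bez(f_{1},f_{2})$ at all; instead it invokes Theorem~4.2 of Heinig and Rost, which already gives $H^{-1}=-(\det\psi)^{-1}\mathrm{B}(g_{1},g_{2})$ for the \emph{Hankel} Bezoutian $\mathrm{B}$ of the reversed polynomials $g_{i}(x)=\sum_{j}a_{i,d-j}x^{j}$, and then spends the body of the proof showing that $\mathrm{B}(g_{1},g_{2})=\bez(f_{1},f_{2})$ by expanding the generating polynomial $(g_{1}(x)g_{2}(y)-g_{2}(x)g_{1}(y))/(x-y)$ and matching it term-by-term against the row polynomials $p_{i}$ of \eqref{eqn:bezout-pols}. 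So the paper's argument is a citation plus a change-of-convention computation, whereas yours is a self-contained verification of the matrix identity $H\cdot\bez(f_{1},f_{2})=-(\det\psi)I_{d}$. Your route is more elementary in that it avoids the external reference, at the cost of the double-sum bookkeeping you flag; the paper's route is shorter but offloads the substantive content to Heinig--Rost.

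One caveat: your coprimality argument (``$f_{1}/g$ and $f_{2}/g$, multiplied by successive monomials, produce more than two linearly independent vectors in the kernel of $\partial H$'') does not work as written, since the kernel of $\partial H$ is not stable under multiplication by $x$ and there is no direct reason the quotients $f_{i}/g$ land in it. Fortunately you do not need coprimality up front. Once the identity $H\cdot\bez(f_{1},f_{2})=-(\det\psi)I_{d}$ is established, nonsingularity of $H$ forces $\bez(f_{1},f_{2})=0$ whenever $\det\psi=0$; but $\bez(f_{1},f_{2})$ vanishes only when $f_{1}$ and $f_{2}$ are linearly dependent, contradicting the basis hypothesis. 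Hence $\det\psi\neq 0$ falls out at the end, and coprimality (via \eqref{eqn:det-bezout}) is a consequence rather than a prerequisite.
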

The formulation of the Bezout matrix presented in this paper is different to that used by Heinig and Rost. Thus, it is necessary to convert between the two formulations:
\begin{proof}[Proof of Lemma~\ref{lem:HankelInverse}] Let $H=(h_{i+j-1})_{i=1,\ldots,d;j=1,\ldots,d}$ be a real nonsingular Hankel matrix. Suppose that $f_{1},f_{2}\in\R[x]$ such that $\left\{(f_{1})^{T}_{d},(f_{2})^{T}_{d}\right\}$ is a basis of the kernel of $\partial H$. Define $\psi$ as in the statement of the theorem. The matrix $\widehat{\partial H}$ is equal to the submatrix of $H$ obtained by deleting its first row. Thus, $\widehat{\partial H}$ has full rank. It follows that $\max\{\deg f_{1},\deg f_{2}\}=d$, otherwise $(f_{1})^{T}_{d-1}$ and $(f_{2})^{T}_{d-1}$ would be linearly independent and in the kernel of $\widehat{\partial H}$. Interchanging $f_{1}$ and $f_{2}$ changes $\bez(f_{1},f_{2})$ and $\det\psi$ by a factor of $-1$. Therefore, assume without loss of generality that $\deg f_{1}=d$. Write $f_{i}=\sum^{d}_{j=0}a_{i,j}x^{j}$ for $i=1,2$ such that the coefficients $a_{i,j}$ are real numbers. Define $g_{i}=\sum^{d}_{j=0}a_{i,d-j}x^{j}$ for $i=1,2$. Then Theorem~4.2 of Heinig and Rost~\cite{heinig2010} implies that
\begin{equation}\label{eqn:H-inverse}
	H^{-1} = -\frac{1}{\det\psi}\mathrm{B}(g_{1},g_{2})
\end{equation}
where the matrix $\mathrm{B}(g_{1},g_{2})$, called the (Hankel) Bezoutian of $g_{1}$ and $g_{2}$ by Heinig and Rost~\cite[Section~2.1]{heinig2010} (and many other authors), is defined as follows: $\mathrm{B}(g_{1},g_{2})=\left(b_{i,j}\right)_{i=1,\ldots,d;j=1,\ldots,d}$ such that $b_{i,j}$ is the coefficient of $x^{i-1}y^{i-1}$ in the polynomial
\begin{equation*}
	b(x,y)=\frac{g_{1}(x)g_{2}(y)-g_{2}(x)g_{1}(y)}{x-y}.
\end{equation*}

Expanding the numerator of $b(x,y)$ shows that
\begin{align*}
	b(x,y)%
	&=\sum^{d}_{k=0}a_{2,d-k}\left(\sum^{d}_{i=0}a_{1,d-i}\frac{x^{i}y^{k}-x^{k}y^{i}}{x-y}\right)\\
	&=\sum^{d}_{k=0}a_{2,d-k}\left(\sum^{d}_{i=k+1}a_{1,d-i}\frac{x^{i-k}-y^{i-k}}{x-y}x^{k}y^{k}-\sum^{k-1}_{i=0}a_{1,d-i}\frac{x^{k-i}-y^{k-i}}{x-y}x^{i}y^{i}\right).
\end{align*}
Therefore,
\begin{equation*}\label{eqn:bezH-revpol}
	\mathrm{B}(g_{1},g_{2})%
	=\sum^{d}_{k=0}a_{2,d-k}%
	\begin{pmatrix}
		& & -a_{1,d} & & &\\
		& \revdots & \vdots & & &\\
		-a_{1,d} & \ldots & -a_{d-k+1} & & &\\
		& & & a_{d-k-1} & \ldots & a_{0}\\
		& & & \vdots & \revdots & \\
		& & & a_{0} &  &\\
	\end{pmatrix},
\end{equation*}
where the omitted entries are zeros. Let $k\in\Z$ such that $0\leq k\leq d$, and $p_{1},\ldots,p_{d}\in\R[x]$ such that $\bez(f_{1},x^{d-k})=(p_{1},\ldots,p_{d})$. Then
\begin{equation*}
	p_{i+1}=-x^{d-k}\sum^{i}_{j=0}a_{1,d-i+j}x^{j}%
	\quad\text{for $i=0,\ldots,k-1$},
\end{equation*}
and
\begin{equation*}
	p_{i+1}=x^{i-k}f_{1}-x^{k-d}\sum^{i}_{j=0}a_{1,d-i+j}x^{j}%
	=x^{i-k}\sum^{d-i-1}_{j=0}a_{j}x^{j}%
	\quad\text{for $i=k,\ldots,d-1$}.
\end{equation*}
Hence,
\begin{equation*}
	\mathrm{B}(g_{1},g_{2})%
	=\sum^{d}_{k=0}a_{2,d-k}\bez(f_{1},x^{d-k})%
	=\bez(f_{1},f_{2}).
\end{equation*}
Combining this equation with \eqref{eqn:H-inverse} completes the proof.
\end{proof}

\section{Existence of geometric progressions}\label{sec:existence}

The \emph{integer kernel} of an integer matrix $A$ is the free $\Z$-module consisting of all integer column vectors $\vec{x}$ such that $A\vec{x}=\vec{0}$. Montgomery~\cite{montgomery1993} showed that if two coprime degree $d\geq 2$ integer polynomials $f_{1}$ and $f_{2}$ have a common root modulo~$N$, then a vector $\vec{c}\in\Z^{2d-1}$ such that $\vec{c}^{T}$ spans the integer kernel of the matrix
\begin{equation}\label{eqn:Sd}
	\left(x^{d-2}f_{1},\ldots,f_{1},x^{d-2}f_{2},\ldots,f_{2}\right)
\end{equation}
is a geometric progression modulo $N$. Moreover, $\norm{\vec{c}}_{2,1}=O(\norm{f_{1}}^{d-1}_{2,1}\norm{f_{2}}^{d-1}_{2,1})$ since the entries of $\vec{c}$ are, up to a constant, order $2d-2$ minors of the matrix in \eqref{eqn:Sd}. Koo, Jo and Kwon~\cite[Theorem~2]{koo2011} generalise Montgomery's result by providing a construction which, for $k\in\{1,\ldots,d-1\}$, uses $j=\lceil(d-1)/k\rceil+1$ degree $d$ polynomials $f_{1},\ldots,f_{j}$ with a common root modulo $N$ to construct a length $d+k-1$ geometric progression $\vec{c}$ such that $\norm{\vec{c}}_{2,1}=O(\max_{1\leq i\leq j}\norm{f_{i}}^{d+k-1}_{2,1})$. In this section, another generalisation is presented, with two polynomials $f_{1}$ and $f_{2}$ of maximum degree $d$ used to construct geometric progressions of lengths $\deg f_{1}+\deg f_{2}-1,\ldots,2d-1$. The geometric progressions are shown to have small size whenever $f_{1}$ and $f_{2}$ have small coefficients. Therefore, the existence of small geometric progressions for Montgomery's method, and relaxations of the method that employ shorter progressions~\cite{williams2010,prest2012,koo2011,coxon2011}, is established under the assumption that good nonlinear polynomial pairs exist.

Let $\mathbb{A}$ be an integral domain. For $f_{1},f_{2}\in\mathbb{A}[x]$ such that $2\leq\deg f_{2}\leq\deg f_{1}$, and $t\in\{\deg f_{2},\ldots,\deg f_{1}\}$, define the $(\deg f_{1}+t-2)\times(\deg f_{1}+t-1)$ matrix
\begin{equation*}
	\mathrm{S}_{t}(f_{1},f_{2})%
	=\left(x^{t-2}f_{1},\ldots,f_{1},x^{\deg f_{1}-2}f_{2},\ldots,f_{2}\right).
\end{equation*}
Define signed minors $\mathrm{M}_{t,1}(f_{1},f_{2}),\ldots,\mathrm{M}_{t,\deg f_{1}+t-1}(f_{1},f_{2})$ of the matrix $\mathrm{S}_{t}(f_{1},f_{2})$ as follows: for $i=1,\ldots,\deg f_{1}+t-1$, $\mathrm{M}_{t,i}(f_{1},f_{2})$ is equal to $(-1)^{1+i}$ times the determinant of the submatrix of $\mathrm{S}_{t}(f_{1},f_{2})$ obtained by deleting its $i$th column. Define
\begin{equation*}
  \vec{c}_{t}(f_{1},f_{2})%
  =\left(\mathrm{M}_{t,1}(f_{1},f_{2}),\ldots,\mathrm{M}_{t,\deg f_{1}+t-1}(f_{1},f_{2})\right)
\end{equation*}
for $t=\deg f_{2},\ldots,\deg f_{1}$. When $f_{1}$ and $f_{2}$ are clear from the context, $\mathrm{S}_{t}$ is used to denote $\mathrm{S}_{t}(f_{1},f_{2})$, $\mathrm{M}_{t,i}$ is used to denote $\mathrm{M}_{t,i}(f_{1},f_{2})$, and $\vec{c}_{t}$ is used to denote $\vec{c}_{t}(f_{1},f_{2})$. The $i$th entry of the vector $\mathrm{S}_{t}(f_{1},f_{2})\cdot\vec{c}_{t}(f_{1},f_{2})^{T}$ is equal to the determinant of the matrix obtained by appending the $i$th row of $\mathrm{S}_{t}(f_{1},f_{2})$ to its top, and thus is equal to zero. Therefore,
\begin{equation}\label{eqn:ct-in-kernel}
	\mathrm{S}_{t}(f_{1},f_{2})\cdot\vec{c}_{t}(f_{1},f_{2})^{T}=\vec{0}_{\deg f_{1}+t-2}\quad\text{for $t=\deg f_{2},\ldots,\deg f_{1}$},
\end{equation}
where $\vec{0}_{n}$ denotes the $n$-dimensional column vector of zeros for $n=1,2,\ldots$. If $A$ is an $m\times n$ integer matrix such that $m\leq n$, define $\Delta(A)$ to be the greatest common divisor of all $m\times m$ minors of $A$, with $\Delta(A)=0$ if all such minors are zero. Then $\Delta(A)$ is nonzero if and only if $A$ has full rank.

Let $f_{1},f_{2}\in\Z[x]$ be coprime degree $d\geq 2$ polynomials that have a common root modulo $N$. Then $\mathrm{S}_{d}(f_{1},f_{2})$ is the matrix in \eqref{eqn:Sd}. Therefore, $\mathrm{S}_{d}(f_{1},f_{2})$ is the submatrix of $\syl(f_{1},f_{2})$ obtained by first deleting its first and $(d+1)$th rows, giving a matrix whose first column contains zeros, then deleting the first column of the resulting matrix. Thus, $\mathrm{S}_{d}(f_{1},f_{2})$ has full rank, otherwise $\syl(f_{1},f_{2})$ is singular. Therefore, $\vec{c}_{d}(f_{1},f_{2})^{T}$ is nonzero and in the integer kernel of $\mathrm{S}_{d}(f_{1},f_{2})$. Consequently, Montgomery's result implies $\vec{c}_{d}(f_{1},f_{2})/\Delta(\vec{c}_{d}(f_{1},f_{2}))$ is a geometric progression modulo $N$. More generally, if $2\leq\deg f_{2}\leq\deg f_{1}$, then the vectors $\vec{c}_{t}(f_{1},f_{2})$ for $t=\deg f_{2},\ldots,\deg f_{1}$ are geometric progressions modulo $N$:

\begin{theorem}\label{thm:polys-to-gp} Let $f_{1}$ and $f_{2}$ be coprime integer polynomials such that $2\leq\deg f_{2}\leq\deg f_{1}$, $f_{1}$ and $f_{2}$ have a common root $r\in\Z$ modulo $N$, and $\lc(f_{1})\,\Delta(\mathrm{S}_{\deg f_{2}}(f_{1},f_{2}))$ is relatively prime to $N$. Then, for $t\in\{\deg f_{2},\ldots,\deg f_{1}\}$, the vector
\begin{equation*}
  \vec{c}_{t}%
  =\vec{c}_{t}(f_{1},f_{2})%
  =\left(c_{t,\deg f_{1}+t-2},\ldots,c_{t,0}\right)
\end{equation*}
satisfies the following properties:
\begin{enumerate}
	\item\label{isGP} $\vec{c}_{t}$ is a nonzero geometric progression with ratio $r$ modulo $N$;
	\item\label{gcdCondition} $\gcd(c_{t,0},N)=1$;
	\item\label{rankCondition} the matrix $C_{t}=\left(c_{t,\deg f_{1}+t-i-j}\right)_{i=1,\ldots,t;j=1,\ldots,\deg f_{1}}$ has full rank;
	\item\label{kernelCondition} the vectors $(f_{1})^{T}_{\deg f_{1}}$ and $(f_{2})^{T}_{\deg f_{1}}$ are in the kernel of $\partial C_{t}$; and
	\item\label{GPsizeBounds} the inequalities
	\begin{equation*}
  		\norm{\vec{c}_{t}}_{2,s^{-1}}\leq\left(\left|\sin\theta_{s}(f_{1},f_{2})\right|\norm{f_{1}}_{2,s}\right)^{t-1}\left(s^{\frac{\deg f_{2}-t}{2}}\norm{f_{2}}_{2,s}\right)^{\deg f_{1}-1}
\end{equation*}
	and
	\begin{equation*}
  		\norm{\vec{c}_{t}}_{2,s^{-1}}%
	\geq s^{\frac{t-\deg f_{1}}{2}}\left|s^{\deg f_{2}}\lc(f_{2})\right|^{\frac{\deg f_{1}}{t}-1}\left|\lc(f_{1})^{t-\deg f_{2}}\res(f_{1},f_{2})\right|^{1-\frac{1}{t}}
\end{equation*}
hold for all $s>0$.
\end{enumerate}
\end{theorem}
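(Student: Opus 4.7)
Write $d_i=\deg f_i$ throughout. The plan rests on the block decomposition
\begin{equation*}
\mathrm{S}_t(f_1,f_2)=\begin{pmatrix} U & * \\ 0 & \mathrm{S}_{d_2}(f_1,f_2) \end{pmatrix},
\end{equation*}
in which $U$ is $(t-d_2)\times(t-d_2)$ upper triangular with $\lc(f_1)$ on its diagonal. This arises because, in $\mathrm{S}_t$, the shifts of $f_2$ padded to formal degree $d_1+t-2$ have zero leading entries in the first $t-d_2$ columns, while the top $t-d_2$ shifts of $f_1$ fill those columns upper-triangularly. Every maximal minor of $\mathrm{S}_t$ retaining all first $t-d_2$ columns factors through this block as $\lc(f_1)^{t-d_2}$ times a maximal minor of $\mathrm{S}_{d_2}$, so $\Delta(\mathrm{S}_t)\mid\lc(f_1)^{t-d_2}\Delta(\mathrm{S}_{d_2})$ is coprime to $N$.

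Properties~(1) and~(2) then follow from the Smith normal form of $\mathrm{S}_t$: its kernel modulo $N$ is a cyclic $\mathbb{Z}/N$-module generated by a primitive integer kernel generator $\vec{v}$, and both $\vec{c}_t^T=\Delta(\mathrm{S}_t)\vec{v}$ and $\vec{w}=(r^{d_1+t-2},\ldots,r,1)^T$ (annihilated modulo $N$ since $f_i(r)\equiv 0$) lie in this kernel. Matching last entries forces $\vec{c}_t\equiv c_{t,0}\vec{w}\pmod N$, which is~(1); and from $\vec{w}\equiv\lambda\vec{v}\pmod N$ the identity $\lambda\,v_{\mathrm{last}}\equiv 1\pmod N$ shows $v_{\mathrm{last}}$, and hence $c_{t,0}=\Delta(\mathrm{S}_t)\,v_{\mathrm{last}}$, is invertible modulo $N$, giving~(2). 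Property~(4) is a direct index unpacking of~\eqref{eqn:ct-in-kernel}: the $f_1$-rows of $\mathrm{S}_t$ give $\sum_l a_{1,l}c_{t,t-1-k+l}=0$ for $k=1,\ldots,t-1$, which is exactly $\partial C_t\,(f_1)_{d_1}^T=\vec{0}$, and the $f_2$-rows give the same pattern over the wider range $k=1,\ldots,d_1-1$, containing the $t-1$ relations needed for $\partial C_t\,(f_2)_{d_1}^T=\vec{0}$ since $t\le d_1$.

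For property~(3) I plan to realise $C_t$ as the first $t$ rows of a nonsingular $d_1\times d_1$ Hankel matrix $\widetilde{C}$. Extend $c_{t,0},\ldots,c_{t,d_1+t-2}$ by $d_1-t$ further entries via the $f_1$-recurrence to obtain a length-$(2d_1-1)$ sequence, then arrange it as $\widetilde{C}$. By construction $(f_1)_{d_1}^T\in\ker\partial\widetilde{C}$, and $(f_2)_{d_1}^T$ remains there because the $f_2$-recurrence of $\vec{c}_t$ is compatible with the $f_1$-extension by virtue of $\gcd(f_1,f_2)=1$. Since these vectors span a $2$-dimensional subspace of coprime polynomials of maximum degree $d_1$, the converse of Lemma~\ref{lem:HankelInverse}---combining Lander's inversion identity with $\det\bez(f_1,f_2)=\pm\lc(f_1)^{d_1-d_2}\res(f_1,f_2)\neq 0$---forces $\widetilde{C}$ to be nonsingular, so its first $t$ rows are linearly independent and $C_t$ has rank $t$.

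Property~(5) is a pair of Hadamard-type estimates. For the upper bound, each signed minor $\mathrm{M}_{t,i}$ is a $(d_1+t-2)\times(d_1+t-2)$ determinant whose rows consist of $t-1$ shifts of $f_1$ and $d_1-1$ shifts of $f_2$; after suitable column rescaling by powers of $s$, the row norms become $\|f_1\|_{2,s}$ and $s^{(d_2-t)/2}\|f_2\|_{2,s}$ respectively, and projecting each $f_1$-shift onto the orthogonal complement of the $f_2$-shift span---whose angle from $f_1$ is at most $\theta_s(f_1,f_2)$ since that span contains $f_2$---bounds each projection by $\|f_1\|_{2,s}\,|\sin\theta_s(f_1,f_2)|$, supplying the $|\sin\theta_s|^{t-1}$ factor via Hadamard on the resulting determinant. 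For the lower bound, the identification with $\widetilde{C}$ singles out a specific $t\times t$ minor of $C_t$ whose evaluation, via Jacobi's identity applied to $\widetilde{C}$ together with~\eqref{eqn:det-bezout}, involves the factors $\lc(f_1)^{t-d_2}$, $\lc(f_2)^{d_1-t}$ and $\res(f_1,f_2)$ to precisely the powers needed to recover the stated inequality after bounding the minor above by $\|\vec{c}_t\|_{2,s^{-1}}^{\,t}$ (up to recorded $s$-factors) and taking $t$-th roots. The main technical obstacle is the extension step in property~(3)---verifying that the $f_2$-recurrence truly propagates through the $f_1$-extrapolated entries of $\widetilde{C}$---since this is what both supplies the required rank and underwrites the resultant identity at the heart of the lower bound in~(5).
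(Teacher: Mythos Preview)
Your treatment of (1), (2), (4), and the upper bound in (5) is broadly correct, though the paper argues (1)--(2) inductively from the base case $t=\deg f_{2}$ via the recursion $\mathrm{M}_{t,i}=-\lc(f_{1})\,\mathrm{M}_{t-1,i-1}$, and for the upper bound in (5) pairs each $x^{t-i}f_{1}$ with $x^{t-i}f_{2}$ and applies Fischer's inequality rather than projecting all $f_{1}$-rows onto the complement of the $f_{2}$-span (your claim that the rescaled row norms all equal $\norm{f_{1}}_{2,s}$ or $s^{(d_{2}-t)/2}\norm{f_{2}}_{2,s}$ is wrong---different shifts carry different $s$-powers---but these do telescope to the stated bound). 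More importantly, the obstacle you flag is not one: since $\deg f_{2}\le t$, the $f_{2}$-relations $\sum_{m} a_{2,m}\,c_{t,m+k}=0$ for $0\le k\le d_{1}-2$ involve only indices $m+k\le d_{2}+d_{1}-2\le t+d_{1}-2$, hence only original entries of $\vec{c}_{t}$, and the extension is irrelevant to them.

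The genuine gap is the step you pass over: the ``converse of Lemma~\ref{lem:HankelInverse}'' you invoke is neither stated nor proved in the paper and does not follow from Lander's theorem alone. Your extension argument does show that the extended sequence lies in the one-dimensional kernel of $\mathrm{S}_{d_{1}}(f_{1},f_{2})$, so $\widetilde{C}$ is a nonzero multiple of $C_{d_{1}}$; but you still need $C_{d_{1}}$ nonsingular, and that is exactly what has to be proved. The paper obtains this from the explicit identity $\adj\bez(f_{1},f_{2})=(-1)^{d_{1}(d_{1}-1)/2}C_{d_{1}}$ (Lemma~\ref{lem:adjBez}, Corollary~\ref{cor:adjBez}), established by a direct determinant computation with generic coefficients. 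This identity is also the engine for the lower bound in (5): the paper computes $\vol(C_{t}\,S)$ for diagonal $S$ by combining Jacobi's identity for minors of an adjoint with this relation (Lemmas~\ref{lem:vol-adjoint}--\ref{lem:gp-vol}) and bounds the result above by $\norm{\vec{c}_{t}}_{2,s^{-1}}^{\,t}$ via Hadamard. Your single-minor sketch is too vague to assess; locating a minor with precisely the exponents $(t-d_{2})(t-1)$, $d_{1}-t$, $t-1$ on $\lc(f_{1})$, $\lc(f_{2})$, $\res(f_{1},f_{2})$ is the substantive content of Lemma~\ref{lem:gp-vol}, and it does not drop out of a naive application of Jacobi to $\widetilde{C}$.
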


Theorem~\ref{thm:polys-to-gp} is proved as a series of lemmas and corollaries in the next section, with some of the properties presented as part of more general statements. The requirement in the theorem that $\Delta(\mathrm{S}_{\deg f_{2}}(f_{1},f_{2}))$ and $N$ are relatively prime has not been encountered so far in this paper, nor is it usual to require a pair of nonlinear number field sieve polynomials to satisfy it. In the next section, it is shown that
\begin{equation*}
	\Delta(\mathrm{S}_{\deg f_{2}}(f_{1},f_{2}))\leq\left(\left|\sin\theta_{s}(f_{1},f_{2})\right|\norm{f_{1}}_{2,s}\right)^{\deg f_{2}-1}\norm{f_{2}}^{\deg f_{1}-1}_{2,s}\quad\text{for all $s>0$}
\end{equation*}
(see Lemma~\ref{lem:ct-upper-bnd} and Remark~\ref{rmk:ct-upper-bnd}). Therefore, if $f_{1}$ and $f_{2}$ are a pair of number field sieve polynomials that are close the attaining the lower bound from Lemma~\ref{lem:resultantBound}, then $\gcd(\Delta(\mathrm{S}_{\deg f_{2}}(f_{1},f_{2})),N)$ is expected to equal one in practice, or a factorisation of $N$ is possibly obtained. However, as the requirement is new, its meaning is discussed before proceeding to the proof of the theorem.

Let $\mathbb{A}$ be an integral domain, $f_{1},f_{2}\in\mathbb{A}[x]$ such that $2\leq\deg f_{2}\leq\deg f_{1}$, and $d_{i}=\deg f_{i}$ for $i=1,2$. Then the \emph{first subresultant} of $f_{1}$ and $f_{2}$ is the polynomial
\begin{equation*}
	\sres_{1}(f_{1},f_{2})=(-1)^{d_{1}+d_{2}-1}\left(\mathrm{M}_{d_{2},d_{1}+d_{2}-2}(f_{1},f_{2})x-\mathrm{M}_{d_{2},d_{1}+d_{2}-1}(f_{1},f_{2})\right).
\end{equation*}
Recall that $\res(f_{1},f_{2})=0$ if and only if $\deg\gcd(f_{1},f_{2})\geq 1$, where the gcd is computed over the field of fractions of $\mathbb{A}$. This statement is refined by considering the first subresultant of the polynomials: if $\res(f_{1},f_{2})=0$, then $\deg\gcd(f_{1},f_{2})=1$ if and only if $\sres_{1}(f_{1},f_{2})\neq 0$ (see, for instance,~\cite[Section~7.7.1]{mishra1993} or \cite{elkahoui2003}).

Suppose now that $f_{1}$ and $f_{2}$ are coprime integer polynomials. Write $f_{1}=\sum^{d_{1}}_{i=0}a_{1,i}x^{i}$ and $f_{2}=\sum^{d_{1}}_{i=0}a_{2,i}x^{i}$ such that the coefficients $a_{i,j}$ are integers. Then using the Laplace expansion (see~\cite[Section~33]{aitken1956}) to compute the determinant of $\syl(f_{1},f_{2})$ with respect to rows $d_{2}$ and $d_{1}+d_{2}$ shows that
\begin{equation*}
	\res(f_{1},f_{2})=(-1)^{d_{2}-1}\sum^{d_{1}}_{i=1}\left(a_{1,i}a_{2,0}-a_{1,0}a_{2,i}\right)\mathrm{M}_{d_{2},d_{1}+d_{2}-i}(f_{1},f_{2}).
\end{equation*}
Thus, $\Delta(\mathrm{S}_{d_{2}}(f_{1},f_{2}))$ divides $\res(f_{1},f_{2})$ and $\sres_{1}(f_{1},f_{2})$. Therefore, if $p$ is a prime that divides $\Delta(\mathrm{S}_{d_{2}}(f_{1},f_{2}))$ and does not divide $\lc(f_{1})\lc(f_{2})$, then the reductions of $f_{1}$ and $f_{2}$ modulo $p$ have a common factor of degree greater than one. Conversely, if the reductions of $f_{1}$ and $f_{2}$ modulo a prime $p$ are nonzero and have a common divisor of degree $w\geq 2$, then Gomez et al.~\cite{gomez2009} showed that $p^{w}$ divides $\res(f_{1},f_{2})$. Modifying their proof shows that $p^{w-1}$ divides the minors $\mathrm{M}_{\deg f_{2},i}$, and thus $p^{w-1}$ divides $\Delta(\mathrm{S}_{d_{2}}(f_{1},f_{2}))$. Hence, if $p$ is a prime that does not divide $\lc(f_{1})\lc(f_{2})$, then $p$ divides $\Delta(\mathrm{S}_{d_{2}}(f_{1},f_{2}))$ if and only if the reductions of $f_{1}$ and $f_{2}$ modulo $p$ have a common factor of degree greater than one. If $f_{1}$ and $f_{2}$ are a pair of number field sieve polynomials, then $\gcd(\lc(f_{1})\lc(f_{2}),N)$ is expected to equal one in practice. Therefore, the requirement that $\gcd(\Delta(\mathrm{S}_{\deg f_{2}}(f_{1},f_{2})),N)=1$ denies the possibility that $f_{1}$ and $f_{2}$ have a factor of degree greater than one modulo some factor of $N$.

\subsection{Proof of Theorem~\ref{thm:polys-to-gp}}\label{sec:proof-polys-to-gp}

In this section, $f_{1}$ and $f_{2}$ are integer polynomials such that $2\leq\deg f_{2}\leq\deg f_{1}$. Furthermore, let $d=\deg f_{1}$,
\begin{equation*}
	\vec{c}_{t}=\vec{c}_{t}(f_{1},f_{2})=(c_{t,d+t-2},\ldots,c_{t,0})%
	\quad\text{and}\quad%
	C_{t}=\left(c_{t,d+t-i-j}\right)_{i=1,\ldots,t;j=1,\ldots,d}
\end{equation*}
for $t=\deg f_{2},\ldots,d$. The first lemma of this section proves Property~\eqref{isGP} and Property~\eqref{gcdCondition} of Theorem~\ref{thm:polys-to-gp}:

\begin{lemma}\label{lem:polys-to-gp} If $\gcd(\Delta(\mathrm{S}_{\deg f_{2}}(f_{1},f_{2})),N)=1$ and there exists an integer $r$ that is a root of $f_{1}$ and $f_{2}$ modulo $N$, then $\vec{c}_{t}$ is a nonzero geometric progression with ratio $r$ modulo $N$ and $\gcd(c_{t,0},N)=\gcd(\lc(f_{1})^{t-\deg f_{2}},N)$ for $t=\deg f_{2},\ldots,d$.
\end{lemma}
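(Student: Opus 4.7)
The plan is to deduce all three conclusions from a single cofactor-expansion identity, followed by a short descent in $t$.

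\textbf{Step 1 (geometric progression modulo $N$).} Fix $k\in\{0,\ldots,d+t-2\}$ and let $A_{k}$ denote the $(d+t-1)\times(d+t-1)$ matrix obtained by prepending to $\mathrm{S}_{t}(f_{1},f_{2})$ the coefficient vector of $x^{k}$, viewed as a polynomial of formal degree $d+t-2$. Cofactor expansion along the prepended row gives $\det A_{k}=c_{t,k}$. I would then apply, in order for $j=1,\ldots,d+t-2$, the column operations that add $r$ times column $j$ to column $j+1$; these preserve $\det A_{k}$. A direct induction on $j$ shows that after these operations the last entry of each row equals the evaluation at $x=r$ of the polynomial that row represents. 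For the rows of $\mathrm{S}_{t}$ these evaluations are $r^{\ell}f_{i}(r)\equiv 0\pmod{N}$, while for the prepended row the evaluation is $r^{k}$. Cofactor expansion along the last column, reduced modulo $N$, therefore leaves only the top contribution and yields $c_{t,k}\equiv(-1)^{d+t}r^{k}\det D\pmod{N}$, where $D$ is the minor obtained by deleting the top row and last column and is manifestly independent of $k$. Specialising at $k=0$ gives $(-1)^{d+t}\det D\equiv c_{t,0}\pmod{N}$, so $c_{t,k}\equiv r^{k}c_{t,0}\pmod{N}$ for every $k$, and $\vec{c}_{t}$ is a geometric progression with ratio $r$ modulo $N$.

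\textbf{Step 2 (base case $t=\deg f_{2}$).} Set $g=\gcd(c_{\deg f_{2},0},N)$. The congruence from Step 1 shows $g\mid c_{\deg f_{2},k}$ for every $k$; since the $\pm c_{\deg f_{2},k}$ are precisely the maximal minors of $\mathrm{S}_{\deg f_{2}}(f_{1},f_{2})$, their gcd is $\Delta(\mathrm{S}_{\deg f_{2}}(f_{1},f_{2}))$. Combined with $g\mid N$, this forces $g\mid\gcd(\Delta(\mathrm{S}_{\deg f_{2}}(f_{1},f_{2})),N)=1$. In particular $c_{\deg f_{2},0}\neq 0$, so $\vec{c}_{\deg f_{2}}$ is nonzero.

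\textbf{Step 3 (descent for $t>\deg f_{2}$).} When $t>\deg f_{2}$, the leading column of $\mathrm{S}_{t}$ is $(\lc(f_{1}),0,\ldots,0)^{T}$, since only the first row $x^{t-2}f_{1}$ attains formal degree $d+t-2$. Writing $c_{t,0}=(-1)^{d+t}\det\widehat{\mathrm{S}}_{t}$, where $\widehat{\mathrm{S}}_{t}$ denotes $\mathrm{S}_{t}$ with its last column removed, I would expand along the first column and identify the surviving $(d+t-3)\times(d+t-3)$ minor with $\widehat{\mathrm{S}}_{t-1}$ by dropping a zero column. A sign calculation yields the recursion $c_{t,0}=-\lc(f_{1})\,c_{t-1,0}$, which iterates to $c_{t,0}=(-\lc(f_{1}))^{t-\deg f_{2}}c_{\deg f_{2},0}$. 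Since $c_{\deg f_{2},0}$ is a unit modulo $N$ by Step 2, this produces the claimed identity $\gcd(c_{t,0},N)=\gcd(\lc(f_{1})^{t-\deg f_{2}},N)$, and the nonvanishing of $\lc(f_{1})$ and $c_{\deg f_{2},0}$ gives $c_{t,0}\neq 0$, hence $\vec{c}_{t}\neq\vec{0}$. The only subtle step is the column-sweep identity in Step 1; once one recognises that the sweep realises polynomial evaluation at $r$ in the last column, the remaining verifications reduce to linear-algebra bookkeeping.
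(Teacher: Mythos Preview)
Your proof is correct, and it takes a genuinely different route from the paper's.

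The paper establishes the geometric-progression property in two stages: for the base case $t=\deg f_{2}$ it passes to the Hermite normal form $\mathrm{S}_{d_{2}}U=(\vec{0}\mid H)$, uses invertibility of $H$ modulo $N$ to show that $\vec{r}U^{-T}$ has all but its first entry divisible by $N$, and concludes that $\vec{c}_{d_{2}}$ is a scalar multiple of $\vec{r}$ modulo $N$; it then climbs to larger $t$ by an induction based on the full minor recursion $\mathrm{M}_{t,i}=-\lc(f_{1})\mathrm{M}_{t-1,i-1}$ (for $i\ge 2$) together with the kernel identity $\mathrm{S}_{t-1}\vec{c}_{t-1}^{T}=\vec{0}$. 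Your Step~1 replaces both stages at once with an elementary determinant argument: the column sweep realises Horner evaluation at $r$, so modulo $N$ the cofactor expansion along the last column collapses to the single term $(-1)^{d+t}r^{k}\det D$ with $D$ independent of $k$, giving $c_{t,k}\equiv r^{k}c_{t,0}$ directly for every $t$. This is shorter and avoids the Hermite normal form machinery entirely; the only place you need the recursion is for the scalar $c_{t,0}$ in Step~3, rather than for all minors. What the paper's approach buys is structural information it reuses later (identifying $\vec{c}_{d_{2}}/\Delta(\vec{c}_{d_{2}})$ with a generator of the integer kernel of $\mathrm{S}_{d_{2}}$), but for the lemma as stated your argument is cleaner. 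One minor wording point in Step~3: the ``surviving $(d+t-3)\times(d+t-3)$ minor'' is already $\widehat{\mathrm{S}}_{t-1}$ on the nose, since deleting row~1 and column~1 from $\widehat{\mathrm{S}}_{t}$ is the same as deleting row~1, column~1, and the last column from $\mathrm{S}_{t}$; the ``zero column'' you mention is presumably the observation that $\mathrm{S}_{t}$ with its first row removed has a zero first column, which is what makes the identification with $\mathrm{S}_{t-1}$ work.
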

\begin{proof} Suppose that $\gcd(\Delta(\mathrm{S}_{\deg f_{2}}(f_{1},f_{2})),N)=1$ and $r\in\Z$ is a root of $f_{1}$ and $f_{2}$ modulo $N$. Let $d_{2}=\deg f_{2}$. The lemma is proved in two steps: first, it is shown that $\vec{c}_{d_{2}}$ is a geometric progression with ratio $r$ modulo $N$; and second, it is shown that if $d_{2}<d$ and $\vec{c}_{t-1}$ is a geometric progression with ratio $r$ modulo $N$ for some $t\in\{d_{2}+1,\ldots,d\}$, then so too is $\vec{c}_{t}$.

Let $U$ be a $(d+d_{2}-1)\times(d+d_{2}-1)$ unimodular matrix such that
$S_{d_{2}}U$ is in Hermite normal form (as defined by Cohen~\cite[Definition~2.4.2]{cohen1993}). Performing elementary row operations on the columns of $\mathrm{S}_{d_{2}}$ does not change $\Delta(\mathrm{S}_{d_{2}})$, which is nonzero since $\gcd(\Delta(\mathrm{S}_{d_{2}}),N)=1$. Thus,
\begin{equation*}
	\mathrm{S}_{d_{2}}U=\begin{pmatrix} \vec{0}_{d+d_{2}-2} & H\end{pmatrix}
\end{equation*}
for some $(d+d_{2}-2)\times(d+d_{2}-2)$ matrix $H$ such that $\det H=\pm\Delta(\mathrm{S}_{d_{2}})$.

The first column vector of $U$ is a basis for the integer kernel of $\mathrm{S}_{d_{2}}$ (see \cite[Proposition~2.4.9]{cohen1993}). Thus, \eqref{eqn:ct-in-kernel} implies that $\vec{c}^{T}_{d_{2}}$ is equal to $\pm\Delta(\vec{c}_{d_{2}})$ times the first column vector of $U$. The definition of $\vec{c}_{d_{2}}$ implies that $\Delta(\vec{c}_{d_{2}})=\Delta(\mathrm{S}_{d_{2}})$. Therefore, $\vec{c}_{d_{2}}$ is nonzero and $\vec{c}_{d_{2}}U^{-T}=\pm\left(\Delta(\mathrm{S}_{d_{2}}),0,\ldots,0\right)$.

Let $\vec{r}=(r^{d+d_{2}-2},r^{d+d_{2}-3},\ldots,1)$ and $\vec{r}U^{-T}=(r_{1},\ldots,r_{d+d_{2}-1})$. Then $\mathrm{S}_{d_{2}}\vec{r}^{T}\equiv\vec{0}_{d+d_{2}-2}\pmod{N}$ since $r$ is a root of $f_{1}$ and $f_{2}$ modulo $N$. Consequently,
\begin{equation*}
	H(r_{2},\ldots,r_{d+d_{2}-1})^{T}%
	\equiv\left(\mathrm{S}_{d_{2}}U\right)\left(\vec{r}U^{-T}\right)^{T}%
	\equiv\mathrm{S}_{d_{2}}\vec{r}^{T}%
	\equiv\vec{0}_{d+d_{2}-2}\pmod{N}
\end{equation*}
Therefore, $(r_{2},\ldots,r_{d+d_{2}-1})\equiv\vec{0}^{T}_{d+d_{2}-2}\pmod{N}$ since $H$ is invertible modulo $N$. It follows that $\gcd(r_{1},N)=1$ since $\Delta(\vec{r}U^{-T})=\Delta(\vec{r})=1$. Hence,
\begin{equation*}
	\vec{c}_{d_{2}}U^{-T}\equiv \pm\frac{\Delta(\mathrm{S}_{d_{2}})}{r_{1}}\vec{r}U^{-T}\pmod{N}.
\end{equation*}
Multiplying both sides of this equation on the right by $U^{T}$ shows that $\vec{c}_{d_{2}}$ is a geometric progression with ratio $r$ modulo $N$ and $\gcd(c_{d_{2},0},N)=1$.

Suppose that $d_{2}<d$ and $\vec{c}_{t-1}$ for some $t\in\{d_{2}+1,\ldots,d\}$ is a nonzero geometric progression with ratio $r$ modulo $N$ and $\gcd(c_{t-1,0},N)=\gcd(\lc(f_{1})^{t-1-d_{2}},N)$. Let $f_{1}=\sum^{d}_{i=0}a_{1,i}x^{i}$ such that $a_{1,d},\ldots,a_{1,0}\in\Z$. Then
\begin{equation}\label{eqn:St-recursion}
\mathrm{S}_{t}(f_{1},f_{2})=\left(
\begin{array}{c|cccccc}
 a_{1,d} & a_{1,d-1} & \ldots & a_{1,0} & 0 & \ldots & 0\\\hline
 0       & \multicolumn{6}{c}{}\\[-1ex]
 \vdots  & \multicolumn{6}{c}{\mathrm{S}_{t-1}(f_{1},f_{2})}\\
 0       & \multicolumn{6}{c}{}
\end{array}\right).
\end{equation}
Therefore, deleting the $i$th column of $\mathrm{S}_{t}(f_{1},f_{2})$ and computing the determinant of the resulting matrix along its first row shows that
\begin{equation}\label{eqn:Mti-recursion}
\mathrm{M}_{t,i}=
	\begin{cases}
		\sum^{d}_{i=1}a_{1,d-i}\mathrm{M}_{t-1,i} & \text{if $i=1$},\\
		-a_{1,d}\mathrm{M}_{t-1,i-1} & \text{if $i\in\{2,\ldots,d+t-1\}$}.
	\end{cases}
\end{equation}
It follows that $\vec{c}_{t}$ is nonzero since $a_{1,d}$ and $\vec{c}_{t-1}$ are nonzero. Furthermore, $c_{t,0}=-a_{1,d}c_{t-1,0}$ and thus $\gcd(c_{t,0},N)=\gcd(\lc(f_{1})^{t-d_{2}},N)$.

By assumption, $\vec{c}_{t-1}$ is a geometric progression with ratio $r$ modulo $N$. Thus, $\mathrm{M}_{t-1,i}r\equiv\mathrm{M}_{t-1,i-1}$ for $i=2,\ldots,d+t-1$. Therefore, \eqref{eqn:Mti-recursion} implies that
\begin{equation*}
	\mathrm{M}_{t,i}r\equiv -a_{1,d}\mathrm{M}_{t-1,i-1}r\equiv -a_{1,d}\mathrm{M}_{t-1,i-2}\equiv \mathrm{M}_{t,i-1}\pmod{N}
\end{equation*}
for $i=3,\ldots,d+t-1$. The first entry of $\mathrm{S}_{t-1}\vec{c}^{T}_{t-1}$ is equal to $\sum^{d}_{i=0}a_{1,d-i}\mathrm{M}_{t-1,i+1}$. Consequently, \eqref{eqn:ct-in-kernel} implies that $\sum^{d}_{i=0}a_{1,d-i}\mathrm{M}_{t-1,i+1}=0$. Thus, \eqref{eqn:Mti-recursion} implies that
\begin{equation*}
	\mathrm{M}_{t,2}r%
	\equiv-a_{1,d}\mathrm{M}_{t-1,1}r%
	\equiv\sum^{d}_{i=1}a_{1,d-i}\mathrm{M}_{t-1,i+1}r%
	\equiv\sum^{d}_{i=1}a_{1,d-i}\mathrm{M}_{t-1,i}%
	\equiv\mathrm{M}_{t,1}\pmod{N}.
\end{equation*}
Hence, $\vec{c}_{t}$ is a geometric progression with ratio $r$ modulo $N$.
\end{proof}

Define the \emph{volume} of a real matrix $A$, denoted $\vol A$, to be $\sqrt{\det AA^{T}}$. For a matrix $A$ over a commutative ring, define $\vol^{2} A=\det AA^{T}$. If $A$ is the $0\times n$ empty matrix, then $\vol A=1$ and $\vol^{2} A =1$. If $A$ is an $m\times n$ matrix such that $m\leq n$, then $\vol A$ is nonzero if and only if $A$ has full rank. The volume function is multiplicative in the following sense: if $A$ is an $m\times m$ matrix and $B$ is an $m\times n$ matrix, then $\vol AB=\vol A \cdot\vol B$ and $\vol^{2} AB=\vol^{2} A \cdot\vol^{2} B$.

The following lemma provides the upper bound on $\norm{\vec{c}_{t}}_{2,s^{-1}}$ in Theorem~\ref{thm:polys-to-gp}. In particular, the lemma shows that if $\left|\sin\theta_{s}(f_{1},f_{2})\right|\norm{f_{1}}_{2,s}\norm{f_{2}}_{2,s}=O(N^{1/d})$ for some $s>0$, i.e., the polynomials are close the attaining the lower bound from Lemma~\ref{lem:resultantBound}, then $\norm{\vec{c}_{d}}_{2,s^{-1}}=O(N^{1-1/d})$.

\begin{lemma}\label{lem:ct-upper-bnd} The inequality
\begin{equation}\label{eqn:ct-upper-bnd}
	\norm{\vec{c}_{t}}_{2,s^{-1}}%
	\leq\left(\left|\sin\theta_{s}(f_{1},f_{2})\right|\norm{f_{1}}_{2,s}\right)^{t-1}%
	\left(s^{(\deg f_{2}-t)/2}\norm{f_{2}}_{2,s}\right)^{d-1}
\end{equation}
holds for $t=\deg f_{2},\ldots,d$ and all $s>0$.
\end{lemma}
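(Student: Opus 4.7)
The plan is to identify $\norm{\vec{c}_{t}}_{2,s^{-1}}$ with the volume of a column-weighted version of $\mathrm{S}_{t}$, and then to bound this volume via a row reduction followed by the Fischer and Hadamard inequalities. Let $D_{s}$ be the $(d+t-1)\times(d+t-1)$ diagonal matrix with $(i,i)$-entry $s^{(d+t)/2-i}$. Because $\prod_{i}s^{(d+t)/2-i}=1$, the Cauchy--Binet formula applied to $\mathrm{S}_{t}D_{s}$ yields
\begin{equation*}
\vol^{2}(\mathrm{S}_{t}D_{s}) = \sum_{i=1}^{d+t-1}\mathrm{M}_{t,i}^{2}\,s^{2i-(d+t)} = \norm{\vec{c}_{t}}_{2,s^{-1}}^{\,2},
\end{equation*}
so it suffices to upper bound $\vol(\mathrm{S}_{t}D_{s})$.

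Write $r_{k}(f_{j})$ for the row of $\mathrm{S}_{t}D_{s}$ that arises from the shift $x^{k}f_{j}$. A direct calculation yields the Euclidean norms $|r_{k}(f_{1})|=s^{k+1-t/2}\norm{f_{1}}_{2,s}$ and $|r_{k}(f_{2})|=s^{k+1-(d+t-\deg f_{2})/2}\norm{f_{2}}_{2,s}$, while $r_{k}(f_{1})\cdot r_{k}(f_{2})$ factors as $s^{2(k+1)-(d+t)}\langle (f_{1}(sx)),(f_{2}(sx))\rangle$. The shift index $k$ thus enters the inner product and both squared norms only through the common factor $s^{2(k+1)-(d+t)}$, which cancels when one forms the angle; consequently the angle between $r_{k}(f_{1})$ and $r_{k}(f_{2})$ equals $\theta_{s}(f_{1},f_{2})$ for every $k$. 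Setting $\alpha=\langle (f_{1}(sx)),(f_{2}(sx))\rangle/|(f_{2}(sx))|^{2}$ therefore produces a single scalar such that $r_{k}(f_{1})-\alpha\,r_{k}(f_{2})\perp r_{k}(f_{2})$ and $|r_{k}(f_{1})-\alpha\,r_{k}(f_{2})|=|\sin\theta_{s}(f_{1},f_{2})|\cdot|r_{k}(f_{1})|$ for every $k$.

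Because $t\leq d$, the row $r_{k}(f_{2})$ already lies in $\mathrm{S}_{t}D_{s}$ for each $k\in\{0,\dots,t-2\}$, so replacing every $r_{k}(f_{1})$ by $r_{k}(f_{1})-\alpha\,r_{k}(f_{2})$ is a valid row operation and preserves $\vol(\mathrm{S}_{t}D_{s})$. Fischer's inequality applied to the Gram matrix of the reduced rows splits the bound into the product of the volumes of the $f_{1}$-block and the $f_{2}$-block, and Hadamard's inequality then bounds each block by the product of the Euclidean norms of its rows. The telescoping identities $\sum_{k=0}^{t-2}(k+1-t/2)=0$ and $\sum_{k=0}^{d-2}(k+1-(d+t-\deg f_{2})/2)=(d-1)(\deg f_{2}-t)/2$ collapse the resulting $s$-powers into exactly $(|\sin\theta_{s}(f_{1},f_{2})|\,\norm{f_{1}}_{2,s})^{t-1}(s^{(\deg f_{2}-t)/2}\norm{f_{2}}_{2,s})^{d-1}$.

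The main obstacle is the step before the row reduction: showing that a \emph{single} scalar $\alpha$ simultaneously orthogonalises every reduced $f_{1}$-row against the matching $f_{2}$-row. This reduces to the $k$-independence of the angle between $r_{k}(f_{1})$ and $r_{k}(f_{2})$, which emerges precisely because the shift index enters the inner product and the two squared norms through the same $s$-factor. Once that observation is in hand, the remaining appeals to Fischer, Hadamard, and the bookkeeping of $s$-exponents are routine.
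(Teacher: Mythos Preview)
Your argument is correct. The overall strategy matches the paper's: identify $\norm{\vec{c}_t}_{2,s^{-1}}$ with $\vol(\mathrm{S}_t D_s)$ via Cauchy--Binet, then bound the volume using Fischer's and Hadamard's inequalities together with the fact that the angle between $r_k(f_1)$ and $r_k(f_2)$ equals $\theta_s(f_1,f_2)$ for every shift $k$. The only organizational difference is how the $|\sin\theta_s|$ factors are extracted. The paper groups the rows into $t$ blocks---the $d-t$ ``excess'' $f_2$-rows together with the $t-1$ pairs $(r_k(f_1),r_k(f_2))$---applies Fischer to this partition, computes each two-row volume exactly as $|\sin\theta_s|\,|r_k(f_1)|\,|r_k(f_2)|$, and uses Hadamard only on the first block. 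You instead perform the row reduction $r_k(f_1)\mapsto r_k(f_1)-\alpha r_k(f_2)$ (your ``single $\alpha$'' observation is precisely the shift-independence of the angle), apply Fischer just once to separate the reduced $f_1$-block from the $f_2$-block, and finish with Hadamard on both. Both routes yield the identical bound; the paper's pairwise grouping avoids the preliminary row operation, while yours invokes Fischer only once at the cost of that extra orthogonalisation step.
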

\begin{proof} If $\mathrm{S}_{\deg f_{2}}(f_{1},f_{2})$ does not have full rank, then the recurrence relation~\eqref{eqn:Mti-recursion} implies that $\vec{c}_{t}$ is the zero vector for $t=\deg f_{2},\ldots,d$. Thus, if $\mathrm{S}_{\deg f_{2}}(f_{1},f_{2})$ does not have full rank, then the lemma holds trivially. Therefore, assume that $\mathrm{S}_{\deg f_{2}}(f_{1},f_{2})$ has full rank. Then the recurrence relation~\eqref{eqn:St-recursion} implies that $\mathrm{S}_{t}(f_{1},f_{2})$ has full rank for $t=\deg f_{2},\ldots,d$.

Let $t\in\{\deg f_{2},\ldots,d\}$, $s$ be a positive real number and
\begin{equation*}
	S=s^{-\frac{d+t-2}{2}}\cdot\diag\left(s^{d+t-2},s^{d+t-3},\ldots,1\right).
\end{equation*}
Then the Binet--Cauchy formula (see \cite[Section~36]{aitken1956}) implies that
\begin{equation}\label{eqn:volStS}
  \vol^{2}\left(\mathrm{S}_{t}(f_{1},f_{2})S\right)%
  =\sum^{d+t-1}_{i=1}\left(\mathrm{M}_{t,i}\frac{\det S}{s^{(d+t-2i)/2}}\right)^{2}%
  =\sum^{d+t-1}_{i=1}\frac{\mathrm{M}^{2}_{t,i}}{s^{d+t-2i}}%
  =\norm{\vec{c}_{t}}^{2}_{2,s^{-1}}.
\end{equation}
To complete the proof, Fischer's inequality~\cite{fischer08} is used to derive an upper bound on $\vol^{2}(\mathrm{S}_{t}(f_{1},f_{2})S)$ in a manner similar to the proof of \cite[Lemma~2.2]{coxon2011}.

Define matrices $A_{1},\ldots,A_{t}$ as follows: $A_{1}$ is the $0\times(d+t-1)$ empty matrix if $t=d$; $A_{1}=(x^{d-2}f_{2},\ldots,x^{t-1}f_{2})_{d+t-2}$ if $t\neq d$; and $A_{i}=(x^{t-i}f_{1},x^{t-i}f_{2})_{d+t-2}$ for $i=2,\ldots,t$. For $i=1,\ldots,t$, let $B_{i}$ be the $(d-t+2(i-1))\times(d+t-1)$ matrix obtained by arranging the matrices $A_{1},\ldots,A_{i}$ consecutively beneath each other. Then $B_{t}$ is obtained from $\mathrm{S}_{t}$ by permuting its rows. Thus, $\vol^{2}(B_{t}S)=\vol^{2}(\mathrm{S}_{t}(f_{1},f_{2})S)$. Moreover, as $\mathrm{S}_{t}(f_{1},f_{2})$ has full rank and $s\neq 0$, $(A_{i}S)(A_{i}S)^{T}$ and $(B_{i}S)(B_{i}S)^{T}$ are positive definite Hermitian for $i=1,\ldots,t$. Therefore, Fischer's inequality implies that $\vol^{2}(B_{i}S)\leq\vol^{2}(A_{i}S)\cdot\vol^{2}(B_{i-1}S)$ for $i=2,\ldots,t$. Hence,
\begin{equation}\label{eqn:volAt-bnd}
	\vol^{2}(B_{t}S)\leq\vol^{2}(A_{1}S)\cdot\vol^{2}(A_{2}S)\cdots\vol^{2}(A_{t}S).
\end{equation}

For $i=2,\ldots,d$,
\begin{equation*}
	\norm{(x^{d-i}f_{2})_{d+t-2}S}_{2}=s^{\frac{d+\deg f_{2}-t}{2}-(i-1)}\norm{f_{2}}_{2,s}.
\end{equation*}
Thus, if $t\neq d$, then Hadamard's determinant theorem~\cite{hadamard1893} implies that
\begin{equation}\label{eqn:volAt1-bnd}
	\vol^{2}(A_{1}S)\leq\prod^{d-t+1}_{i=2}s^{d-t-\deg f_{2}-2(i-1)}\norm{f_{2}}^{2}_{2,s}=\left(s^{(\deg f_{2}-1)}\norm{f_{2}}^{2}_{2,s}\right)^{d-t}.
\end{equation}
This inequality also holds trivially if $t=d$. The angle between the row vectors of $A_{i}S$ is $\theta_{s}(f_{1},f_{2})$ and $\norm{(x^{t-i}f_{1})_{d+t-2}S}_{2}=s^{(t/2)-(i-1)}\norm{f_{1}}_{2,s}$ for $i=2,\ldots,t$. Therefore, by computing $(A_{i}S)^{T}(A_{i}S)$ or by viewing $\vol A_{i}S$ as the area of the parallelogram generated by the row vectors of $A_{i}S$, it follows that
\begin{equation*}
	\vol(A_{i}S)%
	=s^{t-\frac{d-\deg f_{2}}{2}-2(i-1)}\norm{f_{1}}_{2,s}\norm{f_{2}}_{2,s}\left|\sin\theta_{s}(f_{1},f_{2})\right|%
	\quad\text{for $i=2,\ldots,t$}.
\end{equation*}
Combining this equation with \eqref{eqn:volAt-bnd} and \eqref{eqn:volAt1-bnd} yields the inequality
\begin{equation*}
	\vol^{2}(B_{t}S)\leq\left(\left|\sin\theta_{s}(f_{1},f_{2})\right|\norm{f_{1}}_{2,s}\right)^{2(t-1)}\left(s^{(\deg f_{2}-t)/2}\norm{f_{2}}_{2,s}\right)^{2(d-1)}.
\end{equation*}
Combining this inequality with \eqref{eqn:volStS} (recalling that $\vol^{2}(B_{t}S)=\vol^{2}(\mathrm{S}_{t}S)$) and computing roots yields \eqref{eqn:ct-upper-bnd}. Hence, as $t$ and $s$ were chosen arbitrarily, it follows that \eqref{eqn:ct-upper-bnd} holds for $t=\deg f_{2},\ldots,d$ and all $s>0$.
\end{proof}

\begin{remark}\label{rmk:ct-upper-bnd} The last equality of \eqref{eqn:volStS} holds for $t=\deg f_{2},\ldots,d$ and all $s>0$. Thus, the inequalities
\begin{equation*}
  \norm{\vec{c}_{t}}^{2}_{2,s^{-1}}%
  \geq\Delta(\mathrm{S}_{t}(f_{1},f_{2}))^{2}\left(s^{d+t-2}+s^{d+t-4}+\ldots+s^{-(d+t-2)}\right)%
  \geq\Delta(\mathrm{S}_{t}(f_{1},f_{2}))^{2}
\end{equation*}
hold for $t=\deg f_{2},\ldots,d$ and all $s>0$.
\end{remark}

Property~\eqref{kernelCondition} of Theorem~\ref{thm:polys-to-gp} is now proved. Combining this property with Property~\eqref{isGP} of theorem shows that the coefficient vectors of each pair of nonlinear number field sieve polynomials appear in the kernel of a matrix of the form \eqref{eqn:partialC} for some nonzero length $2d-1$ geometric progression $[c_{2d-2},\ldots,c_{0}]$, where $d$ is the maximum degree of the polynomials.

\begin{lemma}\label{lem:kernelCondition} The vectors $(f_{1})^{T}_{d}$ and $(f_{2})^{T}_{d}$ belong to the kernel of $\partial C_{t}$ for $t=\deg f_{2},\ldots,d$.
\end{lemma}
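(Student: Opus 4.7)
The plan is to verify the lemma by direct entry-by-entry computation, using only the defining property $\mathrm{S}_t(f_1,f_2)\vec{c}_t^T=\vec{0}$ from~\eqref{eqn:ct-in-kernel}. That identity encodes $d+t-2$ linear relations on the entries of $\vec{c}_t$: the first $t-1$ of them come from the rows $x^{t-2}f_1,\ldots,f_1$ of $\mathrm{S}_t$, and the remaining $d-1$ come from the rows $x^{d-2}f_2,\ldots,f_2$. The goal is to recognise the coordinates of $(\partial C_t)(f_\ell)^T_d$ as exactly some of these relations.

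Following the pattern of $\partial C$ introduced at the start of Section~\ref{sec:montgomerys-method}, $\partial C_t$ is the $(t-1)\times(d+1)$ Hankel-type matrix whose $(i,j)$-entry is $c_{t,d+t-i-j}$. Writing $f_\ell=\sum_{m=0}^{d}a_{\ell,m}x^m$, a direct expansion shows that the $r$th coordinate of $(\partial C_t)(f_\ell)^T_d$ equals $\sum_{m=0}^{d}a_{\ell,m}\,c_{t,t-1-r+m}$ for $r=1,\ldots,t-1$. On the other hand, the dot product of the row $x^{t-1-r}f_1$ of $\mathrm{S}_t$ with $\vec{c}_t^T$, after rewriting the minors via the identity $\mathrm{M}_{t,j}=c_{t,d+t-1-j}$, equals precisely this same sum with $\ell=1$. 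Consequently $(\partial C_t)(f_1)^T_d=\vec{0}$ by~\eqref{eqn:ct-in-kernel}.

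The case of $f_2$ proceeds in the same way, except that the matching row of $\mathrm{S}_t$ is $x^{d-t+r-1}f_2$. Since $1\leq r\leq t-1$ and $2\leq t\leq d$, the exponent $d-t+r-1$ always lies in $\{0,1,\ldots,d-2\}$, so the required row is present among the $f_2$-rows of $\mathrm{S}_t$, and its dot product with $\vec{c}_t^T$ vanishes by~\eqref{eqn:ct-in-kernel}. Hence $(f_2)^T_d$ also lies in the kernel of $\partial C_t$, completing the proof.

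The only real difficulty is the bookkeeping: carefully aligning the row indices of $\partial C_t$, the shifted-coefficient pattern in the rows of $\mathrm{S}_t$, and the two labellings $\mathrm{M}_{t,i}$ versus $c_{t,k}$ for the entries of $\vec{c}_t$. Once these correspondences are set up, no machinery beyond~\eqref{eqn:ct-in-kernel} is needed and the lemma follows at once.
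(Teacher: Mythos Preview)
Your approach is exactly the paper's: identify each coordinate of $(\partial C_t)(f_\ell)^T_d$ with a specific coordinate of $\mathrm{S}_t(f_1,f_2)\vec{c}_t^T$ and invoke~\eqref{eqn:ct-in-kernel}. One bookkeeping slip: in the $f_2$ case the matching row of $\mathrm{S}_t$ is $x^{t-1-r}f_2$, not $x^{d-t+r-1}f_2$. Indeed, the dot product of $(x^{e}f_2)_{d+t-2}$ with $\vec{c}_t^T$ equals $\sum_{m}a_{2,m}c_{t,e+m}$, so matching $\sum_m a_{2,m}c_{t,t-1-r+m}$ forces $e=t-1-r$; equivalently, the $r$th entry of $(\partial C_t)(f_2)^T_d$ equals the $(d+r-1)$th entry of $\mathrm{S}_t\vec{c}_t^T$, precisely as in the paper. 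Since $t-1-r\in\{0,\ldots,t-2\}\subseteq\{0,\ldots,d-2\}$, the required $f_2$-row is still present and the argument goes through unchanged.
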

\begin{proof} The $i$th entry of $\partial C_{t}\cdot(f_{1})^{T}_{d}$ is equal to the $i$th entry of $\mathrm{S}_{t}(f_{1},f_{2})\cdot\vec{c}^{T}_{t}$ for $i=1,\ldots,t-1$. Similarly, the $i$th entry of $\partial C_{t}\cdot(f_{2})^{T}_{d}$ is equal to the $(d+i-1)$th entry of $\mathrm{S}_{t}(f_{1},f_{2})\cdot\vec{c}^{T}_{t}$ for $i=1,\ldots,t-1$. Thus, \eqref{eqn:ct-in-kernel} implies that $(f_{1})^{T}_{d}$ and $(f_{2})^{T}_{d}$ are in the kernel of $\partial C_{t}$ for $t=\deg f_{2},\ldots,d$.
\end{proof}

Denote by $\adj A$ the adjoint (or adjugate) of a square matrix $A$. The following lemma, from which the remaining properties of Theorem~\ref{thm:polys-to-gp} are deduced, shows that the adjoint of the Bezout matrix has entries which, up to sign, are minors of the Sylvester matrix:
\begin{lemma}\label{lem:adjBez} Let $g_{1}=\sum^{d}_{i=0}u_{i}x^{i}$ and $g_{2}=\sum^{d}_{i=0}v_{i}x^{i}$ such that $u_{0},\ldots,u_{d}$ and $v_{0},\ldots,v_{d}$ are algebraically independent indeterminates over $\Z$. Then
\begin{equation}\label{eqn:adjBez}
  \adj\bez(g_{1},g_{2})=(-1)^{d(d-1)/2}\left(\mathrm{M}_{d,i+j-1}(g_{1},g_{2})\right)_{i=1,\ldots,d; j=1,\ldots,d}.
\end{equation}
\end{lemma}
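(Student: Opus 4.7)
The plan is to apply Lemma~\ref{lem:HankelInverse} to the matrix $M:=\bigl(\mathrm{M}_{d,i+j-1}(g_1,g_2)\bigr)_{i,j=1,\ldots,d}$ with $(f_1,f_2)=(g_1,g_2)$. Since \eqref{eqn:adjBez} is a polynomial identity in the indeterminates $u_i,v_j$, it suffices by the polynomial identity principle to establish it over $\mathbb{R}$ on a Zariski-dense subset where $M$ is nonsingular. By construction $M$ is a Hankel matrix because its $(i,j)$-entry depends only on $i+j$.

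First I would show that $(g_1)^T_d$ and $(g_2)^T_d$ lie in the kernel of $\partial M$. A change of summation variable identifies the $i$-th entry of $\partial M\cdot(g_1)^T_d$ with $\sum_{l=0}^{d}u_l\,\mathrm{M}_{d,i+d-l}(g_1,g_2)$, which is precisely the $i$-th entry of $\mathrm{S}_d(g_1,g_2)\vec{c}_d(g_1,g_2)^T$ (corresponding to the row $x^{d-1-i}g_1$ of $\mathrm{S}_d$) and thus vanishes by~\eqref{eqn:ct-in-kernel}; an identical argument applied to the rows $x^{d-1-k}g_2$ of $\mathrm{S}_d$ handles $(g_2)^T_d$.

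To invoke Lemma~\ref{lem:HankelInverse} I also need $M$ nonsingular. I would establish this by specialising to $g_1=x^d$ and $g_2=1$: the rows of $\mathrm{S}_d(x^d,1)$ are then the standard basis vectors $e_1,\ldots,e_{d-1},e_{d+1},\ldots,e_{2d-1}$, so every maximal minor of $\mathrm{S}_d(x^d,1)$ vanishes except the one obtained by deleting column $d$, giving $\mathrm{M}_{d,j}(x^d,1)=(-1)^{d+1}\delta_{j,d}$. Hence $M$ specialises to $(-1)^{d+1}$ times the anti-diagonal exchange matrix, which has nonzero determinant, so $\det M\not\equiv 0$ as a polynomial. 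On any real specialisation where $M$ is invertible, $\ker\partial M$ has dimension~$2$, and the algebraic independence of the coefficients ensures $(g_1)^T_d$ and $(g_2)^T_d$ are linearly independent, hence a basis of $\ker\partial M$. Lemma~\ref{lem:HankelInverse} then yields $\bez(g_1,g_2)\cdot M=-\det\psi\cdot I_d$, where $\psi$ is the $2\times 2$ matrix of that lemma built from $h_k=\mathrm{M}_{d,k}(g_1,g_2)$.

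Finally, directly expanding $\det\psi$ and then substituting $i\mapsto 2d-i$ in the Laplace expansion $\res(g_1,g_2)=(-1)^{d-1}\sum_{i=1}^{d}(u_iv_0-u_0v_i)\mathrm{M}_{d,2d-i}(g_1,g_2)$ recorded in the discussion following Theorem~\ref{thm:polys-to-gp} yields $\det\psi=(-1)^{d-1}\res(g_1,g_2)$. Combining with~\eqref{eqn:det-bezout}, which gives $\det\bez(g_1,g_2)=(-1)^{d(d+1)/2}\res(g_1,g_2)$ since both $g_i$ are formally of degree $d$, together with $\bez\cdot\adj\bez=\det\bez\cdot I_d$, the sign computation $(-1)^d/(-1)^{d(d+1)/2}=(-1)^{d(d-1)/2}$ then produces $\adj\bez(g_1,g_2)=(-1)^{d(d-1)/2}M$, which is~\eqref{eqn:adjBez}. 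The main obstacles are the re-indexing bookkeeping required to identify $\partial M\cdot(g_i)^T_d$ with entries of $\mathrm{S}_d(g_1,g_2)\vec{c}_d(g_1,g_2)^T$, and the careful sign tracking when relating $\det\psi$, $\det\bez(g_1,g_2)$, and $\res(g_1,g_2)$.
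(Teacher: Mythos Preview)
Your argument is correct and takes a genuinely different route from the paper's. The paper proves the key identity $\bez(g_1,g_2)\cdot M=(-1)^d\res(g_1,g_2)\cdot\id_d$ by expanding each entry of the product as a determinant $\det\mathrm{H}_{i,j}$ of an explicit $(2d-1)\times(2d-1)$ matrix, then showing via row operations that the off-diagonal determinants vanish and the diagonal ones equal $(-1)^d\res(g_1,g_2)$ by comparison with an auxiliary $2d\times 2d$ matrix $\overline{\mathrm{H}}_i$. You instead recognise $M$ as a Hankel matrix, verify via~\eqref{eqn:ct-in-kernel} that the coefficient vectors of $g_1,g_2$ lie in $\ker\partial M$, and invoke Lemma~\ref{lem:HankelInverse} directly to obtain the same identity, reducing the remaining work to the evaluation of $\det\psi$ via the Laplace expansion already recorded in the paper.

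Your approach is more structural: it exhibits Lemma~\ref{lem:adjBez} as essentially a reformulation of the Heinig--Rost inversion formula, and it avoids the explicit determinant manipulations with the matrices $\mathrm{H}_{i,j}$ and $\overline{\mathrm{H}}_i$. The paper's approach, by contrast, is self-contained in the sense that it does not rely on Lemma~\ref{lem:HankelInverse}, and hence would survive even if that lemma were removed; it also makes the connection to the Sederberg--Goldman--Du identity for the diagonal entries explicit. Two small points worth tightening in your write-up: the Laplace expansion formula you cite is stated in the paper only for integer polynomials with $2\le\deg f_2\le\deg f_1$, so you should remark that it is a universal determinantal identity and therefore holds verbatim for the generic polynomials; and when you pass to a Zariski-dense real locus you need both $\det M\neq 0$ and linear independence of $(g_1)^T_d,(g_2)^T_d$, the latter being automatic on (say) the open set $u_dv_{d-1}\neq u_{d-1}v_d$.
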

\begin{proof} Let $g_{1}=\sum^{d}_{i=0}u_{i}x^{i}$ and $g_{2}=\sum^{d}_{i=0}v_{i}x^{i}$ such that $u_{0},\ldots,u_{d}$ and $v_{0},\ldots,v_{d}$ are algebraically independent indeterminates over $\Z$. Then the algebraic independence of the coefficients implies that $\res(g_{1},g_{2})\in\Z[u_{0},\ldots,u_{d},v_{0},\ldots,v_{d}]$ is nonzero. Therefore, equation~\eqref{eqn:det-bezout} implies that it is sufficient to show that
\begin{equation}\label{eqn:adjBez-sufficient}
    \bez(g_{1},g_{2})\cdot\left(\mathrm{M}_{d,i+j-1}(g_{1},g_{2})\right)_{\begin{subarray}{l}i=1,\ldots,d\\ j=1,\ldots,d\end{subarray}}=(-1)^{d}\,\res(g_{1},g_{2})\cdot \id_{d},
\end{equation}
where $\id_{n}$ denotes the $n\times n$ identity matrix for all integers $n\geq 1$. 

Following \eqref{eqn:bezout-pols}, define
\begin{equation}\label{eqn:bezout-pols-generic}
  p_{i+1}=\left(\sum^{i}_{k=0}v_{d-i+k}x^{k}\right)g_{1}-\left(\sum^{i}_{k=0}u_{d-i+k}x^{k}\right)g_{2}\quad\text{for $i=0,\ldots,d-1$}.
\end{equation}
Then $\bez(g_{1},g_{2})=(p_{1},\ldots,p_{d})$. Define
\begin{equation}\label{eqn:adjBez-Hii-def}
   \mathrm{H}_{i,j}=\left(x^{d-j}p_{i},x^{d-2}g_{1},\ldots,g_{1},x^{d-2}g_{2},\ldots,g_{2}\right)\quad\text{for $1\leq i,j\leq d$}.
\end{equation}
The matrices $\mathrm{H}_{i,j}$ are square of order $2d-1$ since $\deg g_{i}=d$ for $i=1,2$, and
\begin{equation*}
  p_{i}=(u_{d-i}v_{d}-v_{d-i}u_{d})x^{d-1}+\text{lower order terms}\quad\text{for $i=1,\ldots,d$}.
\end{equation*}
Expanding the determinant of each matrix $\mathrm{H}_{i,j}$ along its first row shows that
\begin{equation*}
   \bez(g_{1},g_{2})\cdot\left(\mathrm{M}_{d,i+j-1}(g_{1},g_{2})\right)_{\begin{subarray}{l}i=1,\ldots,d\\ j=1,\ldots,d\end{subarray}}=\left(\det\mathrm{H}_{i,j}\right)_{\begin{subarray}{l}i=1,\ldots,d\\ j=1,\ldots,d\end{subarray}}.
\end{equation*} 
It follows from \eqref{eqn:bezout-pols-generic} that
\begin{equation}\label{eqn:adjBez-pi-identity}
  x^{d-j}p_{i}=\left(\sum^{d+i-j-1}_{k=d-j}v_{k-i+j+1}x^{k}\right)g_{1}-\left(\sum^{d+i-j-1}_{k=d-j}u_{k-i+j+1}x^{k}\right)g_{2}
\end{equation}
for $1\leq i,j\leq d$. Therefore, for indices $i$ and $j$ such that $1\leq i<j\leq d$, the determinant $\det\mathrm{H}_{i,j}$ is zero since the polynomial $x^{d-j}p_{i}$ is a linear combination of the polynomials $x^{d-2}g_{1},\ldots,g_{1}$ and $x^{d-2}g_{2},\ldots,g_{2}$. Similarly, for $1\leq j<i\leq d$,
\begin{align*}
  x^{d-j}p_{i}%
  &=x^{i-j-1}\left(g_{2}-\sum^{d-i}_{k=0}v_{k}x^{k}\right)g_{1}-x^{i-j-1}\left(g_{1}-\sum^{d-i}_{k=0}u_{k}x^{k}\right)g_{2}\\
  &=-\left(\sum^{d-j-1}_{k=i-j-1}v_{k-i+j+1}x^{k}\right)g_{1}+\left(\sum^{d-j-1}_{k=i-j-1}u_{k-i+j+1}x^{k}\right)g_{2},
\end{align*}
Thus, $\det\mathrm{H}_{i,j}=0$ for $1\leq j<i\leq d$. Consequently,
\begin{equation}\label{eqn:adjBez-diag}
   \bez(g_{1},g_{2})\cdot\left(\mathrm{M}_{d,i+j-1}(g_{1},g_{2})\right)_{\begin{subarray}{l}i=1,\ldots,d\\ j=1,\ldots,d\end{subarray}}=\diag\left(\det\mathrm{H}_{1,1},\ldots,\det\mathrm{H}_{d,d}\right).
\end{equation} 

It is now shown that
\begin{equation}\label{eqn:adjBez-Hii}
  \det\mathrm{H}_{i,i}=(-1)^{d}\,\res(g_{1},g_{2})\quad\text{for $i=1,\ldots,d$}.
\end{equation}
The special case $i=d$ has been proved, up to sign, by Sederberg, Goldman and Du~\cite[Proposition~2.3]{sederberg97}. Their arguments are modified to obtain \eqref{eqn:adjBez-Hii}. In particular, the proof proceeds by computing the determinants of the following matrices two ways:
\begin{equation}
  \overline{\mathrm{H}}_{i}=\left(x^{d-i}p_{i},x^{d-2}g_{1},\ldots,g_{1},x^{d-1}g_{2},\ldots,g_{2}\right)\quad\text{for $i=1,\ldots,d$}.
\end{equation}

Substituting $i=j$ into \eqref{eqn:adjBez-pi-identity} shows that
\begin{equation*}
  x^{d-i}p_{i}=\left(\sum^{d-1}_{k=d-i}v_{k+1}x^{k}\right)g_{1}-\left(\sum^{d-1}_{k=d-i}u_{k+1}x^{k}\right)g_{2}\quad\text{for $i=1,\ldots,d$}.
\end{equation*}
Therefore, by performing elementary row operations on $\overline{\mathrm{H}}_{i}$, it follows that
\begin{equation}\label{eqn:adjBez-Hii-proof1}
  \det\overline{\mathrm{H}}_{i}=\det\left(v_{d}x^{d-1}g_{1},x^{d-2}g_{1},\ldots,g_{1},x^{d-1}g_{2},\ldots,g_{2}\right)=v_{d}\res(g_{1},g_{2})
\end{equation}
for $i=1,\ldots,d$. The first column vector of $\overline{\mathrm{H}}_{i}$ contains $v_{d}$ in the $(d+1)$th coordinate and zeros elsewhere. Furthermore, the submatrix of $\overline{\mathrm{H}}_{i}$ obtained by deleting its first column and $(d+1)$th row is equal to $\mathrm{H}_{i,i}$. Therefore, expanding the determinant of $\overline{\mathrm{H}}_{i}$ along its first column shows that
\begin{equation}\label{eqn:adjBez-Hii-proof2}
  \det\overline{\mathrm{H}}_{i}=(-1)^{d}v_{d}\det\mathrm{H}_{i,i}\quad\text{for $i=1,\ldots,d$}.
\end{equation}
Hence, combining \eqref{eqn:adjBez-Hii-proof1} and \eqref{eqn:adjBez-Hii-proof2} implies that \eqref{eqn:adjBez-Hii} holds. Then combining \eqref{eqn:adjBez-diag} and \eqref{eqn:adjBez-Hii} implies that \eqref{eqn:adjBez-sufficient} holds.
\end{proof}

Property~\eqref{rankCondition} of Theorem~\ref{thm:polys-to-gp} is now deduced from Lemma~\ref{lem:adjBez} by specialising the coefficients of the generic polynomials $g_{1}$ and $g_{2}$:

\begin{corollary}\label{cor:adjBez}  The adjoint of $\bez(f_{1},f_{2})$ is 
\begin{equation}\label{eqn:adjBez-spec}
  \adj\bez(f_{1},f_{2})=(-1)^{d(d-1)/2}C_{d}.
\end{equation}
Consequently, if $f_{1}$ and $f_{2}$ are coprime, then $C_{t}$ has full rank for $t=\deg f_{2},\ldots,d$.
\end{corollary}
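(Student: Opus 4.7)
The plan is to handle the two assertions separately: the first follows from Lemma~\ref{lem:adjBez} by the standard principle that polynomial identities in the coefficients specialise, and the second reduces the general $t$ to the case $t=d$ by iterating the recurrence \eqref{eqn:Mti-recursion}.

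For the identity \eqref{eqn:adjBez-spec}, observe that the entries of $\bez(g_{1},g_{2})$ are polynomial in the coefficients of $g_{1}$ and $g_{2}$ by \eqref{eqn:bezout-pols-generic}, hence so are the entries of $\adj\bez(g_{1},g_{2})$, and the signed minors $\mathrm{M}_{d,i+j-1}(g_{1},g_{2})$ are evidently polynomial as well. Thus \eqref{eqn:adjBez} holds as an equality of matrices with entries in $\Z[u_{0},\ldots,u_{d},v_{0},\ldots,v_{d}]$, and remains valid under the ring homomorphism $u_{i}\mapsto a_{1,i}$, $v_{i}\mapsto a_{2,i}$ (with $a_{2,i}$ taken to be zero for $i>\deg f_{2}$). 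Using the correspondence $c_{d,2d-1-k}=\mathrm{M}_{d,k}$ (which is immediate from the definitions of $\vec{c}_{d}$ and $\mathrm{M}_{d,k}$), one checks that $(\mathrm{M}_{d,i+j-1}(f_{1},f_{2}))_{i,j=1,\ldots,d}=C_{d}$, so the specialisation is precisely \eqref{eqn:adjBez-spec}.

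For the full rank claim, suppose $f_{1}$ and $f_{2}$ are coprime, so that $\res(f_{1},f_{2})\neq 0$. Then \eqref{eqn:det-bezout} gives $\det\bez(f_{1},f_{2})\neq 0$; since $\adj\bez(f_{1},f_{2})$ equals $\det\bez(f_{1},f_{2})\cdot\bez(f_{1},f_{2})^{-1}$, it is invertible, and \eqref{eqn:adjBez-spec} then shows that $C_{d}$ has full rank. For $t\in\{\deg f_{2},\ldots,d-1\}$, the second clause of \eqref{eqn:Mti-recursion} translates into $c_{t'+1,j}=-a_{1,d}c_{t',j}$ for all $j\in\{0,\ldots,d+t'-2\}$ (valid for $t'\geq\deg f_{2}$), and iterating from $t$ up to $d$ yields $c_{d,j}=(-a_{1,d})^{d-t}c_{t,j}$ for every $j\in\{0,\ldots,d+t-2\}$. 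This range covers every index $d+t-i-j$ that appears in $C_{t}$, and the reindexing $i\mapsto i+d-t$ identifies the bottom $t$ rows of $C_{d}$ with $(-a_{1,d})^{d-t}C_{t}$. Since $a_{1,d}=\lc(f_{1})\neq 0$ and the rows of the invertible matrix $C_{d}$ are linearly independent, $C_{t}$ inherits $t$ linearly independent rows and hence has full rank.

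The main obstacle is the index bookkeeping in the second part: one must carefully determine the range on which the recurrence propagates when iterated, verify that all indices appearing in $C_{t}$ lie within this range, and correctly align the rows of $C_{t}$ with the bottom $t$ rows of $C_{d}$. By comparison, the specialisation argument for \eqref{eqn:adjBez-spec} is essentially routine once it is observed that both sides of \eqref{eqn:adjBez} are matrices of polynomials, and the deduction that $C_{d}$ has full rank is immediate from \eqref{eqn:det-bezout}.
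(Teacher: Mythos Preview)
Your proof is correct and follows essentially the same approach as the paper: specialise the generic identity of Lemma~\ref{lem:adjBez} via the evaluation homomorphism to obtain \eqref{eqn:adjBez-spec}, deduce nonsingularity of $C_{d}$ from that of $\bez(f_{1},f_{2})$ via \eqref{eqn:det-bezout}, and then use the recurrence to identify the last $t$ rows of $C_{d}$ with $(-\lc(f_{1}))^{d-t}C_{t}$. The only cosmetic difference is that the paper cites the block recurrence~\eqref{eqn:St-recursion} rather than its entry-wise consequence~\eqref{eqn:Mti-recursion}, and your index bookkeeping is spelled out in slightly more detail.
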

\begin{proof} Let $\mathbb{A}=\Z[u_{0},\ldots,u_{d},v_{0},\ldots,v_{d}]$ such that $u_{0},\ldots,u_{d}$ and $v_{0},\ldots,v_{d}$ are algebraically independent indeterminates over $\Z$. Set $g_{1}=\sum^{d}_{i=0}u_{i}x^{i}\in\mathbb{A}[x]$ and $g_{2}=\sum^{d}_{i=0}v_{i}x^{i}\in\mathbb{A}[x]$. Then \eqref{eqn:adjBez} holds. Define the evaluation homomorphism $\varphi:\mathbb{A}\rightarrow\Z$ by $u_{i}\mapsto a_{1,i}$ and $v_{i}\mapsto a_{2,i}$ for $i=0,\ldots,d$. Extend $\varphi$ entry-wise to matrices. As $\deg f_{1}=d$, it holds that $\varphi(\adj\bez(g_{1},g_{2}))=\adj\bez(f_{1},f_{2})$ and $\varphi(\mathrm{M}_{d,t}(g_{1},g_{2}))=\mathrm{M}_{d,t}(f_{1},f_{2})$ for $i=1,\ldots,2d-1$. Therefore, the $\varphi$-image of the each side of \eqref{eqn:adjBez} is equal to its respective side of \eqref{eqn:adjBez-spec}.

Suppose now that $f_{1}$ and $f_{2}$ are coprime. Then \eqref{eqn:det-bezout} implies that $\bez(f_{1},f_{2})$ is nonsingular. Thus, \eqref{eqn:adjBez-spec} implies that $C_{d}$ is nonsingular. If $t\in\Z$ such that $\deg f_{2}\leq t\leq d$, then the recurrence relation~\eqref{eqn:St-recursion} implies that the submatrix of $C_{d}$ formed by its last $t$ rows is equal to $(-\lc(f_{1}))^{d-t}C_{t}$. As $\lc(f_{1})$ is nonzero, it follows that $C_{t}$ has full rank for $t=\deg f_{2},\ldots,d$.
\end{proof}

All that remains in the proof of Theorem~\ref{thm:polys-to-gp} is to establish the lower bound on $\norm{\vec{c}_{t}}_{2,s^{-1}}$ stated in Property~\eqref{GPsizeBounds} of the theorem. The remainder of this section is dedicated to the proof of this property, which proceeds as follows: first, the volume of $(\partial^{k}C_{t})S$, where $S$ is an arbitrary nonsingular matrix, is computed; then, for an appropriate choice of $S$, the volume of $(\partial^{k}C_{t})S$ is bounded above by a power of $\norm{\vec{c}_{t}}_{2,s^{-1}}$, providing a lower bound on $\norm{\vec{c}_{t}}_{2,s^{-1}}$; and finally, by examining a special case of this bound, the lower bound stated in Property~\eqref{GPsizeBounds} is proved.

Let $A$ be an $m\times n$ matrix. For all subsets $I\subseteq\{1,\ldots,m\}$ and $J\subseteq\{1,\ldots,n\}$, define $A_{I,J}$ to be the $\left|I\right|\times\left|J\right|$ submatrix of $A$ formed by the intersection of the rows that have indices in $I$ with the columns that have indices in $J$. If $m=n$, and $\{I,I'\}$ and $\{J,J'\}$ are partitions of $\{1,\ldots,n\}$ such that $\left|I\right|=\left|J\right|$, then Jacobi (see \cite[Section~42]{aitken1956} or \cite{brualdi1983}) showed that
\begin{equation}\label{eqn:jacobi-identity}
	\det\left(\adj A\right)_{I,J}%
	=(-1)^{\sum_{i'\in I'}i'+\sum_{j'\in J'}j'}%
	\left(\det A\right)^{|I|-1}%
	\det\left(A^{T}\right)_{I',J'}.
\end{equation}
The following technical lemma is proved by repeatedly applying this identity:

\begin{lemma}\label{lem:vol-adjoint} Suppose that $A$ and $S$ are $n\times n$ matrices such that $n\geq 2$ and $S$ is invertible. Then, for each partition $\{I,I'\}$ of $\{1,\ldots,n\}$,
\begin{equation*}\label{eqn:vol-adjoint}
	\vol^{2}\left(\left(\adj A\right)_{I,\{1,\ldots,n\}} S\right)%
	=\left(\det A\right)^{2(|I|-1)}\left(\det S\right)^{2}\vol^{2}\left(\left(A^{T}\right)_{I',\{1,\ldots,n\}} S^{-T}\right).
\end{equation*}
\end{lemma}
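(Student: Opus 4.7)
The plan is to reduce the statement to the case where $A$ is invertible and then compute directly, exploiting the identity $\adj A = (\det A)A^{-1}$. The reduction is by a Zariski density argument: with $S$ fixed and invertible, each entry of $(\adj A)_{I,\{1,\ldots,n\}} S$ is polynomial in the entries of $A$, hence so is its $\vol^{2}$; the right-hand side $(\det A)^{2(|I|-1)}(\det S)^{2}\vol^{2}((A^{T})_{I',\{1,\ldots,n\}} S^{-T})$ is manifestly polynomial in $A$ as well. It therefore suffices to verify the identity on the Zariski-open set of invertible matrices.

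For invertible $A$, I would rewrite $(\adj A)_{I,\{1,\ldots,n\}} S = (\det A)(A^{-1}S)_{I,\{1,\ldots,n\}}$, pulling a factor of $(\det A)^{2|I|}$ out of the left-hand side. Setting $T = A^{-1}S$ (so that $\det T = \det S/\det A$ and $T^{-T} = A^{T}S^{-T}$), Cauchy--Binet applied to the $|I|\times n$ matrix $T_{I,\{1,\ldots,n\}}$ gives
\begin{equation*}
\vol^{2}\left(T_{I,\{1,\ldots,n\}}\right)=\sum_{|J|=|I|}\left(\det T_{I,J}\right)^{2}.
\end{equation*}
To evaluate each minor $\det T_{I,J}$, I would invoke \eqref{eqn:jacobi-identity} on $T^{-1}$ and combine with $\adj(T^{-1}) = T/\det T$ to rearrange into
\begin{equation*}
\det T_{I,J}=(-1)^{\sum_{i'\in I'}i'+\sum_{j'\in J'}j'}(\det T)\,\det\left(T^{-T}\right)_{I',J'},
\end{equation*}
at which point the sign disappears upon squaring.

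Squaring and summing over $J$ — equivalently, over complements $J'$ of size $n-|I|$ — and then applying Cauchy--Binet in reverse to the $(n-|I|)\times n$ matrix $(T^{-T})_{I',\{1,\ldots,n\}} = (A^{T})_{I',\{1,\ldots,n\}} S^{-T}$ would collapse the sum to $(\det T)^{2}\,\vol^{2}((A^{T})_{I',\{1,\ldots,n\}} S^{-T})$. Substituting $\det T = \det S/\det A$ and restoring the prefactor $(\det A)^{2|I|}$ produces the claimed identity.

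The main obstacle I anticipate is simply bookkeeping: correctly translating \eqref{eqn:jacobi-identity} through the substitution $B = T^{-1}$ (using $\adj(T^{-1}) = T/\det T$ to identify the left-hand side with $T_{I,J}/\det T$ and tracking the exponent on $\det T$), and matching the index sets $J$ and $J'$ across the two successive Cauchy--Binet expansions. The Zariski reduction itself is routine once the polynomial structure of both sides in $A$ is noted.
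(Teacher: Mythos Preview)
Your proof is correct. Both your argument and the paper's hinge on the same triad of identities—Cauchy--Binet, Jacobi's identity~\eqref{eqn:jacobi-identity}, Cauchy--Binet again—but they package the auxiliary matrix differently. You reduce to invertible $A$ by Zariski density, set $T=A^{-1}S$, and apply Jacobi to $T^{-1}$; this is natural because $\adj A=(\det A)A^{-1}$ is the obvious move once $A$ is invertible. The paper instead sets $B=\adj(S)\,A$ and applies Jacobi directly to $B$, then unwinds using $\adj B=(\det S)^{n-2}(\adj A)\,S$ and $B^{T}=A^{T}\adj(S)^{T}=(\det S)\,A^{T}S^{-T}$. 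The payoff of the paper's substitution is that it never inverts $A$, so the identity holds verbatim over any commutative ring in which $S$ is invertible, with no density argument required; your route is slightly more transparent conceptually but needs the extra step to pass from the generic case to all $A$. Either way the bookkeeping you flagged (tracking the exponent of $\det T$ through $\adj(T^{-1})=T/\det T$, and matching $J\leftrightarrow J'$ between the two Cauchy--Binet sums) goes through exactly as you outlined.
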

\begin{proof} Suppose that $A$ and $S$ are $n\times n$ matrices such that $n\geq 2$ and $S$ is invertible. Let $\{I,I'\}$ be a partition of $\{1,\ldots,n\}$. Set $B=\adj(S)A$, $\mathcal{J}=\{J\subseteq\{1,\ldots,n\}\mid |J|=|I|\}$ and $\mathcal{J}'=\{\{1,\ldots,n\}\setminus J\mid J\in\mathcal{J}\}$. Then the Binet--Cauchy formula implies that
\begin{equation*}
	\vol^{2}\left(\adj B\right)_{I,\{1,\ldots,n\}}%
	=\sum_{J\in\mathcal{J}}\left(\det\left(\adj B\right)_{I,J}\right)^{2}.
\end{equation*}
Using \eqref{eqn:jacobi-identity} to compute each summand on the right hand side shows that
\begin{equation*}
	\vol^{2}\left(\adj B\right)_{I,\{1,\ldots,n\}}%
	=\left(\det B\right)^{2(|I|-1)}\sum_{J'\in\mathcal{J}'}\left(\det \left(B^{T}\right)_{I',J'}\right)^{2}.
\end{equation*}
Using the Binet--Cauchy formula to compute the sum on the right hand side yields
\begin{equation}\label{eqn:vol-adjoint-Id}
	\vol^{2}\left(\adj B\right)_{I,\{1,\ldots,n\}}%
	=\left(\det B\right)^{2(|I|-1)}\vol^{2}\left(B^{T}\right)_{I',\{1,\ldots,n\}}.
\end{equation}
If $X$ and $Y$ are $n\times n$ matrices, then $(XY)_{K,\{1,\ldots,n\}}=X_{K,\{1,\ldots,n\}}Y$ for all $K\subseteq\{1,\ldots,n\}$. It follows that
\begin{equation*}
	\vol^{2}\left(\adj B\right)_{I,\{1,\ldots,n\}}=%
	\left(\det S\right)^{2(n-2)|I|}\vol^{2}\left(\left(\adj A\right)_{I,\{1,\ldots,n\}} S\right)
\end{equation*}
and
\begin{equation*}
	\vol^{2}\left(B^{T}\right)_{I',\{1,\ldots,n\}}=(\det S)^{2(n-|I|)}\vol^{2}\left(\left(A^{T}\right)_{I',\{1,\ldots,n\}} S^{-T}\right).
\end{equation*}
Substituting these values and $\det B=(\det S)^{n-1}\det A$ into \eqref{eqn:vol-adjoint-Id} completes the proof.
\end{proof}

\begin{lemma}\label{lem:gp-vol} Let $t,k\in\Z$ such that $\deg f_{2}\leq t\leq d$ and $0\leq k<t$, and $S$ be a real nonsingular $(d+k)\times(d+k)$ matrix. Then
\begin{multline}\label{eqn:gp-vol}
  \vol\left(\left(\partial^{k}C_{t}\right)S\right)%
  =\left|\det S\right|\left|\lc(f_{1})^{t-\deg f_{2}}\res(f_{1},f_{2})\right|^{t-k-1}\\
  \cdot\vol\big(\big(\underbrace{x^{k-1}f_{1},\ldots,f_{1}\vphantom{\big(}}_{\text{$k$ terms}},\underbrace{x^{d-t+k-1}f_{2},\ldots,f_{2}\vphantom{\big(}}_{\text{$d-t+k$ terms}}\big)_{d+k-1}S^{-T}\big).
\end{multline}
\end{lemma}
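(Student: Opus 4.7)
The plan is to apply Lemma~\ref{lem:vol-adjoint} to a suitable $(d+k)\times(d+k)$ matrix $A$. First I would verify by direct computation that the $2k+d-t$ rows of $M=(x^{k-1}f_1,\ldots,f_1,x^{d-t+k-1}f_2,\ldots,f_2)_{d+k-1}$ all lie in the column kernel of $\partial^k C_t$: the $i$-th entry of $\partial^k C_t\cdot(x^{k-i'}f_1)_{d+k-1}^T$ coincides, under a reindexing, with the $(i+i'-1)$-th entry of $\mathrm{S}_t(f_1,f_2)\vec{c}_t^T=\vec{0}$ from~\eqref{eqn:ct-in-kernel}, with an analogous computation for the shifts of $f_2$, exactly as in the proof of Lemma~\ref{lem:kernelCondition}. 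A dimension count then shows these rows span the kernel exactly when $\partial^k C_t$ has full rank.

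In the case $k=0$, I would take $A=\bez(f_1,f_2)$: Corollary~\ref{cor:adjBez} identifies $(\adj\bez)_{\{d-t+1,\ldots,d\}}=(-1)^{d(d-1)/2}(-\lc(f_1))^{d-t}C_t$, while \eqref{eqn:bezout-pols} shows that the first $d-t$ rows of $\bez(f_1,f_2)$ equal $U_tM$ for a $(d-t)\times(d-t)$ lower anti-triangular matrix $U_t$ with $|\det U_t|=|\lc(f_1)|^{d-t}$. Substituting into Lemma~\ref{lem:vol-adjoint} with $I=\{d-t+1,\ldots,d\}$, using $|\det\bez(f_1,f_2)|=|\lc(f_1)^{d-d_2}\res(f_1,f_2)|$ from~\eqref{eqn:det-bezout}, and collecting powers of $\lc(f_1)$ via the algebraic identity $(d-d_2)(t-1)-(d-t)(t-1)=(t-d_2)(t-1)$ yields~\eqref{eqn:gp-vol} for $k=0$. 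For $k>0$, the matrix $A$ would be built analogously: its first $d+2k-t$ columns are $M^T$, and the remaining $t-k$ columns are coefficient vectors of Bezout-type polynomial combinations of $f_1,f_2$ in formal degree $d+k-1$ that extend~\eqref{eqn:bezout-pols}, chosen so that $\det A=\pm\lc(f_1)^{t-d_2}\res(f_1,f_2)$ and $(\adj A)_{\{d+2k-t+1,\ldots,d+k\}}$ is a scalar multiple of $\partial^k C_t$; Lemma~\ref{lem:vol-adjoint} then delivers~\eqref{eqn:gp-vol} upon taking square roots.

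The main obstacle is the explicit construction of the extra $t-k$ columns of $A$ for $k>0$: the underlying Bezout-type polynomials have to be chosen so that the target value of $\det A$ and the required adjoint identification $(\adj A)_I=\pm\mu\,\partial^k C_t$ hold simultaneously, and one must verify that the scalar cancellations analogous to those in the $k=0$ case continue to produce the exponent $t-d_2$ (rather than $d-d_2$) of $\lc(f_1)$ in the final formula.
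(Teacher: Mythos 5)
Your $k=0$ argument is essentially correct and arrives at the same volume identity as the paper's (the paper gets it by specialising its generic-ring computation at $k=0$). The bookkeeping you allude to works out, but only if you note that scaling a matrix with $m$ rows by a scalar $\lambda$ multiplies $\vol^{2}$ by $\lambda^{2m}$: the identity $(\adj\bez(f_{1},f_{2}))_{\{d-t+1,\ldots,d\}}=\pm\lc(f_{1})^{d-t}C_{t}$ (with $t$ rows) contributes $\lc(f_{1})^{2(d-t)t}$, the factor $U_{t}$ contributes $\lc(f_{1})^{2(d-t)}$, and only the difference $-2(d-t)(t-1)$ cancels against the $2(d-\deg f_{2})(t-1)$ coming from $\det\bez(f_{1},f_{2})$.

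For $k>0$ there is a genuine gap, and the obstacle you flag is not merely one of finding the right polynomials. If the first $d+2k-t$ columns of $A$ are exactly $M^{T}$ and $(\adj A)_{I}=\mu\,\partial^{k}C_{t}$, then $A^{-1}A=\id$ forces $(\partial^{k}C_{t})A_{\cdot,I}=(\det A/\mu)\id_{t-k}$; writing $A_{\cdot,I}=\nu X_{0}+(\text{columns of }M^{T})$ for any fixed $X_{0}$ with $(\partial^{k}C_{t})X_{0}=\id_{t-k}$ gives $\det A=\nu^{t-k}\det[M^{T}\,|\,X_{0}]$ and $\mu=\nu^{t-k-1}\det[M^{T}\,|\,X_{0}]$, so the quantity $(\det A)^{t-k-1}/\mu^{t-k}$ that Lemma~\ref{lem:vol-adjoint} hands you equals $1/\det[M^{T}\,|\,X_{0}]$ regardless of which ``extra'' columns you pick. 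Your stipulations $\det A=\pm\lc(f_{1})^{t-\deg f_{2}}\res(f_{1},f_{2})$ and $|\mu|=1$ therefore amount to re-asserting the lemma's conclusion in the form $\det[M^{T}\,|\,X_{0}]=\pm\bigl(\lc(f_{1})^{t-\deg f_{2}}\res(f_{1},f_{2})\bigr)^{-(t-k-1)}$, which still has to be proved by some other means. Note also that your own $k=0$ solution does not meet these desiderata: there $\det A=\pm\lc(f_{1})^{d-\deg f_{2}}\res(f_{1},f_{2})$, the scalar is $\mu=\pm\lc(f_{1})^{d-t}$, and $(A^{T})_{I'}=U_{t}M$ rather than $M$ — so the $k>0$ plan is not even the analogue of what you did for $k=0$. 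The paper sidesteps all of this by building the $(d+k)\times(d+k)$ matrix $B$ over the polynomial ring with generic coefficients $u_{0},\ldots,u_{d},v_{0},\ldots,v_{d}$, where $g_{2}$ has formal degree $d$ and $\adj\bez(g_{1},g_{2})$ has the clean form of Lemma~\ref{lem:adjBez}; it then applies Lemma~\ref{lem:vol-adjoint} there, specialises via $\varphi$, and finally performs a row reduction converting $\bigl(x^{k-1}f_{1},\ldots,f_{1},\tilde{\varphi}(p_{1}),\ldots,\tilde{\varphi}(p_{d-t+k})\bigr)_{d+k-1}$ into $M$ with the explicit volume ratio $\lc(f_{1})^{2(d-t+k)}$. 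That final row-reduction step, which absorbs the excess powers of $\lc(f_{1})$, is the ingredient missing from your proposal.
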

\begin{proof} Let $t,k\in\Z$ such that $\deg f_{2}\leq t\leq d$ and $0\leq k<t$, and $S$ be a real nonsingular $(d+k)\times(d+k)$ matrix. Let $\mathbb{A}=\R[u_{0},\ldots,u_{d},v_{0},\ldots,v_{d}]$ where $u_{0},\ldots,u_{d}$ and $v_{0},\ldots,v_{d}$ are algebraically independent indeterminates over $\R$. Define $g_{1}=\sum^{d}_{i=0}u_{i}x^{i}$, $g_{2}=\sum^{d}_{i=0}v_{i}x^{i}$ and $p_{1},\ldots,p_{d}\in\mathbb{A}[x]$ by \eqref{eqn:bezout-pols-generic}. Then $\bez(g_{1},g_{2})=(p_{1},\ldots,p_{d})$. Define a $k\times(d+k)$ matrix $G$ and a $(d+k)\times(d+k)$ matrix $B$ as follows:
\begin{equation*}
	G=\left(x^{k-1}g_{1},x^{k-2}g_{1},\ldots,g_{1}\right)_{d+k-1}%
	\quad\text{and}\quad%
	B=\left(\!\!\begin{array}{c|c}
	 	G^{T} & \!\!\begin{array}{c} 0_{k\times d}\\ \bez(g_{1},g_{2}) \end{array}
	\end{array}\!\!\!\!\right),
\end{equation*}
where $0_{m\times n}$ denotes the $m\times n$ matrix of zeros for all integers $m,n\geq 0$. The upper $k\times k$ submatrix of $G^{T}$ is lower triangular, with each entry on its diagonal equal to $u_{d}$. Thus, \eqref{eqn:det-bezout} implies that $\det B=u^{k}_{d}(-1)^{d(d+1)/2}\res(g_{1},g_{2})\in\mathbb{A}$, which is nonzero since $u_{0},\ldots,u_{d}$ and $v_{0},\ldots,v_{d}$ are algebraically independent over $\R$.

If $k\geq 1$, then \eqref{eqn:ct-in-kernel} implies that
\begin{align*}
	G\cdot\left(\mathrm{M}_{d,i+j-1}(g_{1},g_{2})\right)^{T}_{\begin{subarray}{l}j=1,\ldots,d+k\end{subarray}}%
	=\mathrm{S}_{d}(g_{1},g_{2})_{\{i,\ldots,i+k-1\},\{1,\ldots,2d-1\}}\cdot\vec{c}_{d}(g_{1},g_{2})^{T}=\vec{0}_{k}
\end{align*}
for $i=1,\ldots,d-k$. Consequently, Lemma~\ref{lem:adjBez} implies that
\begin{equation*}
	(-1)^{d(d-1)/2}u^{k}_{d}\left(\mathrm{M}_{d,i+j-1}(g_{1},g_{2})\right)_{\begin{subarray}{l}i=1,\ldots,d-k\\ j=1,\ldots,d+k\end{subarray}}\cdot B%
	=\begin{pmatrix}
		0_{(d-k)\times 2k} & \det B\cdot \id_{d-k}
	\end{pmatrix}.
\end{equation*}
As $B$ is nonsingular, it follows that
\begin{equation*}
	\left(\adj B\right)_{\{2k+1,\ldots,d+k\},\{1,\ldots,d+k\}}=(-1)^{d(d-1)/2}u^{k}_{d}\left(\mathrm{M}_{d,i+j-1}(g_{1},g_{2})\right)_{\begin{subarray}{l}i=1,\ldots,d-k\\ j=1,\ldots,d+k\end{subarray}}.
\end{equation*}
Therefore, on the one hand,
\begin{multline}\label{eqn:vol-adjB-i}
	\vol^{2}\left(\left(\adj B\right)_{\{d-t+2k+1,\ldots,d+k\},\{1,\ldots,d+k\}}S\right)\\
	=u^{2k(t-k)}_{d}\vol^{2}\left(\left(\mathrm{M}_{d,d-t+i+j-1}(g_{1},g_{2})\right)_{\begin{subarray}{l}i=1,\ldots,t-k\\ j=1,\ldots,d+k\end{subarray}}S\right).
\end{multline}
On the other hand, as $\bez(g_{1},g_{2})$ is symmetric (which is deduced from Lemma~\ref{lem:adjBez} by noting that $\adj\bez(g_{1},g_{2})$ is symmetric), Lemma~\ref{lem:vol-adjoint} implies that
\begin{multline}\label{eqn:vol-adjB-ii}
	\vol^{2}\left(\left(\adj B\right)_{\{d-t+2k+1,\ldots,d+k\},\{1,\ldots,d+k\}}S\right)\\
	=\left(u^{k}_{d}\res(g_{1},g_{2})\right)^{2(t-k-1)}(\det S)^{2}\\
 \cdot\vol^{2}\left(\left(x^{k-1}g_{1},\ldots,g_{1},p_{1},\ldots,p_{d-t+k}\right)_{d+k-1}S^{-T}\right).
\end{multline}

Write $f_{1}=\sum^{d}_{i=0}a_{1,i}x^{i}$ and $f_{2}=\sum^{d}_{i=0}a_{2,i}x^{i}$ such that the coefficients $a_{i,j}$ are integers. Define the evaluation homomorphism $\varphi:\mathbb{A}\rightarrow\R$ by $u_{i}\mapsto a_{1,i}$ and $v_{i}\mapsto a_{2,i}$ for $i=0,\ldots,d$. Then $\varphi(\res(g_{1},g_{2}))=a^{d-\deg f_{2}}_{1,d}\res(f_{1},f_{2})$. Extend $\varphi$ entry-wise to matrices and let $\tilde{\varphi}:\mathbb{A}[x]\rightarrow\R[x]$ be the natural extension of $\varphi$. Then
\begin{align*}
  \varphi\left(\left(\mathrm{M}_{d,d-t+i+j-1}(g_{1},g_{2})\right)_{\begin{subarray}{l}i=1,\ldots,t-k\\ j=1,\ldots,d+k\end{subarray}}\right)%
  &=\left(\mathrm{M}_{d,d-t+i+j-1}(f_{1},f_{2})\right)_{\begin{subarray}{l}i=1,\ldots,t-k\\ j=1,\ldots,d+k\end{subarray}}\\
  &=(-a_{1,d})^{d-t}\cdot\partial^{k}C_{t},
\end{align*}
where the final equality follows from the recurrence relation~\eqref{eqn:Mti-recursion}. Therefore, computing the $\varphi$-images of \eqref{eqn:vol-adjB-i} and \eqref{eqn:vol-adjB-ii} and equating shows that
\begin{multline}\label{eqn:gp-vol-ii}
  \vol^{2}\left(\left(\partial^{k}C_{t}\right)S\right)%
 =\left(\det S\right)^{2}\left(a^{t-\deg f_{2}}_{1,d}\res(f_{1},f_{2})\right)^{2(t-k-1)}a^{-2(d-t+k)}_{1,d}\\
 \cdot\vol^{2}\left(\left(x^{k-1}f_{1},\ldots,f_{1},\tilde{\varphi}(p_{1}),\ldots,\tilde{\varphi}(p_{d-t+k})\right)_{d+k-1}S^{-T}\right).
\end{multline}

From the definition of $p_{1},\ldots,p_{d}$, it follows that if $\deg f_{2}<d$, then
\begin{equation*}
  \tilde{\varphi}(p_{i})=-f_{2}\cdot\sum^{i-1}_{j=0}a_{1,d-i+1+j}x^{j}\quad\text{for $i=1,\ldots,d-\deg f_{2}$}.
\end{equation*}
Furthermore, if $d-t+k>d-\deg f_{2}$, then
\begin{equation*}
  \tilde{\varphi}(p_{d-\deg f_{2}+i})=\left(\sum^{i-1}_{j=0}a_{2,\deg f_{2}-i+1+j}x^{j}\right)f_{1}-\left(\sum^{d-\deg f_{2}+i-1}_{j=0}a_{1,\deg f_{2}-i+1+j}x^{j}\right)f_{2}
\end{equation*}
for $i=1,\ldots,\deg f_{2}-t+k$. As $\deg f_{2}-t+k-1\leq k-1$, it follows that
\begin{multline*}
	\vol^{2}\left(\left(x^{k-1}f_{1},\ldots,f_{1},\tilde{\varphi}(p_{1}),\ldots,\tilde{\varphi}(p_{d-t+k})\right)_{d+k-1}S^{-T}\right)\\
	=a^{2(d-t+k)}_{1,d}\cdot\vol^{2}\left(\left(x^{k-1}f_{1},\ldots,f_{1},x^{d-t+k-1}f_{2},\ldots,f_{2}\right)_{d+k-1}S^{-T}\right).
\end{multline*}
Substituting this equation into \eqref{eqn:gp-vol-ii} and taking roots yields \eqref{eqn:gp-vol}.
\end{proof}

\begin{lemma}\label{lem:ct-lower-bnd} Let $t,k\in\Z$ such that $\deg f_{2}\leq t\leq d$ and $0\leq k<t$. Then
\begin{multline}\label{eqn:ct-lower-bnd}
	\norm{\vec{c}_{t}}^{t-k}_{2,s^{-1}}%
	\geq s^{-\frac{t(d-t+k)+dk}{2}}\left|\lc(f_{1})^{t-\deg f_{2}}\res(f_{1},f_{2})\right|^{t-k-1}\\
	\cdot\vol\left(x^{k-1}f_{1}(sx),\ldots,f_{1}(sx),x^{d-t+k-1}f_{2}(sx),\ldots,f_{2}(sx)\right)_{d+k-1}
\end{multline}
for all $s>0$.
\end{lemma}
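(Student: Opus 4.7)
The plan is to apply Lemma~\ref{lem:gp-vol} with the diagonal matrix $S=\diag(s,s^{2},\ldots,s^{d+k})$, then bound $\vol((\partial^{k}C_{t})S)$ from above via Hadamard's inequality. Write $G$ for the matrix $(x^{k-1}f_{1},\ldots,f_{1},x^{d-t+k-1}f_{2},\ldots,f_{2})_{d+k-1}$, and $\widetilde{G}$ for the corresponding matrix with each $f_{j}$ replaced by $f_{j}(sx)$ (the one appearing on the right-hand side of the statement). For this choice of $S$, $\det S=s^{(d+k)(d+k+1)/2}$ and $S^{-T}=\diag(s^{-1},\ldots,s^{-(d+k)})$.

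First I would express $\vol(G\cdot S^{-T})$ in terms of $\vol(\widetilde{G})$. Let $\sigma_{i}$ denote the shift (exponent of $x$) in the polynomial labelling the $i$th row of $G$. An entry-wise comparison gives $\widetilde{G}_{i,j}=G_{i,j}\,s^{d+k-j-\sigma_{i}}$ and $(G\cdot S^{-T})_{i,j}=G_{i,j}\,s^{-j}$, so $G\cdot S^{-T}=\diag(s^{\sigma_{i}-d-k})\cdot\widetilde{G}$ and hence $\vol(G\cdot S^{-T})=s^{\Theta'-(d-t+2k)(d+k)}\vol(\widetilde{G})$, where $\Theta'=\sum_{i}\sigma_{i}=k(k-1)/2+(d-t+k)(d-t+k-1)/2$. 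Substituting into Lemma~\ref{lem:gp-vol} produces an explicit identity expressing $\vol((\partial^{k}C_{t})S)$ as a product of $\vol(\widetilde{G})$, $|\lc(f_{1})^{t-\deg f_{2}}\res(f_{1},f_{2})|^{t-k-1}$, and a specific power of $s$.

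For the upper bound, note that $(\partial^{k}C_{t})_{i,j}=\mathrm{M}_{t,i+j-1}$, so the $i$th row of $(\partial^{k}C_{t})S$ has $j$th entry $\mathrm{M}_{t,i+j-1}s^{j}$ and squared $2$-norm $\sum_{j=1}^{d+k}\mathrm{M}_{t,i+j-1}^{2}s^{2j}$. Reindexing by $\ell=i+j-1$ rewrites this sum as $s^{d+t-2i+2}\sum_{\ell=i}^{i+d+k-1}\mathrm{M}_{t,\ell}^{2}s^{2\ell-d-t}$, which is at most $s^{d+t-2i+2}\norm{\vec{c}_{t}}_{2,s^{-1}}^{2}$ as a sub-sum of the full norm. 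Hadamard's inequality together with $\sum_{i=1}^{t-k}(d+t-2i+2)=(t-k)(d+k+1)$ then gives $\vol((\partial^{k}C_{t})S)\leq s^{(t-k)(d+k+1)/2}\norm{\vec{c}_{t}}_{2,s^{-1}}^{t-k}$.

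Combining the identity from Lemma~\ref{lem:gp-vol} with this Hadamard bound and rearranging produces the claimed lower bound. The main obstacle is the bookkeeping of the $s$-exponents: the powers of $s$ coming from $\det S$, from the factor $\Theta'-(d-t+2k)(d+k)$ in $\vol(G\cdot S^{-T})$, and from the Hadamard step must combine and simplify to exactly $-(t(d-t+k)+dk)/2$, which is an elementary but intricate algebraic calculation.
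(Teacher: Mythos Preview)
Your approach is correct and essentially the same as the paper's: both apply Lemma~\ref{lem:gp-vol} with a diagonal scaling matrix, convert $\vol(GS^{-T})$ to $\vol(\widetilde{G})$ via a further diagonal factor, and bound $\vol((\partial^{k}C_{t})S)$ above by Hadamard's inequality using that each row is a sub-vector of the scaled $\vec{c}_{t}$. The only difference is organizational: the paper packages the row- and column-scalings into three explicit diagonal matrices $S_{1},S_{2},S_{3}$ (with $S_{1}$ normalizing each row of $\partial^{k}C_{t}$ to have length at most $\norm{\vec{c}_{t}}_{2,s^{-1}}$ directly), whereas you track the $s$-exponents by hand, but the computations are equivalent.
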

\begin{proof} Let $t,k\in\Z$ such that $\deg f_{2}\leq t\leq\deg f_{1}$ and $0\leq k<t$. For a real number $s>0$, define 
\begin{equation*}
	S_{1}=s^{-\frac{d+t-2}{2}}\diag\left(1,s,\ldots,s^{t-k-1}\right),%
	\quad%
	S_{2}=\diag\left(1,s,\ldots,s^{d+k-1}\right)
\end{equation*}
and
\begin{equation*}
	S_{3}=\diag\big(\underbrace{s^{-d},\ldots,s^{-(d+k-1)}}_{\text{$k$ terms}},s^{-t},\ldots,s^{-(d+k-1)}\big).
\end{equation*}
Then
\begin{multline*}
	\left(x^{k-1}f_{1},\ldots,f_{1},x^{d-t+k-1}f_{2},\ldots,f_{2}\right)_{d+k-1}S^{-T}_{2}\\
	=S_{3}\left(x^{k-1}f_{1}(sx),\ldots,f_{1}(sx),x^{d-t+k-1}f_{2}(sx),\ldots,f_{2}(sx)\right)_{d+k-1}.
\end{multline*}
Thus, Lemma~\ref{lem:gp-vol} with $S=S_{2}$ implies that
\begin{multline}\label{cor:vol-identity-gp-bnd-i}
  \vol\left(S_{1}\left(\partial^{k}C_{t}\right)S_{2}\right)\\
  =\left|\det S_{1}\right|\left|\det S_{2}\right|\left|\det S_{3}\right|\left|\lc(f_{1})^{t-\deg f_{2}}\res(f_{1},f_{2})\right|^{t-k-1}\\
  \cdot\vol\left(x^{k-1}f_{1}(sx),\ldots,f_{1}(sx),x^{d-t+k-1}f_{2}(sx),\ldots,f_{2}(sx)\right)_{d+k-1}.
\end{multline}

Recall that $\vec{c}_{t}=(c_{t,d+t-2},\ldots,c_{t,0})$ and $C_{t}=\left(c_{t,d+t-i-j}\right)_{i=1,\ldots,t;j=1,\ldots,d}$. Thus,
\begin{equation*}
	S_{1}\left(\partial^{k}C_{t}\right)S_{2}%
	=\left(c_{t,d+t-2-(i+j-2)}s^{(i+j-2)-\frac{d+t-2}{2}}\right)_{i=1,\ldots,t-1; j=1,\ldots,d+1}.
\end{equation*}
Therefore, the row vectors of $S_{1}\left(\partial^{k}C_{t}\right)S_{2}$ each have Euclidean length bounded by $\norm{\vec{c}_{t}}_{2,s^{-1}}$. Consequently, Hadamard's determinant theorem implies that
\begin{equation}\label{cor:vol-identity-gp-bnd-ii}
	\vol\left(S_{1}\left(\partial^{k}C_{t}\right)S_{2}\right)%
	\leq\norm{\vec{c}_{t}}^{t-1}_{2,s^{-1}}.
\end{equation}
Calculating the determinants of $S_{1}$, $S_{2}$ and $S_{3}$ yields
\begin{equation}\label{cor:vol-identity-gp-bnd-iii}
	\left|\det S_{1}\right|\left|\det S_{2}\right|\left|\det S_{3}\right|%
	=s^{-\frac{t(d-t+k)+dk}{2}}.
\end{equation}
Combining \eqref{cor:vol-identity-gp-bnd-i}, \eqref{cor:vol-identity-gp-bnd-ii} and \eqref{cor:vol-identity-gp-bnd-iii} gives \eqref{eqn:ct-lower-bnd}, which completes the proof since $s$ was chosen arbitrarily.
\end{proof}

To end the section, two corollaries to Lemma~\ref{lem:ct-lower-bnd} are given. The first corollary establishes the lower bound on $\norm{\vec{c}_{t}}_{2,s^{-1}}$ stated in Property~\eqref{GPsizeBounds} of Theorem~\ref{thm:polys-to-gp}, completing the proof of the theorem. The second corollary is utilised in the next section as part of the analysis of Montgomery's method.

\begin{corollary}\label{lem:ct-lower-bnd-k0} The inequality
\begin{equation*}
	\norm{\vec{c}_{t}}_{2,s^{-1}}%
	\geq s^{\frac{t-d}{2}}\left|s^{\deg f_{2}}\lc(f_{2})\right|^{\frac{d}{t}-1}\left|\lc(f_{1})^{t-\deg f_{2}}\res(f_{1},f_{2})\right|^{1-\frac{1}{t}}
\end{equation*}
holds for $t=\deg f_{2},\ldots,d$ and all $s>0$.
\end{corollary}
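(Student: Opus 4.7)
The strategy is to specialise Lemma~\ref{lem:ct-lower-bnd} to $k=0$ and then extract a lower bound on the resulting volume factor by exhibiting a single $(d-t)\times(d-t)$ minor of large size.

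Setting $k=0$ in \eqref{eqn:ct-lower-bnd} gives
\begin{equation*}
\norm{\vec{c}_{t}}^{t}_{2,s^{-1}}\geq s^{-\frac{t(d-t)}{2}}\left|\lc(f_{1})^{t-\deg f_{2}}\res(f_{1},f_{2})\right|^{t-1}\cdot\vol\left(x^{d-t-1}f_{2}(sx),\ldots,f_{2}(sx)\right)_{d-1}
\end{equation*}
for all $s>0$, where the volume factor is interpreted as $1$ when $t=d$. Upon taking $t$-th roots, it suffices to show that
\begin{equation*}
\vol\left(x^{d-t-1}f_{2}(sx),\ldots,f_{2}(sx)\right)_{d-1}\geq\left|s^{\deg f_{2}}\lc(f_{2})\right|^{d-t}.
\end{equation*}

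To establish this, I would view the $(d-t)\times d$ matrix above as a row-shift matrix whose $i$th row contains the coefficients of $x^{d-t-i}f_{2}(sx)$ padded to formal degree $d-1$. Writing $m=\deg f_{2}$, each row is a shifted copy of the coefficient vector of $f_{2}(sx)$, with leading coefficient $s^{m}\lc(f_{2})$. Selecting the columns indexed $t+1-m,t+2-m,\ldots,d-m$ (i.e., aligning the chosen columns with the leading coefficient of each successive row) yields a $(d-t)\times(d-t)$ submatrix that is upper triangular, with every diagonal entry equal to $s^{m}\lc(f_{2})$. Its determinant is therefore $(s^{\deg f_{2}}\lc(f_{2}))^{d-t}$, and the Binet--Cauchy formula, which expresses $\vol^{2}$ as a sum of squared $(d-t)\times(d-t)$ minors, gives the desired lower bound on the volume.

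Combining these two ingredients and simplifying the exponents of $s$ (noting $-\tfrac{d-t}{2}+(d-t)\cdot\tfrac{1}{t}\cdot\deg f_{2}$ contributes to the $\left|s^{\deg f_{2}}\lc(f_{2})\right|^{d/t-1}$ factor after the $t$-th root is taken) yields the stated bound. The only step requiring any care is the column selection that makes the submatrix upper triangular; once that observation is made, the rest is bookkeeping.
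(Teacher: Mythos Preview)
Your proposal is correct and follows essentially the same approach as the paper: specialise Lemma~\ref{lem:ct-lower-bnd} to $k=0$, then bound the volume below by exhibiting the upper triangular $(d-t)\times(d-t)$ submatrix on columns $t-\deg f_{2}+1,\ldots,d-\deg f_{2}$ (your indices $t+1-m,\ldots,d-m$ with $m=\deg f_{2}$) and invoking Binet--Cauchy. The exponent bookkeeping in your final paragraph is routine and yields the stated bound.
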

\begin{proof} For $t\in\Z$ such that $\deg f_{2}\leq t\leq d$ and $s>0$, the $(d-t)\times(d-t)$ submatrix of $(x^{d-t-1}f_{2}(sx),\ldots,f_{2}(sx))_{d-1}$ formed by columns $t-\deg f_{2}+1,\ldots,d-\deg f_{2}$ is upper triangular with $s^{\deg f_{2}}\lc(f_{2})$ in each diagonal entry. By applying the Binet--Cauchy formula, it follows that $\vol^{2}(x^{d-t-1}f_{2}(sx),\ldots,f_{2}(sx))_{d-1}$ is equal to $(s^{\deg f_{2}}\lc(f_{2}))^{2(d-t)}$ plus some sum of squares. Thus, for all $s>0$,
\begin{equation*}
	\vol\left(x^{d-t-1}f_{2}(sx),\ldots,f_{2}(sx)\right)_{d-1}%
	\geq\left|s^{\deg f_{2}}\lc(f_{2})\right|^{d-t}%
	\quad\text{for $t=\deg f_{2},\ldots,d$}.
\end{equation*}
Substituting these inequalities into \eqref{eqn:ct-lower-bnd} for $t=\deg f_{2},\ldots,d$ and $k=0$ completes the proof.
\end{proof}

\begin{corollary}\label{lem:ct-lower-bnd-td-k1} The inequality
\begin{equation*}
	\norm{\vec{c}_{d}}^{d-1}_{2,s^{-1}}%
	\geq s^{\frac{\deg f_{2}-d}{2}}\left|\lc(f_{1})^{d-\deg f_{2}}\res(f_{1},f_{2})\right|^{d-2}%
	\left|\sin\theta_{s}(f_{1},f_{2})\right|\norm{f_{1}}_{2,s}\norm{f_{2}}_{2,s}
\end{equation*}
holds for all $s>0$.
\end{corollary}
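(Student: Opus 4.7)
The plan is to apply Lemma~\ref{lem:ct-lower-bnd} in the special case $t=d$ and $k=1$, then explicitly evaluate the volume factor that appears on the right-hand side. With these choices, the generic bound reads
\begin{equation*}
	\norm{\vec{c}_{d}}^{d-1}_{2,s^{-1}}
	\geq s^{-d}\left|\lc(f_{1})^{d-\deg f_{2}}\res(f_{1},f_{2})\right|^{d-2}
	\cdot\vol\left(f_{1}(sx),f_{2}(sx)\right)_{d},
\end{equation*}
since $t(d-t+k)+dk=2d$, the block of $f_{1}$-rows collapses to the single row $(f_{1}(sx))_{d}$ (as $k-1=0$), and the block of $f_{2}$-rows collapses to the single row $(f_{2}(sx))_{d}$ (as $d-t+k-1=0$).

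The main remaining task is to identify the volume of this $2\times(d+1)$ matrix with the expression in the statement. Since a volume of a two-row matrix is the area of the parallelogram spanned by its rows,
\begin{equation*}
	\vol\left(f_{1}(sx),f_{2}(sx)\right)_{d}
	=\left|\sin\theta_{s}(f_{1},f_{2})\right|\cdot\norm{(f_{1}(sx))_{d}}_{2}\cdot\norm{(f_{2}(sx))_{d}}_{2},
\end{equation*}
where the angle factor matches the definition of $\theta_{s}(f_{1},f_{2})$ given before Lemma~\ref{lem:resultantBound}. I would then convert each Euclidean norm to the skewed norm appearing in the statement: writing $f_{i}=\sum a_{i,j}x^{j}$, one has $(f_{i}(sx))_{d}=(a_{i,d}s^{d},\ldots,a_{i,0})$ (with zero padding when $\deg f_{i}<d$), so
\begin{equation*}
	\norm{(f_{1}(sx))_{d}}_{2}=s^{d/2}\norm{f_{1}}_{2,s},
	\qquad
	\norm{(f_{2}(sx))_{d}}_{2}=s^{\deg f_{2}/2}\norm{f_{2}}_{2,s},
\end{equation*}
using $\deg f_{1}=d$ in the first identity.

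Substituting these three identities back into the inequality from Lemma~\ref{lem:ct-lower-bnd} produces the factor $s^{-d}\cdot s^{d/2}\cdot s^{\deg f_{2}/2}=s^{(\deg f_{2}-d)/2}$, yielding the claim. There is essentially no obstacle here beyond bookkeeping of the exponents of $s$; the substantive content (the volume identity and the $\res$ factor) is already packaged in Lemma~\ref{lem:ct-lower-bnd}, and the case $t=d$, $k=1$ is designed precisely so that the $2$-row determinant reduces to the sine of the angle between $f_{1}$ and $f_{2}$.
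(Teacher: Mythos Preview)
Your proposal is correct and matches the paper's own proof essentially line for line: both apply Lemma~\ref{lem:ct-lower-bnd} with $t=d$ and $k=1$, identify the resulting two-row volume $\vol(f_{1}(sx),f_{2}(sx))_{d}$ as the parallelogram area $s^{(d+\deg f_{2})/2}\left|\sin\theta_{s}(f_{1},f_{2})\right|\norm{f_{1}}_{2,s}\norm{f_{2}}_{2,s}$, and combine the $s$-powers. The only cosmetic difference is that the paper writes the matrix as $(f_{1}(sx),f_{2}(sx))$ without the subscript $d$, which is the same object since $\deg f_{1}=d$.
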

\begin{proof} For all $s>0$, the Euclidean length of the first row vector of the matrix $\left(f_{1}(sx),f_{2}(sx)\right)$ is $s^{d/2}\norm{f_{1}}_{2,s}$, the Euclidean length of the second row vector is $s^{\deg f_{2}/2}\norm{f_{2}}_{2,s}$, and the angle between the two row vectors is $\theta_{s}(f_{1},f_{2})$. Therefore,
\begin{equation*}
	\vol\left(f_{1}(sx),f_{2}(sx)\right)%
	=s^{\frac{d+\deg f_{2}}{2}}\norm{f_{1}}_{2,s}\norm{f_{2}}_{2,s}\left|\sin\theta_{s}(f_{1},f_{2})\right|%
	\quad\text{for all $s>0$}.
\end{equation*}
Substituting this equation into \eqref{eqn:ct-lower-bnd} for $t=d$ and $k=1$ completes the proof.
\end{proof}

\section{Analysis of Montgomery's method}\label{sec:analysis}

Montgomery's method is analysed in this section, providing criteria for the selection of geometric progressions that yield polynomials with optimal coefficient size and optimal resultant. In particular, the goal of this section is to prove the following theorem, which may be viewed as a converse to Theorem~\ref{thm:polys-to-gp}:
\begin{theorem}\label{thm:gp-to-polys} Let $d\geq 2$ and $\vec{c}=[c_{2d-2},\ldots,c_{0}]\in\Z^{2d-1}$ be a geometric progression with ratio $r$ modulo $N$ such that $C=C(c_{2d-2},\ldots,c_{0})$ is nonsingular and $\gcd(c_{0},\ldots,c_{d-2},N)=1$. Then, for $f_{1},f_{2}\in\Z[x]$ such that $2\leq\deg f_{2}\leq\deg f_{1}$ and $\left\{(f_{1})^{T}_{d},(f_{2})^{T}_{d}\right\}$ is a basis for the integer kernel of $\partial C$, the following properties hold:
\begin{enumerate}
	\item\label{degCondition} $\deg f_{1}=d$;
	\item\label{hasCommonRoot} $r$ is a root of $f_{1}$ and $f_{2}$ modulo $N$;
	\item\label{resFormula} $f_{1}$ and $f_{2}$ are coprime, with
	\begin{equation*}
		\left|\res(f_{1},f_{2})\right|=\frac{\left|\det C\right|^{d-1}}{\Delta(\partial C)^{\deg f_{2}}\Delta(\widehat{\partial C})^{d-\deg f_{2}}}%
		\quad\text{and}\quad%
		\Delta(\mathrm{S}_{d}(f_{1},f_{2}))=\Delta(\vec{c})\frac{\left|\det C\right|^{d-2}}{\Delta(\partial C)^{d-1}};
	\end{equation*}
	\item\label{polySizeBounds} the inequalities
	\begin{equation*}
  		\norm{\frac{\vec{c}}{\Delta(\vec{c})}}^{\frac{1}{d-1}}_{2,s^{-1}}%
  		\leq s^{\frac{\deg f_{2}-d}{2}}\left|\sin\theta_{s}(f_{1},f_{2})\right|\norm{f_{1}}_{2,s}\norm{f_{2}}_{2,s}%
  		\leq\frac{1}{N^{d-2}}\norm{\frac{\vec{c}}{\Delta(\vec{c})}}^{d-1}_{2,s^{-1}}
\end{equation*}
	hold for all $s>0$.
\end{enumerate}
\end{theorem}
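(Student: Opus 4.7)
Properties (1) and (2) are essentially contained in the discussion of Section~\ref{sec:montgomerys-method}. Concretely, $\widehat{\partial C}$ coincides with the submatrix of $C$ obtained by deleting its first row, hence has full rank $d-1$ and one-dimensional kernel, which rules out $\deg f_{1}<d$ and $\deg f_{2}<d$ holding simultaneously; and since $c_{k}\equiv c_{0}r^{k}\pmod{N}$, each row of $\partial C$ reduces to a scalar multiple of $(r^{d},r^{d-1},\ldots,1)$ modulo $N$, so the hypothesis $\gcd(c_{0},\ldots,c_{d-2},N)=1$ places $(r^{d},\ldots,1)$ in the row span of $\partial C$ modulo $N$, yielding $f_{i}(r)\equiv 0\pmod{N}$. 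For Property~(3), Lemma~\ref{lem:HankelInverse} applied to the nonsingular Hankel matrix $C$ gives $C^{-1}=-(\det\psi)^{-1}\bez(f_{1},f_{2})$, so $\bez(f_{1},f_{2})$ is nonsingular and \eqref{eqn:det-bezout} delivers both the coprimality of $f_{1}$ and $f_{2}$ and the identity $|\lc(f_{1})|^{d-\deg f_{2}}|\res(f_{1},f_{2})|=|\det\psi|^{d}/|\det C|$.

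The crux of the proof is the auxiliary identity $|\det\psi|\cdot\Delta(\partial C)=|\det C|$. I plan to establish it by forming the $(d+1)\times(d+1)$ matrix $\tilde{E}$ obtained by stacking the $2\times(d+1)$ matrix appearing in the definition of $\psi$ on top of $\partial C$; expanding along the row $(0,\ldots,0,1)$ and permuting rows shows $\det\tilde{E}=\pm\det C$. Writing $\partial C\cdot U=(0\mid H)$ for a $(d+1)\times(d+1)$ unimodular $U$ with $|\det H|=\Delta(\partial C)$, and noting that the first two columns of $U$ form a $\Z$-basis of $\ker\partial C$ and therefore differ from $((f_{1})^{T}_{d},(f_{2})^{T}_{d})$ by right-multiplication by a unimodular $2\times 2$ matrix, the product $\tilde{E}\cdot U$ is block triangular with an upper $2\times 2$ block of absolute determinant $|\det\psi|$ and a lower $(d-1)\times(d-1)$ block of absolute determinant $\Delta(\partial C)$; comparing determinants yields the identity. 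When $\deg f_{2}<d$ I further prove $\Delta(\widehat{\partial C})=|\lc(f_{1})|\,\Delta(\partial C)$ by observing that the saturation of $\ker\partial C$ forces $(a_{2,d-1},\ldots,a_{2,0})$ to be a primitive generator of $\ker\widehat{\partial C}$, so the classical reciprocity $\det(\partial C)_{J}=\pm\Delta(\partial C)\det V_{J^{c}}$ between the minors of a matrix and those of its kernel basis evaluates each minor of $\widehat{\partial C}$ as $\pm\Delta(\partial C)\lc(f_{1})a_{2,d-j}$. The $|\res|$ formula follows by substitution. For $\Delta(\mathrm{S}_{d}(f_{1},f_{2}))$, Lemma~\ref{lem:adjBez} gives $\adj\bez(f_{1},f_{2})=\pm C_{d}$, while inverting the Lemma~\ref{lem:HankelInverse} expression yields $\adj\bez(f_{1},f_{2})=\pm(\det\psi)^{d-1}(\det C)^{-1}C$; equating these Hankel matrices shows $\vec{c}_{d}(f_{1},f_{2})=\pm(\Delta(\mathrm{S}_{d}(f_{1},f_{2}))/\Delta(\vec{c}))\vec{c}$, so $\Delta(\mathrm{S}_{d}(f_{1},f_{2}))=\Delta(\vec{c})|\det\psi|^{d-1}/|\det C|$, which matches the theorem after the substitution $|\det\psi|=|\det C|/\Delta(\partial C)$.

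For Property~(4), the proportionality $\vec{c}/\Delta(\vec{c})=\pm\vec{c}_{d}(f_{1},f_{2})/\Delta(\mathrm{S}_{d}(f_{1},f_{2}))$ obtained above lets me transfer Lemma~\ref{lem:ct-upper-bnd} and Corollary~\ref{lem:ct-lower-bnd-td-k1} to $\vec{c}$. The upper bound drops out since $\Delta(\mathrm{S}_{d}(f_{1},f_{2}))\geq 1$. For the lower bound, dividing the inequality of Corollary~\ref{lem:ct-lower-bnd-td-k1} by $\Delta(\mathrm{S}_{d}(f_{1},f_{2}))^{d-1}$ and substituting the formulas from~(3) collapses the prefactor to $\Delta(\partial C)/\Delta(\vec{c})^{d-1}$, so it suffices to show $\Delta(\partial C)/\Delta(\vec{c})^{d-1}\geq N^{d-2}$. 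This follows from three facts: every entry of $\partial C$ is divisible by $\Delta(\vec{c})$, hence $\Delta(\vec{c})^{d-1}\mid\Delta(\partial C)$; modulo $N$ the matrix $\partial C$ has rank at most one (being congruent to $c_{0}$ times a rank-one tensor), so expanding each $(d-1)\times(d-1)$ minor multilinearly in this rank-one decomposition forces it to be divisible by $N^{d-2}$; and the hypothesis $\gcd(c_{0},\ldots,c_{d-2},N)=1$ gives $\gcd(\Delta(\vec{c}),N)=1$, so the two divisors combine. The main obstacles are the identity $|\det\psi|\cdot\Delta(\partial C)=|\det C|$ and the rank-one-modulo-$N$ divisibility $N^{d-2}\mid\Delta(\partial C)$; both are elementary multilinear-algebra arguments, but each requires careful bookkeeping of the Hankel indexing.
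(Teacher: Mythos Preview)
Your proposal is correct and follows the paper's overall architecture---Properties~(1)--(2) from Section~\ref{sec:montgomerys-method}, Property~(3) via Lemma~\ref{lem:HankelInverse} and the adjoint/Bezout relationship, Property~(4) by transporting Lemma~\ref{lem:ct-upper-bnd} and Corollary~\ref{lem:ct-lower-bnd-td-k1} along the proportionality $\vec{c}_{d}\propto\vec{c}$ and showing $N^{d-2}\Delta(\vec{c})^{d-1}\mid\Delta(\partial C)$---but two of your intermediate steps take a genuinely different route from the paper.

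For the identity $|\det\psi|\,\Delta(\partial C)=|\det C|$, the paper (Lemma~\ref{lem:adjC}) introduces the signed minors $b_{i,j}$ of the bordered matrices $B_{i,j}$, uses Sylvester's identity~\eqref{eqn:sylvester-identity} to relate the $2\times 2$ minors of the adjugate columns to the $b_{i,j}$, and then recognises $\sum_{j}c_{d-j}b_{j,d+1}$ as the last-row cofactor expansion of $\det C$. Your block-triangular argument with the stacked matrix $\tilde{E}$ and the Hermite factorisation $\partial C\cdot U=(0\mid H)$ reaches the same conclusion more directly and avoids Sylvester's identity entirely; it is cleaner, though it hides the explicit structure of $\psi$ that the paper's computation exposes. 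For $\Delta(\widehat{\partial C})=|\lc(f_{1})|\,\Delta(\partial C)$ when $\deg f_{2}<d$, the paper (Lemma~\ref{lem:basis_kerParC} and Corollary~\ref{cor:basis-kerParC}) builds an explicit second basis $\{\vec{b}_{1},\vec{b}_{2}\}$ of the integer kernel from the $b_{i,j}$, checks $\Delta(B)=1$ again via Sylvester's identity, and reads off $\beta_{1,1}=\Delta(\widehat{\partial C})/\Delta(\partial C)$. Your argument instead uses the primitivity of $(a_{2,d-1},\ldots,a_{2,0})$ in $\ker\widehat{\partial C}$ together with the Pl\"ucker-type duality $\det(\partial C)_{J}=\pm\Delta(\partial C)\det V_{J^{c}}$ between the maximal minors of $\partial C$ and those of its kernel basis $V$; this is shorter and more conceptual, but you should state or cite that duality precisely, since the paper does not. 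The remaining steps (the $\Delta(\mathrm{S}_{d})$ formula via $\adj\bez(f_{1},f_{2})=\pm C_{d}$ and the rank-one-mod-$N$ divisibility for Property~(4)) match the paper's Lemma~\ref{lem:resFormula} and Lemma~\ref{lem:polySizeBounds} essentially verbatim.
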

Property~\eqref{degCondition} and Property~\eqref{hasCommonRoot} of Theorem~\ref{thm:gp-to-polys} are proved in Section~\ref{sec:montgomerys-method}. The two remaining properties of the theorem are proved in the next section.

Recall from Section~\ref{sec:montgomerys-method} that in Montgomery's method, the polynomials $f_{1}$ and $f_{2}$ are chosen such that $\{(f_{1}(sx))^{T}_{d},(f_{2}(sx))^{T}_{d}\}$, where $d=\max\{\deg f_{1},\deg f_{2}\}$, is a Lagrange-reduced basis for some $s>0$. It follows that $\left|\sin\theta_{s}(f_{1},f_{2})\right|\geq\sqrt{3}/2$ for the chosen value of $s$ (see \cite[p.~41]{nguyen2010}). Combining the inequalities from Property~\eqref{polySizeBounds} shows that any length $2d-1$ geometric progression $\vec{c}$ that satisfies the condition of the theorem has norm satisfying $\norm{\vec{c}/\Delta(\vec{c})}_{2,s^{-1}}\geq N^{1-1/d}$ for all $s>0$. Thus, Montgomery's method is unforgiving of a poor choice of geometric progression. In particular, the method generates to two degree $d$ polynomials with optimal coefficient size only if a geometric progression of almost minimal size is used.

Property~\eqref{resFormula} of Theorem~\ref{thm:gp-to-polys} may aid the selection of parameters for specific geometric progression constructions by allowing parameters to be tuned so that polynomials with resultant equal to a small multiple of $N$ are obtained. Before completing the proof of Theorem~\ref{thm:gp-to-polys} in the next section, Property~\eqref{resFormula} is used to compute the resultant given by two existing geometric progression constructions:

\begin{example}\label{ex:resFormula} For $d=2$, several authors~\cite{montgomery1993,williams2010,prest2012,koo2011,coxon2011} propose using length $2d-1=3$ geometric progressions of the form
\begin{equation*}
	[c_{2},c_{1},c_{0}]=\left[\frac{am^{2}-kN}{p},am,ap\right],
\end{equation*}
where $a$, $k$, $p$ and $m$ are nonzero integers such that $\gcd(m,p)=1$ and $\gcd(a,N)=1$. Letting $C=C(c_{2},c_{1},c_{0})$, it follows that
\begin{equation*}
	\det C=-akN%
	\quad\text{and}\quad%
	\Delta(\partial C)=\Delta\left([c_{2},c_{1},c_{0}]\right)=\gcd(a,c_{2}).
\end{equation*}
Let $\tilde{a}=a/\gcd(a,c_{2})$ and $\tilde{k}=k/\gcd(a,c_{2})$, with the latter being an integer since $am^{2}-kN=pc_{2}$ and $\gcd(a,N)=1$. Property~\eqref{resFormula} of Theorem~\ref{thm:gp-to-polys} implies that if $f_{1}$ and $f_{2}$ are quadratic polynomials whose coefficient vectors form a basis for the integer kernel of $\partial C$, then $\res(f_{1},f_{2})=\pm\tilde{a}\tilde{k}N$ and $\Delta(\mathrm{S}_{2}(f_{1},f_{2}))=1$.
\end{example}

\begin{example} For $d=3$, Koo, Jo and Kwon~\cite{koo2011} and the author~\cite{coxon2011} propose using length $2d-1=5$ geometric progressions of the form
\begin{equation*}
	[c_{4},c_{3},c_{2},c_{1},c_{0}]=\left[\frac{m(am^3 -kN)}{p^{2}},\frac{am^{3}-kN}{p},am^{2},amp,ap^{2}\right],
\end{equation*}
where $a$, $k$, $p$ and $m$ are nonzero integers such that $\gcd(m,p)=1$ and $\gcd(a,N)=1$. Letting $C=C(c_{4},\ldots,c_{0})$, it follows that $\det C=-a(kN)^{2}$,
\begin{equation*}
	\Delta(\partial C)=\gcd\left(\frac{am^{3}-kN}{p^{2}},am,ap\right)\cdot|k|N=\gcd\left(a,c_{3}/p\right)\cdot|k|N
\end{equation*}
and
\begin{equation*}
	\Delta([c_{4},\ldots,c_{0}])=\gcd\left(m\frac{am^3 -kN}{p^{2}},p\frac{am^{3}-kN}{p^{2}},a\right)=\gcd(a,c_{3}/p).
\end{equation*}
Let $\tilde{a}=a/\gcd(a,c_{3}/p)$ and $\tilde{k}=k/\gcd(a,c_{3}/p)$. Property~\eqref{resFormula} of Theorem~\ref{thm:gp-to-polys} implies that if $f_{1}$ and $f_{2}$ are cubic polynomials whose coefficient vectors form a basis for the integer kernel of $\partial C$, then $\res(f_{1},f_{2})=\pm\tilde{a}^{2}\tilde{k}N$ and $\Delta(\mathrm{S}_{3}(f_{1},f_{2}))=\left|\tilde{a}\right|$.
\end{example}

\subsection{Proof of Theorem~\ref{thm:gp-to-polys}}

Theorem~\ref{thm:polys-to-gp} shows that each pair of nonlinear number field sieve polynomials with maximum degree $d\geq 2$ appears in the kernel of the matrix $\partial C(c_{2d-2},\ldots,c_{0})$ for some geometric progression $[c_{2d-2},\ldots,c_{0}]$. The following lemma shows that such a geometric progression is unique up to scalar multiple, thus allowing results from Section~\ref{sec:proof-polys-to-gp} to be used in the proof of Theorem~\ref{thm:gp-to-polys}:

\begin{lemma}\label{lem:gp-kernel} Let $f_{1}$ and $f_{2}$ be coprime integer polynomials such that $2\leq\deg f_{2}\leq\deg f_{1}$, and $d=\deg f_{1}$. Then the vectors $(f_{1})^{T}_{d}$ and $(f_{2})^{T}_{d}$ are in the kernel of $\partial C(c_{2d-2},\ldots,c_{0})$ for some vector $\vec{c}=(c_{2d-2},\ldots,c_{0})\in\Z^{2d-1}$ if and only if $\vec{c}=\pm(\Delta(\vec{c})/\Delta(\vec{c}_{d}(f_{1},f_{2}))\cdot\vec{c}_{d}(f_{1},f_{2})$.
\end{lemma}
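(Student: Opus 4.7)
The plan is to reduce both directions of the lemma to the structural equivalence
\[
	\left\{(f_1)^T_d,(f_2)^T_d\right\}\subseteq\ker\partial C(\vec{c})
	\quad\Longleftrightarrow\quad
	\mathrm{S}_d(f_1,f_2)\cdot\vec{c}^T=\vec{0},
\]
valid for any $\vec{c}=(c_{2d-2},\ldots,c_0)\in\Z^{2d-1}$. This equivalence is already established in the proof of Lemma~\ref{lem:kernelCondition}: the computation there identifies the $i$th entry of $\partial C(\vec{c})\cdot(f_1)^T_d$ with the $i$th entry of $\mathrm{S}_d(f_1,f_2)\cdot\vec{c}^T$, and the $i$th entry of $\partial C(\vec{c})\cdot(f_2)^T_d$ with the $(d+i-1)$th entry of the same product, for $i=1,\ldots,d-1$. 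The identification is a purely formal consequence of the Hankel structure of $\partial C$ combined with the Sylvester-like structure of $\mathrm{S}_d$, and does not rely on $\vec{c}$ being the specific vector $\vec{c}_d(f_1,f_2)$.

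Given the equivalence, the reverse direction is immediate: if $\vec{c}=\pm(\Delta(\vec{c})/\Delta(\vec{c}_d(f_1,f_2)))\vec{c}_d(f_1,f_2)$, then $\vec{c}^T$ is a scalar multiple of $\vec{c}_d(f_1,f_2)^T$, and the latter lies in $\ker\mathrm{S}_d(f_1,f_2)$ by \eqref{eqn:ct-in-kernel}.

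For the forward direction, the coprimality of $f_1$ and $f_2$ enters through the argument recalled in the paragraph preceding Theorem~\ref{thm:polys-to-gp}: coprimality forces $\syl(f_1,f_2)$ to be nonsingular, and $\mathrm{S}_d(f_1,f_2)$ is obtained from the Sylvester matrix by deleting two rows together with a (zeroed-out) column, hence has full rank $2d-2$ and a one-dimensional kernel. The vector $\vec{c}_d(f_1,f_2)^T/\Delta(\vec{c}_d(f_1,f_2))$ lies in this kernel and, by the definition of $\Delta$ applied to a row vector, has integer entries with greatest common divisor $1$; it is therefore a primitive $\Z$-generator of the integer kernel. Hence every integer vector $\vec{c}$ satisfying $\mathrm{S}_d(f_1,f_2)\vec{c}^T=\vec{0}$ is of the form $n\,\vec{c}_d(f_1,f_2)/\Delta(\vec{c}_d(f_1,f_2))$ for some $n\in\Z$, and comparing the gcd of the entries on each side forces $|n|=\Delta(\vec{c})$. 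No step presents a serious obstacle, as the bulk of the argument relies on calculations already performed in Section~\ref{sec:proof-polys-to-gp}.
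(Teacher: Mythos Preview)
Your approach is essentially the paper's: both reduce the statement to the equivalence between $(f_1)^T_d,(f_2)^T_d\in\ker\partial C(\vec{c})$ and $\vec{c}^T\in\ker\mathrm{S}_d(f_1,f_2)$, then use that $\mathrm{S}_d(f_1,f_2)$ has full rank (hence one-dimensional kernel) with $\vec{c}_d(f_1,f_2)/\Delta(\vec{c}_d(f_1,f_2))$ as a primitive integer generator.

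There is one gap in your justification of the full-rank claim. You invoke the paragraph preceding Theorem~\ref{thm:polys-to-gp}, asserting that $\mathrm{S}_d(f_1,f_2)$ is obtained from $\syl(f_1,f_2)$ by deleting two rows and a zero column. That paragraph, however, assumes $\deg f_1=\deg f_2=d$. When $\deg f_2<d$, the Sylvester matrix has size $(d+\deg f_2)\times(d+\deg f_2)$ while $\mathrm{S}_d(f_1,f_2)$ has size $(2d-2)\times(2d-1)$, so $\mathrm{S}_d$ is not a submatrix of $\syl(f_1,f_2)$ at all. The paper handles the general case by instead observing that $\mathrm{S}_{\deg f_2}(f_1,f_2)$ \emph{is} obtained from $\syl(f_1,f_2)$ by deleting rows $1$ and $\deg f_2+1$ and then the resulting zero first column, so $\mathrm{S}_{\deg f_2}$ has full rank; the block-triangular recurrence~\eqref{eqn:St-recursion} then propagates full rank from $\mathrm{S}_{\deg f_2}$ up to $\mathrm{S}_d$. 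With this correction your argument is complete and coincides with the paper's.
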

\begin{proof} Let $f_{1}$ and $f_{2}$ be coprime integer polynomials such that $2\leq\deg f_{2}\leq\deg f_{1}$, $d=\deg f_{1}$ and $\vec{c}_{d}=\vec{c}_{d}(f_{1},f_{2})$. Let $\vec{c}=(c_{2d-2},\ldots,c_{0})\in\Z^{2d-1}$ and $C=C(c_{2d-2},\ldots,c_{0})$. Then the $i$th entry of $\partial C\cdot(f_{1})^{T}_{d}$ is equal to the $i$th entry of $\mathrm{S}_{d}(f_{1},f_{2})\cdot\vec{c}^{T}$ for $i=1,\ldots,d-1$. Similarly, the $i$th entry of $\partial C\cdot(f_{2})^{T}_{d}$ is equal to the $(d+i-1)$th entry of $\mathrm{S}_{d}(f_{1},f_{2})\cdot\vec{c}^{T}$ for $i=1,\ldots,d-1$. Thus, $(f_{1})^{T}_{d}$ and $(f_{2})^{T}_{d}$ are in the kernel of $\partial C$ if and only if $\vec{c}^{T}$ is in the kernel of $\mathrm{S}_{d}(f_{1},f_{2})$.

As $f_{1}$ and $f_{2}$ are coprime, $\syl(f_{1},f_{2})$ is nonsingular. The matrix $\mathrm{S}_{\deg f_{2}}(f_{1},f_{2})$ is the submatrix of $\syl(f_{1},f_{2})$ obtained by first deleting its first and $(\deg f_{2}+1)$th rows, giving a matrix whose first column contains zeros, then deleting the first column of the resulting matrix. Therefore, $\mathrm{S}_{\deg f_{2}}(f_{1},f_{2})$ has full rank. Consequently, the recurrence relation~\eqref{eqn:St-recursion} implies that $\mathrm{S}_{d}(f_{1},f_{2})$ has full rank. Lemma~\ref{lem:polys-to-gp} and \eqref{eqn:ct-in-kernel} imply that $\vec{c}^{T}_{d}$ is nonzero and belongs to the kernel of $\mathrm{S}_{d}(f_{1},f_{2})$. Thus, $\{\vec{c}^{T}_{d}/\Delta(\vec{c}_{d})\}$ is a basis of the integer kernel of $\mathrm{S}_{d}(f_{1},f_{2})$. Hence, $\vec{c}^{T}$ is in the integer kernel of $\mathrm{S}_{d}(f_{1},f_{2})$ if and only if $\vec{c}=\pm(\Delta(\vec{c})/\Delta(\vec{c}_{d}))\cdot\vec{c}_{d}$.
\end{proof}

Let $n\geq 2$ and $A$ be an $n\times n$ matrix. For $i_{1},i_{2},j_{1},j_{2}\in\{1,\ldots,n\}$ such that $i_{1}<i_{2}$ and $j_{1}<j_{2}$, define $I_{k}=\{1,\ldots,n\}\setminus\{i_{k}\}$ and $J_{k}=\{1,\ldots,n\}\setminus\{j_{k}\}$ for $k=1,2$. Then
\begin{equation}\label{eqn:sylvester-identity}
	\det A\cdot\det A_{I_{1}\cap I_{2},J_{1}\cap J_{2}}%
	=\det\begin{pmatrix}
		\det A_{I_{1},J_{1}} & \det A_{I_{1},J_{2}}\\
		\det A_{I_{2},J_{1}} & \det A_{I_{2},J_{2}}
	\end{pmatrix},
\end{equation}
which is known as Sylvester's identity (see \cite[Theorem~4.1]{tyrtyshnikov2010} for a proof).

The following lemma is obtained from Lemma~\ref{lem:HankelInverse} by computing the determinant of the matrix $\psi$ for the case where $H$ and the polynomials $f_{1}$ and $f_{2}$ are integral:

\begin{lemma}\label{lem:adjC} Let $d\geq 2$ and $(c_{2d-2},\ldots,c_{0})\in\Z^{2d-1}$ such that $C=C(c_{2d-2},\ldots,c_{0})$ is nonsingular. If $f_{1}$ and $f_{2}$ are integer polynomials such that $\left\{(f_{1})^{T}_{d},(f_{2})^{T}_{d}\right\}$ is a basis of the integer kernel of $\partial C$, then
\begin{equation}\label{eqn:adjC}
	\adj C=\pm\Delta(\partial C)\cdot\bez(f_{1},f_{2}).
\end{equation}
\end{lemma}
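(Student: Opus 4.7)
The plan is to apply Lemma~\ref{lem:HankelInverse} to the Hankel matrix $H=C$ (identifying $h_{i+j-1}$ with $c_{2d-i-j}$) and then evaluate the accompanying $\det\psi$. Before invoking the lemma, I would verify that $\{(f_1)^T_d,(f_2)^T_d\}$ is a real basis of the kernel of $\partial C$: the two vectors form a $\Z$-basis of the rank-two integer kernel, so they are $\R$-linearly independent, and the real kernel has dimension $(d+1)-(d-1)=2$ because $\widehat{\partial C}$ (the submatrix of $C$ obtained by deleting its first row) has full rank. The lemma then yields $C^{-1}=-\bez(f_1,f_2)/\det\psi$, whence $\adj C=-(\det C/\det\psi)\,\bez(f_1,f_2)$, and the proof reduces to showing that $|\det\psi|=|\det C|/\Delta(\partial C)$.

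For this I would augment $\partial C$ to a $(d+1)\times(d+1)$ matrix $\tilde M$ by appending, as its new row $d$ and row $d+1$, the vectors $(c_{d-1},c_{d-2},\ldots,c_0,0)$ and $(0,0,\ldots,0,1)$, respectively; these are precisely the two rows of the matrix multiplying $(f_1,f_2)^T_d$ in the definition of $\psi$. Expanding $\det\tilde M$ along its last row, the remaining $d\times d$ minor reassembles $C$ exactly---the top $d-1$ rows are $\partial C$ with its last column deleted, and the new bottom row is the last row of $C$---yielding $\det\tilde M=\det C$.

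Next I would extend $\{(f_1)^T_d,(f_2)^T_d\}$ to a $\Z$-basis of $\Z^{d+1}$ by Hermite normal form, obtaining a unimodular $U$ whose first two columns are those vectors. Since column operations preserve the GCD of maximal minors, $\partial C\cdot U=(\mathbf{0}_{(d-1)\times 2}\mid M)$ with $|\det M|=\Delta(\partial C)$. The product $\tilde MU$ then has top $d-1$ rows equal to $(\mathbf{0}_{(d-1)\times 2}\mid M)$, while its bottom two rows begin with the block $\psi$ (by the construction of the appended rows). Cyclically shifting the first two columns past the remaining $d-1$---a permutation whose sign is $+1$ for every $d\geq 2$---produces a block lower triangular matrix with diagonal blocks $M$ and $\psi$, so $\det(\tilde MU)=\det M\cdot\det\psi$. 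Equating with $\det\tilde M\cdot\det U=\pm\det C$ then gives $|\det\psi|=|\det C|/\Delta(\partial C)$, whence $\adj C=\pm\Delta(\partial C)\,\bez(f_1,f_2)$.

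The main obstacles are administrative rather than conceptual: checking that the Hermite completion really yields $|\det M|=\Delta(\partial C)$ (which follows from the invariance of minor GCDs under unimodular column operations) and verifying the sign of the cyclic column shift (a short check that the 2-shift on $d+1$ positions is always even). The structural observation---that augmenting $\partial C$ by those two specific rows simultaneously recovers $C$ via one determinant expansion and isolates $\psi$ as a $2\times 2$ block via the unimodular completion---reduces the whole computation to essentially a single book-keeping exercise.
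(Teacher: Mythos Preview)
Your proposal is correct. Both your argument and the paper's start from Lemma~\ref{lem:HankelInverse}, reducing the problem to showing $|\det\psi|=|\det C|/\Delta(\partial C)$, but the two compute $\det\psi$ differently. The paper introduces bordered matrices $B_{i,j}$ (adding two unit rows atop $\partial C$), sets $b_{i,j}=\det B_{i,j}$, uses Sylvester's identity \eqref{eqn:sylvester-identity} to evaluate all $2\times2$ minors of a pair of explicit cofactor vectors in the kernel of $\partial C$, and then passes through a Hermite normal form to relate those vectors to $(f_{1},f_{2})$; the determinant of $\psi$ is finally read off from a cofactor expansion of $C$ along its last row. Your route is more direct: the single augmented matrix $\tilde M$ simultaneously recovers $\det C$ by one Laplace expansion and, after right-multiplication by a unimodular completion of $(f_{1},f_{2})^{T}_{d}$, exposes $\psi$ and an $M$ with $|\det M|=\Delta(\partial C)$ as diagonal blocks. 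This avoids Sylvester's identity and the auxiliary $b_{i,j}$ entirely. The one point to make fully explicit is why the completion to a unimodular $U$ exists: since $\{(f_{1})^{T}_{d},(f_{2})^{T}_{d}\}$ is a \emph{basis} of the integer kernel (a saturated sublattice of $\Z^{d+1}$), one has $\Delta((f_{1},f_{2})_{d})=1$, which is precisely the condition for extendability; the paper notes this fact as well. The trade-off is that the paper's longer computation also sets up the minors $b_{i,j}$ and the identity \eqref{eqn:sylvester-identity-Cij}, which are reused verbatim in the proof of Lemma~\ref{lem:basis_kerParC}; your argument is self-contained for the present lemma but does not furnish that machinery.
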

\begin{proof} Let $d\geq 2$ and $(c_{2d-2},\ldots,c_{0})\in\Z^{2d-1}$ such that $C=C(c_{2d-2},\ldots,c_{0})$ is nonsingular. Define $(d+1)\times(d+1)$ matrices
\begin{equation*}
	B_{i,j}=%
	\begin{pmatrix}
	& & & 1 & & & & \\
	& & & & & 1 & & \\
	c_{2d-2} & c_{2d-3} & \ldots & c_{2d-i-1} & \ldots & c_{2d-j-1} & \ldots & c_{d-2} \\
	c_{2d-3} & c_{2d-4} & \ldots & c_{2d-i-2} & \ldots & c_{2d-j-2} & \ldots & c_{d-3} \\
	\vdots & \vdots &    & \vdots & & \vdots & & \vdots \\
	c_{d} & c_{d-1} & \ldots & c_{d-i+1} & \ldots & c_{d-j+1} & \ldots & c_{0}
	\end{pmatrix},
\end{equation*}
where the omitted entries are zeros, for $1\leq i,j\leq d+1$. Define $b_{i,j}=\det B_{i,j}$ for $1\leq i,j\leq d+1$. Then $\{(-1)^{i+j+1}b_{i,j}\mid 1\leq i<j\leq d\}$ is the set of $(d-1)\times(d-1)$ minors of $\partial C$. The matrix $\partial C$ has full rank since $C$ is nonsingular. Therefore, there exist indices $k$ and $\ell$ such that $k<\ell$ and $b_{k,\ell}\neq 0$. The first and second column vectors of $\adj B_{k,\ell}$ are $\left(b_{1,\ell},\ldots,b_{d+1,\ell}\right)^{T}$ and $\left(b_{k,1},\ldots,b_{k,d+1}\right)^{T}$ respectively. Thus, these two vectors are nonzero and in the kernel of $\partial C$.

Let
\begin{equation*}
	B=\begin{pmatrix}
		b_{k,1} & \ldots & b_{k,d+1}\\
		b_{1,\ell} & \ldots & b_{d+1,\ell}\\
	\end{pmatrix}.
\end{equation*}
Let $j_{1}$ and $j_{2}$ be indices such that $1\leq j_{1}<j_{2}\leq d$. Define $I_{n}=\{1,\ldots,d+1\}\setminus\{n\}$ and $J_{n}=\{1,\ldots,d+1\}\setminus\{j_{n}\}$ for $n=1,2$. Then
\begin{equation*}
	\det (B_{k,\ell})_{I_{1},J_{n}}%
	=\begin{cases}
		(-1)^{j_{n}+1}b_{j_{n},\ell} & \text{if $j_{n}\leq\ell$}\\
		(-1)^{j_{n}}b_{\ell,j_{n}} & \text{if $j_{n}>\ell$}
	\end{cases}%
	=(-1)^{j_{n}+1}b_{j_{n},\ell}%
	\quad\text{for $n=1,2$.}
\end{equation*}
Similarly, $\det (B_{k,\ell})_{I_{2},J_{n}}=(-1)^{j_{n}}b_{k,j_{n}}$ for $n=1,2$. Finally,
\begin{equation*}
	\det (B_{k,\ell})_{I_{1}\cap I_{2},J_{1}\cap J_{2}}=(-1)^{j_{1}+j_{2}+1}b_{j_{1},j_{2}}.
\end{equation*}
Therefore, identity~\eqref{eqn:sylvester-identity} implies that
\begin{equation}\label{eqn:sylvester-identity-Cij}
	\det\begin{pmatrix} b_{k,j_{1}} & b_{k,j_{2}}\\ b_{j_{1},\ell} & b_{j_{2},\ell}\end{pmatrix}=-b_{k,\ell}b_{j_{1},j_{2}}\quad\text{for $1\leq j_{1}<j_{2}\leq d+1$}.
\end{equation}
Consequently, $\Delta(B)=\pm\Delta(\partial C)\cdot b_{k,\ell}$, which is nonzero since $\partial C$ has full rank and $b_{k,\ell}\neq 0$. Therefore, if $V$ is a $(d+1)\times(d+1)$ unimodular matrix such that $BV$ is in Hermite normal form, then
\begin{equation*}
	BV=\begin{pmatrix}
		0_{2\times(d-1)} & H
	\end{pmatrix}
\end{equation*}
for some $2\times 2$ integer matrix $H$. Performing elementary column operations on $B$ does not change $\Delta(B)$. Thus, $\det H=\pm\Delta(B)$ is nonzero. Hence, 
\begin{equation}\label{eqn:kerPartialCBasis}
	H^{-1}B%
	=\begin{pmatrix}
		0_{2\times(d-1)} & \id_{2}
	\end{pmatrix}V^{-1}
\end{equation}
has integer entries and $(\partial C)(H^{-1}B)^{T}=0_{2\times 2}$.

Suppose that $f_{1}$ and $f_{2}$ are integer polynomials such that $\left\{(f_{1})^{T}_{d},(f_{2})^{T}_{d}\right\}$ is a basis of the integer kernel of $\partial C$. Then there exists a $2\times 2$ nonsingular integer matrix $U$ such that
\begin{equation}\label{eqn:fbasis-to-Bbasis}
	U\left(f_{1},f_{2}\right)=H^{-1}B.
\end{equation}
Therefore, Lemma~\ref{lem:HankelInverse} implies that
\begin{equation}\label{eqn:Cinverse}
  C^{-1} = -\frac{1}{\det\psi}\bez(f_{1},f_{2}),
\end{equation}
where
\begin{equation*}
	\psi=\begin{pmatrix}
  		c_{d-1} & \ldots & c_{0} & 0\\
  		0 & \ldots & 0 & 1
	\end{pmatrix}%
	\left(f_{1},f_{2}\right)^{T}
	=\begin{pmatrix}
  		c_{d-1} & \ldots & c_{0} & 0\\
  		0 & \ldots & 0 & 1
  	\end{pmatrix}B^{T}(HU)^{-T}.
\end{equation*}
Equation~\eqref{eqn:kerPartialCBasis} implies that $\Delta(H^{-1}B)=1$ since $V$ is unimodular. Furthermore, $\Delta((f_{1},f_{2}))=1$ since $\left\{(f_{1})^{T}_{d},(f_{2})^{T}_{d}\right\}$ is a basis of the integer kernel of $\partial C$: if $\Delta((f_{1},f_{2}))$ were greater than one, then $(f_{1})^{T}_{d}$ and $(f_{2})^{T}_{d}$ would only generate a proper subgroup of the integer kernel \cite[Section~3.1]{coxon2011}. Thus, \eqref{eqn:fbasis-to-Bbasis} implies that $U$ is unimodular. Therefore,
\begin{equation}\label{eqn:detPsi}
	\det\psi=\pm\frac{1}{\det H}\sum^{d}_{j=1}c_{d-j}\det\begin{pmatrix} b_{k,j} & b_{k,d+1}\\ b_{j,\ell} & b_{d+1,\ell}\end{pmatrix}=\pm\frac{1}{\Delta(\partial C)}\sum^{d}_{j=1}c_{d-j}b_{j,d+1}.
\end{equation}
Expanding the determinant of $C$ by minors along its last row shows that
\begin{equation}\label{eqn:detCminors}
	\sum^{d}_{i=1}c_{d-i}b_{i,d+1}=\det C.
\end{equation}
Hence, \eqref{eqn:Cinverse}, \eqref{eqn:detPsi} and \eqref{eqn:detCminors} imply that $\adj C=\pm\Delta(\partial C)\cdot\bez(f_{1},f_{2})$.
\end{proof}

The first assertion of Property~\eqref{resFormula} of Theorem~\ref{thm:gp-to-polys}, that $f_{1}$ and $f_{2}$ are coprime, follows from Lemma~\ref{eqn:adjC} since $C$ is nonsingular by assumption. The next step in the proof of the theorem is to prove the formulae for $\res(f_{1},f_{2})$ and $\Delta(\mathrm{S}_{d}(f_{1},f_{2}))$.
\begin{lemma}\label{lem:resFormula} Let $d\geq 2$ and $\vec{c}=(c_{2d-2},\ldots,c_{0})\in\Z^{2d-1}$ such that the matrix $C=C(c_{2d-2},\ldots,c_{0})$ is nonsingular. If $f_{1},f_{2}\in\Z[x]$ such that $2\leq\deg f_{2}\leq\deg f_{1}$ and $\left\{(f_{1})^{T}_{d},(f_{2})^{T}_{d}\right\}$ is a basis for the integer kernel of $\partial C$, then
\begin{equation*}
	\lc(f_{1})^{d-\deg f_{2}}\res(f_{1},f_{2})=\pm\frac{\left(\det C\right)^{d-1}}{\Delta(\partial C)^{d}}%
	\quad\text{and}\quad%
	\Delta\left(\mathrm{S}_{d}(f_{1},f_{2})\right)=\Delta(\vec{c})\frac{\left|\det C\right|^{d-2}}{\Delta(\partial C)^{d-1}}.
\end{equation*}
\end{lemma}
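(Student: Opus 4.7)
The plan is to derive both identities from two main ingredients already available: Lemma~\ref{lem:adjC}, which relates $\adj C$ to $\bez(f_1,f_2)$, and Lemma~\ref{lem:gp-kernel}, which identifies $\vec{c}$ up to sign and rational scaling with the vector $\vec{c}_d(f_1,f_2)$ whose entries are the maximal signed minors of $\mathrm{S}_d(f_1,f_2)$. Note first that $\deg f_1=d$ by the argument already given in Section~\ref{sec:montgomerys-method}, so the leading-coefficient factors appearing in \eqref{eqn:det-bezout} collapse appropriately.

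For the resultant formula, I would take determinants of both sides of the identity $\adj C=\pm\Delta(\partial C)\cdot\bez(f_1,f_2)$ supplied by Lemma~\ref{lem:adjC}. Since $C$ is a $d\times d$ nonsingular integer matrix, $\det\adj C=(\det C)^{d-1}$. Applying \eqref{eqn:det-bezout} with $\deg f_1=d$, which renders the factor $\lc(f_2)^{d-\deg f_1}$ trivial, yields
\begin{equation*}
(\det C)^{d-1}=\pm\,\Delta(\partial C)^{d}\cdot\lc(f_1)^{d-\deg f_2}\res(f_1,f_2),
\end{equation*}
which rearranges to the first claimed identity. As a byproduct, $\det\bez(f_1,f_2)\neq 0$, so $f_1$ and $f_2$ are coprime, validating the hypotheses of the remaining results to be used.

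For the $\Delta(\mathrm{S}_d)$ formula, Lemma~\ref{lem:gp-kernel} gives $\vec{c}=\pm\lambda\,\vec{c}_d(f_1,f_2)$ with $\lambda=\Delta(\vec{c})/\Delta(\vec{c}_d(f_1,f_2))$, and by the very definition of $\vec{c}_d(f_1,f_2)$ one has $\Delta(\mathrm{S}_d(f_1,f_2))=\Delta(\vec{c}_d(f_1,f_2))$; it therefore suffices to compute $\lambda$. Writing $C_d=C_d(f_1,f_2)$ for the Hankel matrix built from $\vec{c}_d(f_1,f_2)$, the proportionality of the two generating vectors yields $C=\pm\lambda\,C_d$, so $|\det C|=\lambda^{d}|\det C_d|$. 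Corollary~\ref{cor:adjBez} provides $\det C_d=\pm\det\adj\bez(f_1,f_2)=\pm(\det\bez(f_1,f_2))^{d-1}$; combining this with \eqref{eqn:det-bezout} and the resultant formula established in the previous step gives
\begin{equation*}
|\det C_d|=\bigl|\lc(f_1)^{d-\deg f_2}\res(f_1,f_2)\bigr|^{d-1}=\frac{|\det C|^{(d-1)^{2}}}{\Delta(\partial C)^{d(d-1)}}.
\end{equation*}
Solving for $\lambda$ gives $\lambda^{d}=\Delta(\partial C)^{d(d-1)}/|\det C|^{d(d-2)}$, whence $\lambda=\Delta(\partial C)^{d-1}/|\det C|^{d-2}$ (all quantities involved are positive). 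Substituting into $\Delta(\mathrm{S}_d(f_1,f_2))=\Delta(\vec{c})/\lambda$ yields the second claimed identity.

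The main difficulty is purely one of bookkeeping—tracking signs through Lemma~\ref{lem:adjC}, \eqref{eqn:det-bezout} and Corollary~\ref{cor:adjBez}, and keeping the exponents of $\det C$ and $\Delta(\partial C)$ straight during the two determinant computations. No genuinely new idea is required beyond combining the already proved structural results; the absolute-value formulation of the second identity is precisely what allows the unresolved signs from Lemmas~\ref{lem:adjC} and~\ref{lem:gp-kernel} to be harmlessly absorbed.
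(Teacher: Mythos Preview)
Your proof is correct and follows essentially the same route as the paper: the resultant formula comes from taking determinants in Lemma~\ref{lem:adjC} together with \eqref{eqn:det-bezout}, and the $\Delta(\mathrm{S}_d)$ formula comes from combining Lemma~\ref{lem:gp-kernel} with Corollary~\ref{cor:adjBez}. The only cosmetic difference in the second half is that the paper takes the adjoint of both sides of \eqref{eqn:adjC} to obtain $(\det C)^{d-2}C=\pm\Delta(\partial C)^{d-1}\adj\bez(f_1,f_2)$ and compares this matrix identity directly with $\adj\bez(f_1,f_2)=\pm(\Delta(\mathrm{S}_d)/\Delta(\vec{c}))\,C$, whereas you take determinants of $C=\pm\lambda\,C_d$ and extract a $d$-th root; both reach the same conclusion with equal ease.
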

\begin{proof} Let $d\geq 2$ and $\vec{c}=(c_{2d-2},\ldots,c_{0})\in\Z^{2d-1}$ such that $C=C(c_{2d-2},\ldots,c_{0})$ is nonsingular. Suppose that $f_{1},f_{2}\in\Z[x]$ such that $2\leq\deg f_{2}\leq\deg f_{1}$ and $\left\{(f_{1})^{T}_{d},(f_{2})^{T}_{d}\right\}$ is a basis for the integer kernel of $\partial C$. Then \eqref{eqn:adjC} holds and computing the determinant of both sides of the equation yields
\begin{equation*}
	(\det C)^{d-1}=\pm\Delta(\partial C)^{d}\lc(f_{1})^{d-\deg f_{2}}\res(f_{1},f_{2}).
\end{equation*}
As $C$ is nonsingular, $\det C$ and $\Delta(\partial C)$ are nonzero. Thus, $\res(f_{1},f_{2})$ is nonzero and Lemma~\ref{lem:gp-kernel} implies that $\vec{c}=\pm\left(\Delta(\vec{c})/\Delta(\vec{c}_{d}(f_{1},f_{2}))\right)\cdot\vec{c}_{d}(f_{1},f_{2})$. Therefore, on the one hand, Corollary~\ref{cor:adjBez} implies that
\begin{equation*}
	\adj\bez(f_{1},f_{2})=\pm\frac{\Delta(\vec{c}_{d}(f_{1},f_{2}))}{\Delta(\vec{c})}C=\pm\frac{\Delta\left(\mathrm{S}_{d}(f_{1},f_{2})\right)}{\Delta(\vec{c})}C
\end{equation*}
On the other hand, computing the adjoint of both side of \eqref{eqn:adjC} yields
\begin{equation*}
	(\det C)^{d-2}C=\pm\Delta(\partial C)^{d-1}\adj\bez(f_{1},f_{2}).
\end{equation*}
As $C$ has at least one nonzero entry, it follows that
\begin{equation*}
	\pm(\det C)^{d-2}=\Delta(\partial C)^{d-1}\frac{\Delta\left(\mathrm{S}_{d}(f_{1},f_{2})\right)}{\Delta(\vec{c})},
\end{equation*}
where the right hand side is positive.
\end{proof}

The following lemma and its subsequent corollary complete the proof of Property~\eqref{resFormula} of Theorem~\ref{thm:gp-to-polys} by showing that $\lc(f_{1})=\pm\Delta(\widehat{\partial C})/\Delta(\partial C)$ when the degree of $f_{2}$ is strictly less than $d$:

\begin{lemma}\label{lem:basis_kerParC} Let $d\geq 2$, $(c_{2d-2},\ldots,c_{0})\in\Z^{2d-1}$ such that $C=C(c_{2d-2},\ldots,c_{0})$ is nonsingular, and integers $b_{i,j}$, for $1\leq i,j\leq d+1$, be defined as in the proof of Lemma~\ref{lem:adjC}. Then there exist integers $x_{2},\ldots,x_{d+1}$ such that $\sum^{d+1}_{k=2}x_{k}b_{k,1}=\Delta(\widehat{\partial C})$ and, for any such integers, the set
\begin{equation*}
	\left\{\frac{1}{\Delta(\partial C)}\sum^{d+1}_{k=2}x_{k}\left(b_{k,1},b_{k,2},\ldots,b_{k,d+1}\right)^{T},%
	\frac{1}{\Delta(\widehat{\partial C})}\left(b_{1,1},b_{2,1},\ldots,b_{d+1,1}\right)^{T}\right\}
\end{equation*}
is a basis of the integer kernel of $\partial C$.
\end{lemma}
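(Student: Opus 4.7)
The plan is to leverage the setup of Lemma~\ref{lem:adjC}---the integers $b_{i,j}=\det B_{i,j}$ and the Sylvester-type identity~\eqref{eqn:sylvester-identity-Cij}---and to exploit saturation of the integer kernel. To produce the integers $x_{2},\ldots,x_{d+1}$, I note that $B_{1,1}$ has two identical rows so $b_{1,1}=0$; and for $k\geq 2$, expanding $\det B_{k,1}$ along its first two rows shows that $b_{k,1}$ equals $(-1)^{k+1}$ times the $(d-1)\times(d-1)$ minor of $\widehat{\partial C}$ obtained by deleting its $(k-1)$th column. As $k$ ranges over $\{2,\ldots,d+1\}$, these exhaust the $(d-1)\times(d-1)$ minors of $\widehat{\partial C}$, giving $\gcd(|b_{2,1}|,\ldots,|b_{d+1,1}|)=\Delta(\widehat{\partial C})$, whence B\'{e}zout supplies the $x_{k}$.

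Next I will verify that both candidate vectors are integer elements of the kernel of $\partial C$. Writing $v_{k}=(b_{k,1},\ldots,b_{k,d+1})^{T}$ and $w_{1}=(b_{1,1},\ldots,b_{d+1,1})^{T}$, the proof of Lemma~\ref{lem:adjC} already exhibits both as columns of adjugates whose lower block is $\partial C$, so each lies in its kernel. Every $b_{i,j}$ is a signed $(d-1)\times(d-1)$ minor of $\partial C$ and hence divisible by $\Delta(\partial C)$, so $u_{1}:=\Delta(\partial C)^{-1}\sum_{k}x_{k}v_{k}$ is integer; $u_{2}:=\Delta(\widehat{\partial C})^{-1}w_{1}$ is integer by the preceding paragraph together with $b_{1,1}=0$.

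The heart of the argument is to show that $\{u_{1},u_{2}\}$ generates the full integer kernel. Since the integer kernel of an integer matrix is a saturated sublattice of $\Z^{d+1}$, a linearly independent pair of integer kernel vectors forms a $\Z$-basis exactly when the gcd of its $2\times 2$ minors equals one. Using the three-term relation
\[
  b_{k,i}b_{j,1}-b_{k,j}b_{i,1}=-b_{k,1}b_{i,j}
\]
(the extension of~\eqref{eqn:sylvester-identity-Cij} to $\ell=1$, obtained by running the same Sylvester's identity argument on $B_{k,1}$ for $k\geq 2$), the $(i,j)$ minor of the $2\times(d+1)$ matrix with rows $u_{1}$ and $u_{2}$ telescopes to $-b_{i,j}/\Delta(\partial C)$ after invoking $\sum_{k}x_{k}b_{k,1}=\Delta(\widehat{\partial C})$; the boundary case $i=1$ (where $(u_{2})_{1}=0$) yields the same formula by direct substitution. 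Since the $b_{i,j}$ for $i<j$ are the signed $(d-1)\times(d-1)$ minors of $\partial C$ and so have gcd $\Delta(\partial C)$, the resulting $2\times 2$ minors have gcd one, as required.

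The main obstacle is verifying that~\eqref{eqn:sylvester-identity-Cij} remains valid with $\ell=1$, since its derivation required $k<\ell$ and $b_{k,\ell}\neq 0$. I expect the computation to go through verbatim on $B_{k,1}$ for $k\geq 2$: the determinants $(B_{k,1})_{I_{1},J_{n}}$ and $(B_{k,1})_{I_{2},J_{n}}$ evaluate to the same signed values of $b_{j_{n},1}$ and $b_{k,j_{n}}$ provided $j_{1},j_{2}\geq 2$, which is precisely the range needed above.
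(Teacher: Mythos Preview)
Your proposal is correct and follows essentially the same approach as the paper's own proof: identify the $b_{k,1}$ with the signed maximal minors of $\widehat{\partial C}$ to obtain the $x_k$ via B\'ezout, verify that the two candidate vectors are integral kernel elements, and then show that the gcd of all $2\times 2$ minors of the resulting $2\times(d+1)$ matrix equals one by reducing each minor to $-b_{i,j}/\Delta(\partial C)$ using the Sylvester-type relation. Your explicit attention to extending~\eqref{eqn:sylvester-identity-Cij} to the case $\ell=1$ is in fact a point the paper glosses over---it simply cites~\eqref{eqn:sylvester-identity-Cij} as though it applies for all $k,\ell$, relying implicitly on the fact that the derivation via Sylvester's identity on $B_{k,\ell}$ is insensitive to the ordering of $k$ and $\ell$ and to whether $b_{k,\ell}$ vanishes---so your proof is, if anything, slightly more careful.
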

\begin{proof} Let $(c_{2d-2},\ldots,c_{0})\in\Z^{2d-1}$ such that $C=C(c_{2d-2},\ldots,c_{0})$ is nonsingular Then $\Delta(\widehat{\partial C})$ is nonzero since $C$ is nonsingular. Define $b_{i,j}$ as in the proof of Lemma~\ref{lem:adjC} for $1\leq i,j\leq d+1$. For distinct indices $i$ and $j$, $b_{i,j}$ is up to sign equal to the determinant of the $(d-1)\times(d-1)$ submatrix of $\partial C$ obtained by deleting columns $i$ and $j$. Thus, $\gcd(b_{2,1},b_{3,1},\ldots,b_{d+1,1})=\Delta(\widehat{\partial C})$ and there exist integers $x_{2},\ldots,x_{d+1}$ such that $\sum^{d+1}_{k=2}x_{k}b_{k,1}=\Delta(\widehat{\partial C})$. For such integers, define
\begin{equation*}
	\vec{b}_{1}=\frac{1}{\Delta(\partial C)}\sum^{d+1}_{k=2}x_{k}\left(b_{k,1},\ldots,b_{k,d+1}\right)^{T}%
	\quad\text{and}\quad%
	\vec{b}_{2}=\frac{1}{\Delta(\widehat{\partial C})}\left(b_{1,1},\ldots,b_{d+1,1}\right)^{T}.
\end{equation*}
Write $\vec{b}_{i}=(\beta_{i,1},\ldots,\beta_{i,d+1})^{T}$ for $i=1,2$, and let $B=(\beta_{i,j})_{i=1,2;j=1,\ldots,d+1}$. As $b_{i,i}=0$ for $i=1,\ldots,d+1$, it follows that $\vec{b}_{1}$ and $\vec{b}_{2}$ have integer entries. Moreover, it follows from the proof of Lemma~\ref{lem:adjC} that $\vec{b}_{1}$ and $\vec{b}_{2}$ belong to the integer kernel of $\partial C$. Therefore, $\{\vec{b}_{1},\vec{b}_{2}\}$ is a basis for the integer kernel of $\partial C$ if and only if $\Delta(B)=1$: $\vec{b}_{1}$ and $\vec{b}_{2}$ are linearly independent if and only if $\Delta(B)\neq 0$; and if $\Delta(B)\neq 0$, then $\{\vec{b}_{1},\vec{b}_{2}\}$ is a basis of an index $\Delta(B)$ subgroup of the integer kernel~\cite[Section~3.1]{coxon2011}.

For $1\leq i<j\leq d+1$,
\begin{equation*}
	\det\begin{pmatrix}
		\beta_{1,i} & \beta_{1,j}\\ \beta_{2,i} & \beta_{2,j}
	\end{pmatrix}
	=\frac{1}{\Delta(\partial C)\Delta(\widehat{\partial C})}\sum^{d+1}_{k=2}x_{k}\det\begin{pmatrix}
		b_{k,i} & b_{k,j}\\ b_{i,1} & b_{j,1}
	\end{pmatrix}.
\end{equation*}
Therefore, \eqref{eqn:sylvester-identity-Cij} implies that
\begin{equation*}
	\det\begin{pmatrix}
		\beta_{1,i} & \beta_{1,j}\\ \beta_{2,i} & \beta_{2,j}
	\end{pmatrix}
	=\frac{1}{\Delta(\partial C)\Delta(\widehat{\partial C})}\sum^{d+1}_{k=2}-x_{k}b_{k,1}b_{i,j}=-\frac{b_{i,j}}{\Delta(\partial C)}
\end{equation*}
for $1\leq i<j\leq d+1$. The greatest common divisor of the $b_{i,j}$, for $1\leq i<j\leq d+1$, is equal to $\Delta(\partial C)$. Hence, $\Delta(B)=1$.
\end{proof}

\begin{corollary}\label{cor:basis-kerParC} Let $d\geq 2$ and $(c_{2d-2},\ldots,c_{0})\in\Z^{2d-1}$ such that the matrix $C=C(c_{2d-2},\ldots,c_{0})$ is nonsingular.  If $f_{1},f_{2}\in\Z[x]$ such that $\deg f_{2}\leq\deg f_{1}$ and $\left\{(f_{1})^{T}_{d},(f_{2})^{T}_{d}\right\}$ is a basis for the integer kernel of $\partial C$, then $\Delta(\widehat{\partial C})/\Delta(\partial C)$ divides $\lc(f_{1})$. Furthermore, if $\deg f_{2}<\deg f_{1}$, then $\lc(f_{1})=\pm\Delta(\widehat{\partial C})/\Delta(\partial C)$.
\end{corollary}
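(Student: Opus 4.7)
The plan is to leverage Lemma~\ref{lem:basis_kerParC}, which already supplies an explicit basis $\{\vec{b}_{1},\vec{b}_{2}\}$ of the integer kernel of $\partial C$, and then to exploit the fact that any other integral basis differs from this one by a $2\times 2$ unimodular transformation. The decisive feature of the explicit basis is the behaviour of its top entries: in the notation of Lemma~\ref{lem:basis_kerParC}, the first entry of $\vec{b}_{2}=(b_{1,1},b_{2,1},\ldots,b_{d+1,1})^{T}/\Delta(\widehat{\partial C})$ is $b_{1,1}/\Delta(\widehat{\partial C})=0$ (since $b_{i,i}=0$, as noted inside that proof), whereas the first entry of $\vec{b}_{1}$ is exactly $\sum_{k=2}^{d+1}x_{k}b_{k,1}/\Delta(\partial C)=\Delta(\widehat{\partial C})/\Delta(\partial C)$ by the defining property of the integers $x_{2},\ldots,x_{d+1}$. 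Here $\Delta(\widehat{\partial C})/\Delta(\partial C)$ is a nonzero integer because every $(d-1)\times(d-1)$ minor of $\widehat{\partial C}$ is a $(d-1)\times(d-1)$ minor of $\partial C$, and $\partial C$ has full rank since $C$ is nonsingular (as observed in Section~\ref{sec:montgomerys-method}).

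Next, I would express the hypothesised basis in terms of this one: there exists a $2\times 2$ integer matrix $U=(u_{ij})$ with $\det U=\pm 1$ such that
\begin{equation*}
	\left((f_{1})^{T}_{d}\ \,(f_{2})^{T}_{d}\right)=\left(\vec{b}_{1}\ \,\vec{b}_{2}\right)U.
\end{equation*}
Reading off the top (i.e.\ leading) coordinate of each side, and using that this top coordinate is the coefficient of $x^{d}$ in the corresponding polynomial (which equals $\lc(f_{i})$ when $\deg f_{i}=d$ and is zero otherwise), I obtain
\begin{equation*}
	[x^{d}]f_{1}=u_{11}\frac{\Delta(\widehat{\partial C})}{\Delta(\partial C)}
	\quad\text{and}\quad
	[x^{d}]f_{2}=u_{12}\frac{\Delta(\widehat{\partial C})}{\Delta(\partial C)}.
\end{equation*}

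It was established in Section~\ref{sec:montgomerys-method} that at least one of $f_{1},f_{2}$ has degree equal to $d$; combined with $\deg f_{2}\leq\deg f_{1}$, this forces $\deg f_{1}=d$, so $[x^{d}]f_{1}=\lc(f_{1})$ and the first assertion follows at once from the integrality of $u_{11}$. For the sharper second assertion, under the extra hypothesis $\deg f_{2}<d$ the coefficient $[x^{d}]f_{2}$ vanishes, which forces $u_{12}=0$. Unimodularity of $U$ then gives $u_{11}u_{22}=\pm 1$, whence $u_{11}=\pm 1$ and $\lc(f_{1})=\pm\Delta(\widehat{\partial C})/\Delta(\partial C)$, as desired. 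I do not foresee a real obstacle here: everything reduces to the combinatorial remark that $\vec{b}_{2}$ begins with a zero and $\vec{b}_{1}$ begins with $\Delta(\widehat{\partial C})/\Delta(\partial C)$. The only point requiring some care is to interpret the ``leading coefficient'' consistently as the coefficient of $x^{d}$ (the formal degree) rather than the true leading coefficient, so that the unimodular change of basis can be matched coordinatewise even when $\deg f_{2}<d$.
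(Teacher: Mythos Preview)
Your proposal is correct and follows essentially the same approach as the paper's proof: both use the explicit basis $\{\vec{b}_{1},\vec{b}_{2}\}$ from Lemma~\ref{lem:basis_kerParC}, observe that $\vec{b}_{2}$ has vanishing top entry while $\vec{b}_{1}$ has top entry $\Delta(\widehat{\partial C})/\Delta(\partial C)$, and then compare with $(f_{1})_{d},(f_{2})_{d}$ via a unimodular change of basis. The only cosmetic difference is that you arrange the vectors as columns and multiply by $U$ on the right, whereas the paper arranges them as rows and multiplies on the left; the arguments are otherwise identical.
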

\begin{proof} Let $d\geq 2$ and $(c_{2d-2},\ldots,c_{0})\in\Z^{2d-1}$ such that $C=C(c_{2d-2},\ldots,c_{0})$ is nonsingular. Define $\vec{b}_{1}=(\beta_{1,1},\ldots,\beta_{1,d+1})^{T}$ and $\vec{b}_{2}=(\beta_{2,1},\ldots,\beta_{2,d+1})^{T}$ as in the proof of Lemma~\ref{lem:basis_kerParC}. Then $\{\vec{b}_{1},\vec{b}_{2}\}$ is a basis for the integer kernel of $\partial C$,
\begin{equation*}
	\beta_{1,1}=\frac{1}{\Delta(\partial C)}\sum^{d+1}_{k=2}x_{k}b_{k,1}=\frac{\Delta(\widehat{\partial C})}{\Delta(\partial C)}\neq 0%
	\quad\text{and}\quad%
	\beta_{2,1}=\frac{b_{1,1}}{\Delta(\widehat{\partial C})}=0.
\end{equation*}

Suppose that $f_{1},f_{2}\in\Z[x]$ such that $\deg f_{2}\leq \deg f_{1}$ and $\{(f_{1})^{T}_{d},(f_{2})^{T}_{d}\}$ is a basis of the integer kernel of $\partial C$. Then there exists a unimodular matrix $U=(u_{i,j})_{i=1,2;j=1,2}$ such that
\begin{equation*}
	U\cdot(\beta_{i,j})_{i=1,2;j=1,\ldots,d+1}=(f_{1},f_{2})_{d}.
\end{equation*}
As $C$ is nonsingular, $\deg f_{1}=d$. Therefore, $\lc(f_{1})=u_{1,1}\beta_{1,1}$ since $\beta_{2,1}=0$. If $\deg f_{2}<\deg f_{1}$, then $u_{2,1}=0$ since $\beta_{1,1}\neq 0$ and $\beta_{2,1}=0$. Thus, if $\deg f_{2}<\deg f_{1}$, then $u_{1,1}=\pm 1$ since $U$ is unimodular.
\end{proof}

Property~\eqref{polySizeBounds} of Theorem~\ref{thm:gp-to-polys} is now proved, completing the proof of the theorem:

\begin{lemma}\label{lem:polySizeBounds} Let $d\geq 2$ and $\vec{c}=[c_{2d-2},\ldots,c_{0}]$ be a geometric progression modulo $N$ such that $C=C(c_{2d-2},\ldots,c_{0})$ is nonsingular and $\gcd(\Delta(\vec{c}),N)=1$. If $f_{1},f_{2}\in\Z[x]$ such that $2\leq\deg f_{2}\leq\deg f_{1}$ and $\left\{(f_{1})^{T}_{d},(f_{2})^{T}_{d}\right\}$ is a basis of the integer kernel of $\partial C$, then
\begin{equation*}
  		\norm{\vec{c}/\Delta(\vec{c})}_{2,s^{-1}}%
  		\leq\frac{\left|\det C\right|^{d-2}}{\Delta(\partial C)^{d-1}}\norm{\vec{c}}_{2,s^{-1}}
  		\leq\left(s^{\frac{\deg f_{2}-d}{2}}\left|\sin\theta_{s}(f_{1},f_{2})\right|\norm{f_{1}}_{2,s}\norm{f_{2}}_{2,s}\right)^{d-1}
\end{equation*}
and
\begin{equation*}
	s^{\frac{\deg f_{2}-d}{2}}\left|\sin\theta_{s}(f_{1},f_{2})\right|\norm{f_{1}}_{2,s}\norm{f_{2}}_{2,s}%
	\leq\frac{1}{\Delta(\partial C)}\norm{\vec{c}}^{d-1}_{2,s^{-1}}%
	\leq\frac{1}{N^{d-2}}\norm{\vec{c}/\Delta(\vec{c})}^{d-1}_{2,s^{-1}}
\end{equation*}
for all $s>0$.
\end{lemma}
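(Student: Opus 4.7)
The strategy is to transport Lemma~\ref{lem:ct-upper-bnd} and Corollary~\ref{lem:ct-lower-bnd-td-k1}, which bound $\norm{\vec{c}_d(f_1,f_2)}_{2,s^{-1}}$, back to bounds on $\norm{\vec{c}}_{2,s^{-1}}$. Since $C$ is nonsingular, Lemma~\ref{lem:adjC} implies that $f_1$ and $f_2$ are coprime, and Property~\eqref{degCondition} of Theorem~\ref{thm:gp-to-polys} gives $\deg f_1 = d$. Lemma~\ref{lem:gp-kernel} then forces $\vec{c}/\Delta(\vec{c}) = \pm\vec{c}_d(f_1,f_2)/\Delta(\vec{c}_d(f_1,f_2))$, and since $\Delta(\vec{c}_d(f_1,f_2)) = \Delta(\mathrm{S}_d(f_1,f_2))$ by definition, Lemma~\ref{lem:resFormula} supplies the key identity
\begin{equation*}
\norm{\vec{c}_d(f_1,f_2)}_{2,s^{-1}} = \frac{\Delta(\mathrm{S}_d(f_1,f_2))}{\Delta(\vec{c})}\norm{\vec{c}}_{2,s^{-1}} = \frac{|\det C|^{d-2}}{\Delta(\partial C)^{d-1}}\norm{\vec{c}}_{2,s^{-1}}.
\end{equation*}

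From this identity, the three inner inequalities follow quickly. The first inequality of the first chain reduces after cancellation to $\Delta(\mathrm{S}_d(f_1,f_2)) \geq 1$, which holds because $\mathrm{S}_d(f_1,f_2)$ has full rank when $f_1$ and $f_2$ are coprime (as noted before Theorem~\ref{thm:polys-to-gp}). The second inequality of the first chain is exactly Lemma~\ref{lem:ct-upper-bnd} at $t = d$ after rewriting its left-hand side via the identity above. The first inequality of the second chain comes from Corollary~\ref{lem:ct-lower-bnd-td-k1} at $t = d$, $k = 1$: apply the identity, substitute $|\lc(f_1)^{d-\deg f_2}\res(f_1,f_2)| = |\det C|^{d-1}/\Delta(\partial C)^d$ from Lemma~\ref{lem:resFormula}, cancel the common powers of $|\det C|$, and use $(d-1)^2 - d(d-2) = 1$ to collapse the surviving powers of $\Delta(\partial C)$ into a single factor in the denominator.

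The main obstacle will be the last inequality, which after clearing norms is equivalent to
\begin{equation*}
\Delta(\vec{c})^{d-1}\, N^{d-2} \leq \Delta(\partial C).
\end{equation*}
I will establish this through two independent divisibilities, combined using $\gcd(\Delta(\vec{c}), N) = 1$. First, since every entry of $\partial C$ is divisible by $\Delta(\vec{c})$, every $(d-1)\times(d-1)$ minor is divisible by $\Delta(\vec{c})^{d-1}$, so $\Delta(\vec{c})^{d-1}$ divides $\Delta(\partial C)$. Second, because $\vec{c}$ is a geometric progression with ratio $r$ modulo $N$, every row of $\partial C$ is congruent modulo $N$ to a scalar multiple of the single vector $(r^{d}, r^{d-1},\ldots,1)$; hence $\partial C$ has rank at most one modulo $N$. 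A standard column-reduction argument---apply an invertible integer column operation so that $\partial C$ modulo $N$ has all but one column zero, then expand any $(d-1)\times(d-1)$ minor by multilinearity to extract a factor of $N$ from each of the $d-2$ offending columns---then gives $N^{d-2}\mid\Delta(\partial C)$. The coprimality hypothesis lifts these two divisibilities to $\Delta(\vec{c})^{d-1}N^{d-2}\mid\Delta(\partial C)$, completing the proof.
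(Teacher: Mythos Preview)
Your proof is correct and follows essentially the same route as the paper's: establish the key identity $\norm{\vec{c}_d}_{2,s^{-1}} = (|\det C|^{d-2}/\Delta(\partial C)^{d-1})\norm{\vec{c}}_{2,s^{-1}}$ via Lemma~\ref{lem:gp-kernel} and Lemma~\ref{lem:resFormula}, then invoke Lemma~\ref{lem:ct-upper-bnd} and Corollary~\ref{lem:ct-lower-bnd-td-k1} for the middle inequalities, and finally prove $\Delta(\vec{c})^{d-1}N^{d-2}\mid\Delta(\partial C)$ for the last one. The only cosmetic difference is in this last step: the paper subtracts $r$ times row $i+1$ from row $i$ for $i=1,\ldots,d-2$, making the first $d-2$ rows of $\partial C$ vanish modulo $N$, whereas you observe that $\partial C$ has rank at most one modulo $N$ and clear columns instead; both arguments exploit the same invariance of $\Delta(\partial C)$ under unimodular operations.
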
 
\begin{proof} Let $d\geq 2$ and $\vec{c}=[c_{2d-2},\ldots,c_{0}]$ be a geometric progression modulo $N$ such that $C=C(c_{2d-2},\ldots,c_{0})$ is nonsingular and $\gcd(\Delta(\vec{c}),N)=1$. Suppose that $f_{1},f_{2}\in\Z[x]$ such that $2\leq\deg f_{2}\leq\deg f_{1}$ and $\left\{(f_{1})^{T}_{d},(f_{2})^{T}_{d}\right\}$ is a basis for the integer kernel of $\partial C$. Let $\vec{c}_{d}=\vec{c}_{d}(f_{1},f_{2})$. Then Lemma~\ref{lem:resFormula} implies that $\res(f_{1},f_{2})$ and $\Delta(\mathrm{S}_{d}(f_{1},f_{2}))$ are nonzero. Thus, $f_{1}$ and $f_{2}$ are coprime and Lemma~\ref{lem:gp-kernel} implies that $\vec{c}_{d}=\pm(\Delta(\mathrm{S}_{d}(f_{1},f_{2}))/\Delta(\vec{c}))\cdot\vec{c}$. Therefore, Lemma~\ref{lem:resFormula} implies that
\begin{equation}\label{eqn:cd-formula}
	\norm{\vec{c}_{d}}_{2,s^{-1}}=\frac{\Delta(\mathrm{S}_{d}(f_{1},f_{2}))}{\Delta(\vec{c})}\norm{\vec{c}}_{2,s^{-1}}=\frac{\left|\det C\right|^{d-2}}{\Delta(\partial C)^{d-1}}\norm{\vec{c}}_{2,s^{-1}}\quad\text{for all $s>0$}.
\end{equation}
As $\Delta(\mathrm{S}_{d}(f_{1},f_{2}))$ is a nonzero integer, the inequality $\norm{\vec{c}_{d}}_{2,s^{-1}}\geq\norm{\vec{c}/\Delta(\vec{c})}_{2,s^{-1}}$ holds for all $s>0$. Combining this inequality with \eqref{eqn:cd-formula} and the upper bound on $\norm{\vec{c}_{d}}_{2,s^{-1}}$ provided by Lemma~\ref{lem:ct-upper-bnd} yields the first set of inequalities stated in the corollary for all $s>0$.

Corollary~\ref{lem:ct-lower-bnd-td-k1}, Lemma~\ref{lem:resFormula} and \eqref{eqn:cd-formula} imply that
\begin{equation}\label{eqn:partial-bnd}
	s^{\frac{\deg f_{2}-d}{2}}\left|\sin\theta_{s}\right|\norm{f_{1}}_{2,s}\norm{f_{2}}_{2,s}%
	\leq\left|\frac{\Delta(\partial C)^{d}}{(\det C)^{d-1}}\right|^{d-2}\norm{\vec{c}_{d}}^{d-1}_{2,s^{-1}}%
	=\frac{\norm{\vec{c}}^{d-1}_{2,s^{-1}}}{\Delta(\partial C)},
\end{equation}
where $\theta_{s}=\theta_{s}(f_{1},f_{2})$, for all $s>0$. Let $r$ be the ratio of $\vec{c}$ modulo $N$. Then subtracting $r$ times row $i+1$ of $\partial C$ from row $i$ for $i=1,\ldots,d-2$ produces a matrix whose first $d-2$ rows contain multiples of $N$. As $\Delta(\vec{c})$ divides each entry of $\partial C$ and $\gcd(\Delta(\vec{c}),N)=1$, it follows that $\Delta(\vec{c})^{d-1}N^{d-2}$ divides $\Delta(\partial C)$. Thus, $\Delta(\vec{c})^{d-1}N^{d-2}\leq\Delta(\partial C)$ since $\Delta(\partial C)\neq 0$. Combining this inequality with \eqref{eqn:partial-bnd} completes the proof of the second set of inequalities stated in the corollary.
\end{proof}

In the proof of Lemma~\ref{lem:polySizeBounds}, the assumption that $\vec{c}$ is a geometric progression modulo $N$ such that $\gcd(\Delta(\vec{c}),N)=1$ is only used to prove the last inequality of the lemma. Setting $\vec{c}=(0,\ldots,0,1,0,\ldots,0)$ where the $1$ appears in the $d$th coordinate shows that the remaining inequalities cannot be improved by a constant factor without using the assumption. If $\deg f_{2}=d$, then \eqref{eqn:cd-formula} and Corollary~\ref{lem:ct-lower-bnd-k0} imply that
\begin{equation*}
	\frac{\left|\det C\right|^{d-2}}{\Delta(\partial C)^{d-1}}\norm{\vec{c}}_{2,s^{-1}}%
	\geq\left|\res(f_{1},f_{2})\right|^{1-1/d}%
	\quad\text{for all $s>0$.}
\end{equation*}
If the inequality is strict, then Lemma~\ref{lem:polySizeBounds} improves upon the lower bound on $\left|\sin\theta_{s}(f_{1},f_{2})\right|\norm{f_{1}}_{2,s}\norm{f_{2}}_{2,s}$ provided by Lemma~\ref{lem:resultantBound}.

\section*{Acknowledgements}

Part of this work was performed while the author was employed by The University of Queensland, Brisbane, Australia. The author is grateful to Cyril Bouvier for many helpful comments and discussions.

% Bibliography
\bibliographystyle{amsplain}
%\bibliography{montgomery}

\providecommand{\bysame}{\leavevmode\hbox to3em{\hrulefill}\thinspace}
\providecommand{\MR}{\relax\ifhmode\unskip\space\fi MR }
% \MRhref is called by the amsart/book/proc definition of \MR.
\providecommand{\MRhref}[2]{%
  \href{http://www.ams.org/mathscinet-getitem?mr=#1}{#2}
}
\providecommand{\href}[2]{#2}

\end{document}